\documentclass[smallcondensed]{svjour3}
\smartqed

\usepackage{amsmath,amssymb}
\usepackage{mathtools}
\usepackage{enumitem}
\usepackage{microtype}
\usepackage{upquote}
\usepackage[hidelinks]{hyperref}

\journalname{computational complexity}

\usepackage{etoolbox}

\makeatletter
\def\makeheadbox{}
\patchcmd{\@maketitle}
  {\noindent{\small\@date\if@twocolumn\vskip 7.2mm\else\vskip 5.2mm\fi}}
  {}{}{}
\makeatother

\DeclareMathOperator{\dc}{dc}
\DeclareMathOperator{\sdc}{sdc}
\DeclareMathOperator{\rank}{rank}
\DeclareMathOperator{\adj}{adj}
\DeclareMathOperator{\tr}{tr}
\DeclareMathOperator{\Det}{Det}
\DeclareMathOperator{\Sing}{Sing}
\DeclareMathOperator{\codim}{codim}
\DeclareMathOperator{\Con}{Con}
\DeclareMathOperator{\supp}{supp}
\DeclareMathOperator{\Mat}{Mat}
\DeclareMathOperator{\im}{im}

\newcommand{\PP}{\mathbb{P}}
\newcommand{\CC}{\mathbb{C}}
\newcommand{\AAff}{\mathbb{A}}
\newcommand{\OO}{\mathcal{O}}
\newcommand{\Gfam}{\mathcal{G}}
\newcommand{\pdeg}{\delta}
\newcommand{\dcb}{\overline{\dc}}
\newcommand{\sdcb}{\overline{\sdc}}
\newcommand{\Gone}{\Gamma_{\!1}}
\newcommand{\veps}{\varepsilon}

\begin{document}

\title{A near-quadratic lower bound on the border determinantal
complexity of $\sum_i x_i^n$ via conormal specialization%
\thanks{LinkedIn:
\url{https://www.linkedin.com/in/karthik-sheshadri-0624ab150/}.}}

\titlerunning{Border determinantal complexity of $\sum_i x_i^n$}

\author{Karthik Sheshadri}
\authorrunning{K.~Sheshadri}

\institute{Karthik Sheshadri \at
Independent AI researcher and engineer, San Jose, California, USA\\
\email{karthiksheshadri217@gmail.com}}

\date{}

\maketitle

\begin{abstract}
The border determinantal complexity $\dcb(f)$ of a polynomial $f$ is
the least $m$ such that $f$ is a limit of determinants of $m\times m$
matrices of affine-linear forms.  We prove that for every $n\ge3$,
over $\CC$,
\[
  \dcb\Big(\sum_{i=1}^n x_i^n\Big)\ \ge\ \frac{(n-1)^2}{4e},
  \qquad
  \sdcb\Big(\sum_{i=1}^n x_i^n\Big)\ \ge\ \frac{(n-1)^2}{2e}
\]
in the ordinary and symmetric models respectively; both match the
known $O(n^2)$ upper bounds up to the constant.  To our knowledge
these are the first border determinantal lower bounds for an explicit
family that are superlinear in the number of variables: the known
quadratic border bound for the permanent reads the \emph{dimension} of
the dual variety and is linear in its number of variables, whereas we
transfer the dual \emph{degree}.  The proof has two ingredients.  The
first is an unconditional bound on the slot-$(n-2)$ conormal
multidegree of the multiplicity-one Gauss-graph cycle of an arbitrary
affine-linear determinant --- singular, reducible, and non-reduced
fibers allowed --- by a multihomogeneous B\'ezout count of a lifted
kernel incidence.  The second is a specialization argument: along any
degeneration $\det A_c\to\sum_ix_i^n$, the flat limit of these
Gauss-graph cycles contains the conormal variety of the Fermat cone
with positive coefficient.  A cone-shift identity converts that
conormal multidegree into the classical dual degree $n(n-1)^{n-2}$ of
the smooth Fermat hypersurface, and an $(n-1)$-st root yields the
quadratic bound.  The exact lower bounds of the author's companion
manuscripts follow as corollaries.
\keywords{border determinantal complexity \and conormal variety \and
polar degree \and Gauss map \and Lagrangian specialization \and
multihomogeneous B\'ezout \and geometric complexity theory}
\subclass{68Q17 \and 14N05 \and 14Q20}
\end{abstract}

\medskip
\noindent\textbf{Disclosure of AI assistance.}
The author used large language models extensively in the production of
this work --- for exploratory generation, adversarial criticism, proof
drafting and rewriting, and consistency checking --- in a structured
multi-model protocol.  The final statements, proofs, and submission
decisions are the author's responsibility, and the proofs in the body
are intended to be checked on their own merits.  To support
transparency and reproducibility, Appendix~\ref{app:methodology}
describes the workflow, including the errors the protocol caught;
Appendix~\ref{app:prompts} records the load-bearing prompts; and
Appendix~\ref{app:verify} records the symbolic and exact-integer
consistency checks.

\section{Introduction}\label{sec:intro}

\subsection{Border determinantal complexity}

Let $\CC[x]_{\le m}$ denote the space of polynomials of degree at most
$m$ in $x_1,\dots,x_n$, a finite-dimensional affine space, and let
\[
  D_m\ :=\ \big\{\Det(A_0+\textstyle\sum_{i=1}^n x_iA_i)\ :\
  A_0,\dots,A_n\in\CC^{m\times m}\big\}\ \subset\ \CC[x]_{\le m}
\]
be the set of polynomials with an exact size-$m$ affine determinantal
representation, so that Valiant's determinantal complexity
\cite{Valiant79} is $\dc(f)=\min\{m: f\in D_m\}$.  The set $D_m$ is
the image of an affine space under a polynomial map, hence irreducible
and constructible.  The \emph{border determinantal complexity} of $f$
is
\[
  \dcb(f)\ :=\ \min\{m\ :\ f\in\overline{D_m}\},
\]
the closure taken in the Zariski topology of $\CC[x]_{\le m}$.  By
Lemma~\ref{lem:closure} below, the Euclidean closure of $D_m$ is the
same set, so the analytic reading --- $f=\lim_{\veps\to0}\Det
A(\veps)$ for matrices $A(\veps)$ of affine-linear forms with
coefficients depending on $\veps$ --- defines the same quantity.
Replacing $\CC^{m\times m}$ by the symmetric matrices defines
$D_m^{\mathrm{sym}}$ and the \emph{border symmetric determinantal
complexity} $\sdcb(f)$.  Trivially $\dcb\le\dc$ and
$\sdcb\le\sdc$, so border lower bounds are formally stronger than
exact ones.  The padded orbit-closure definition used in geometric
complexity theory \cite{LMR13} agrees with the closure of $D_m$; the
equivalence is standard and is never used below --- everything is
proved from the $\overline{D_m}$ definition directly.

Border complexity is the natural setting of the geometric complexity
theory (GCT) program: Valiant's conjecture in its orbit-closure form
asks whether the padded permanent lies in the closure of the
determinant orbit, and only closed conditions --- equivalently,
invariants semicontinuous in the right direction --- can separate a
point from a closure.  Border lower bounds are also strictly harder to
come by than exact ones: approximation is known to collapse exact
lower bounds in some algebraic models entirely
\cite{BIZ18}, so an exact bound carries no automatic border content.

\subsection{State of the art}\label{subsec:soa}

For \emph{exact} determinantal complexity of the power sum
$F_n=\sum_{i=1}^n x_i^n$, the published record prior to the companion
preprints was $\dc(F_n)\ge 1.5n-3$ (Kumar and Volk
\cite{KumarVolk21}), improving the $\codim\Sing$ bound
$\dc(F_n)\ge n+1$ of Alper, Bogart, and Velasco
\cite{AlperBogartVelasco17}; the upper bound is $O(n^2)$ via the
standard algebraic branching program, and Kumar and Volk note the
believed truth is $\Theta(n^2)$.  The companion preprints claim
$\dc(F_n)\ge(\tfrac1{4e}-o(1))n^2$ \cite{SheshadriDC} and
$\sdc(F_n)\ge(\tfrac1{2e}-o(1))n^2$ \cite{SheshadriSDC} by reading the
\emph{top polar degree} of the projectivized tangent cone --- an
intersection-theoretic degree, where the earlier techniques read a
dimension.

For \emph{border} determinantal complexity the landscape is sparser.
Mignon and Ressayre's quadratic bound for the permanent
\cite{MignonRessayre04} extends to the border: Landsberg, Manivel, and
Ressayre \cite{LMR13} proved
$\dcb(\mathrm{perm}_m)\ge m^2/2$ by showing that the hypersurfaces of
small determinantal complexity have dual varieties of bounded
\emph{dimension}, a closed condition; Grochow \cite{Grochow15} showed
more generally that essentially all known determinantal lower bounds
arise from such closed (GCT-compatible) conditions.  Measured in the
number of variables $N=m^2$ of the permanent, these border bounds are
\emph{linear}.  For explicit $n$-variate families we are not aware of
any prior border lower bound superlinear in $n$.  Both companions
explicitly declined to make a border claim and recorded why: polar
degree is not a closed condition in degenerating families, and the
isolated incidence solutions that carry the exact count ``can
degenerate into excess components''
\cite[Sec.~``Scope'']{SheshadriSDC}, \cite[Sec.~1.4(3)]{SheshadriDC}.

\subsection{Main results}\label{subsec:results}

\begin{theorem}[Border bound, ordinary determinant]\label{thm:border}
For every $n\ge3$, over $\CC$,
\[
  \dcb\Big(\sum_{i=1}^n x_i^n\Big)\ \ge\ \frac{(n-1)^2}{4e}
  \ =\ \Big(\frac1{4e}-o(1)\Big)n^2 .
\]
\end{theorem}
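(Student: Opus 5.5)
The plan follows the two-ingredient outline of the abstract and closes with the numerical root extraction. Suppose $F_n=\sum_i x_i^n\in\overline{D_m}$. By Lemma~\ref{lem:closure} we may take a one-parameter family $c\mapsto A_c$ of $m\times m$ affine-linear matrices, parametrized by a smooth pointed curve $(C,0)$, with $\Det A_c\to F_n$ as $c\to0$. We may assume $\Det A_c$ is nonconstant for generic $c$. For each generic $c$ we form the multiplicity-one Gauss-graph cycle $\Gone(A_c)\subset \AAff^n\times\PP^{n-1}$. This is the closure of the graph of $x\mapsto[\nabla\Det A_c(x)]$ over the hypersurface $\{\Det A_c=0\}$, with each component taken with coefficient one and any non-reduced structure discarded. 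It is an $(n-1)$-dimensional cycle in the projectivized cotangent space. The proof then has three steps.

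\textbf{Step 1 (upper bound for each member).} We bound the slot-$(n-2)$ multidegree $\pdeg_{n-2}(\Gone(A_c))$. This is the coefficient pairing $\Gone$ against a generic point-class condition on the $\PP^{n-1}$ factor, i.e.\ counting points whose normal direction is prescribed. To do so we lift $\Gone$ to a kernel incidence $\{(x,u,v,[\xi]) : A_c(x)u=0,\ v^{T}A_c(x)=0,\ \xi_i=v^{T}A_iu\}$. Here $u,v\in\PP^{m-1}$ are kernel vectors, and $\nabla\Det A_c$ is proportional to $(v^{T}A_iu)_i$ by Jacobi's formula, since $\adj A=uv^{T}$ at corank one. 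Every component of $\Gone$ is dominated by a component of this incidence. A multihomogeneous Bézout count, with multidegrees $(1,1,0)$ for the $u$-equations and $(1,0,1)$ for the $v$-equations, bounds the number of isolated solutions. Imposing the fixed normal direction and intersecting with a generic line in the $x$-slot then gives a bound of the shape $\pdeg_{n-2}(\Gone(A_c))\le \binom{2m-?}{n-1}$, or more crudely $\le (2em/(n-1))^{n-1}$ after $\binom{a}{k}\le(ea/k)^k$. Because Bézout bounds isolated points uniformly even when excess components exist, the bound survives singular, reducible and non-reduced fibers, so it holds for every $c$ uniformly.

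\textbf{Step 2 (specialization).} We take the flat limit $\lim_{c\to0}\Gone(A_c)$ in the family of cycles over $C$, obtained by closing the total space and restricting to the special fiber. Conservation of number, i.e.\ constancy of multidegrees in flat families, shows that this limit cycle has the same slot-$(n-2)$ multidegree, bounded as in Step 1. We then show that it contains the conormal variety $\Con(V(F_n))$ of the Fermat cone with positive coefficient. The argument: over a smooth point $p$ of $V(F_n)$ away from the vertex, the hypersurfaces $\{\Det A_c=0\}$ converge locally to $V(F_n)$. Reducing $\Det A_c$ to its reduced part changes neither its zero set nor, generically, its gradient direction. Hence near $p$ the Gauss graphs converge to the graph of the Fermat Gauss map, by a Hurwitz/implicit-function argument on a transversal slice. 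The limit cycle is effective, since all $\Gone(A_c)$ are effective, so the remaining components only increase multidegrees. Therefore $\pdeg_{n-2}(\Con(V(F_n)))\le\pdeg_{n-2}(\text{limit})$.

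\textbf{Step 3 (cone shift and root).} The affine cone $V(F_n)\subset\AAff^n$ is the cone over the smooth Fermat hypersurface $X\subset\PP^{n-1}$. A conormal vector $\xi$ at a cone point is constant along the ruling, so the relevant slot of the conormal variety of the cone equals the top slot of the conormal of $X$, namely $\deg X^\vee=n(n-1)^{n-2}$. Combining with Step 1 gives $(2em/(n-1))^{n-1}\ge n(n-1)^{n-2}\ge(n-1)^{n-1}$. Taking $(n-1)$-st roots yields $m\ge (n-1)^2/(2e)$ up to the constant, and the paper's sharper bookkeeping of the Bézout count (a factor $4e$ arising from the two kernel slots of size $m$) gives exactly $(n-1)^2/(4e)$.

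The main obstacle is Step 2. One must ensure that the limit cycle genuinely contains $\Con(V(F_n))$ with positive multiplicity, rather than having the convergence absorbed into excess components over the singular locus, especially when $\Det A_c$ is non-reduced or has components collapsing onto $V(F_n)$. I would handle this by working over a general point of $V(F_n)$. I would first choose a local analytic branch of the reduced zero set of $\Det A_c$ converging to $V(F_n)$, which exists by Hurwitz applied along a generic line through $p$. I would then verify that its normals converge using the Weierstrass preparation of $\Det A_c$ in that line coordinate.
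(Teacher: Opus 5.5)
Your architecture is the paper's: a kernel-lift B\'ezout bound for $\Gone$ of each member, then flat limit, effectivity, Hurwitz capture, cone shift and root extraction. But Step~1 does not go through as written.

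\emph{Wrong slot.} Step~1 bounds the wrong multidegree. Prescribing the normal direction is the point class on the covector factor, and for a cone that slot is blind. The Gauss image of the Fermat cone is the $(n-2)$-dimensional dual of the smooth Fermat hypersurface, so a generic direction is never attained and the Fermat side of your chain is $0$. The informative slot is $h_1h_2^{\,n-2}$: one hyperplane in $x$ and $n-2$ linear conditions on $\xi$. That is the slot your Step~3 implicitly uses.

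\emph{The system is not square.} The equations $u^{\mathsf T}\widehat A(x)=0$ and $\widehat A(x)v=0$ are $2m$ equations, but over the corank-one locus they cut only codimension $2m-1$. So even with the correct $n-1$ slicing conditions you have $2m+n-1=\dim(\PP^n\times\PP^{m-1}\times\PP^{m-1})+1$ equations, and the localized top-Chern-class bound does not apply. The paper does two things here:
\begin{itemize}
\item It replaces the right block by $m-1$ generic combinations $\Lambda\widehat Av=0$. Deleting a row instead is unsound, because the local solution set can fatten through the lift.
\item Via a Schur-complement normal form and the identity $\partial_iF=(\det B)\,u^{\mathsf T}\widehat A_iv$ along the kernel graph, it proves that each lifted count point is a reduced \emph{isolated} solution. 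This is how the flag's transversality on the hypersurface transfers to the lifted system.
\end{itemize}
B\'ezout controls only isolated solutions, so without this step your count says nothing about the lifts. This is the substance of Theorem~\ref{thm:detlemma}, and your sketch omits it.

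\emph{Constants.} Your constants are backwards. With the $n-2$ polar equations in class $U+V$, the count is $B(m,n)=\sum_i\binom{m}{i}\binom{m-1}{n-1-i}\binom{n-2}{i-1}$. The weights $\binom{n-2}{i-1}$ make it exceed $\binom{2m-1}{n-1}$ by a factor of order $2^{n}/\sqrt n$ when $m\gg n$. So $(2em/(n-1))^{n-1}$ is not what the B\'ezout count yields. The honest estimate $B(m,n)\le2^{n-2}\binom{2m-1}{n-1}\le(4em/(n-1))^{n-1}$ is exactly where $4e$ comes from.

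Your intermediate bound $m\ge(n-1)^2/(2e)$ would be a \emph{stronger} theorem than the one stated. It is the paper's symmetric-model constant, where a single kernel factor and polar equations of class $2U$ give $2^{n-2}\binom{m}{n-1}$. Calling $4e$ the result of sharper bookkeeping inverts the comparison. The two kernel factors only give $\binom{2m-1}{n-1}$, i.e.\ $2e$; the extra factor $2$ inside the root comes from the $2^{n-2}$ contributed by the polar equations.

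\emph{Properness in Step~2.} The closure in $\AAff^n\times\PP^{n-1}\times C$ is not proper over $C$, so conservation of number fails as stated. Intersection points can escape to the hyperplane at infinity, where the limit form $x_0^{m-n}\widetilde F_n$ concentrates its padding. The fix:
\begin{itemize}
\item Homogenize the polynomial family at degree $m$, so its generic members are exactly the $\det\widehat A_c$ your count on $\PP^n\times\PP^{m-1}\times\PP^{m-1}$ is about.
\item Specialize in $\PP^n\times(\PP^n)^\vee$. There the components at infinity are effective and can only add, so the inequality you need survives.
\end{itemize}

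\emph{Capture.} In the capture, the relevant fact is not about the reduced equation. It is that the Hurwitz zero on the transversal line is \emph{simple}, so $\nabla\det A_c\neq0$ there and the point lies on a multiplicity-one component. Step~1 bounds only $\Gone$. For the reduced conormal of a non-squarefree determinant, the bound is open.
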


\begin{theorem}[Border bound, symmetric determinant]\label{thm:symborder}
For every $n\ge3$, over $\CC$,
\[
  \sdcb\Big(\sum_{i=1}^n x_i^n\Big)\ \ge\ \frac{(n-1)^2}{2e}
  \ =\ \Big(\frac1{2e}-o(1)\Big)n^2 .
\]
\end{theorem}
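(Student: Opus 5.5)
The plan follows the two-ingredient outline of the abstract, specialized to the symmetric model. Fix $m$ with $F_n=\sum_i x_i^n\in\overline{D_m^{\mathrm{sym}}}$. By the symmetric analogue of Lemma~\ref{lem:closure}, there is a one-parameter family of symmetric affine-linear matrices $A_c$, $c$ in a punctured curve neighbourhood, with $\Det A_c\to F_n$ as $c\to0$. For each generic $c$, consider the hypersurface $\{\Det A_c=0\}$, or more precisely the cone over it in $\AAff^{n+1}$ after homogenizing by $x_0$. Attach to it its multiplicity-one Gauss-graph cycle $\Gone_c\subset \PP^n\times\PP^{n\vee}$, the closure of the graph of $x\mapsto\nabla\Det A_c(x)$ on the smooth locus, counted with the reduced structure of each fibre component.

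\textbf{Upper bound for each fibre.} I will bound the slot-$(n-2)$ multidegree of $\Gone_c$, meaning its intersection number against $h^{n-2}\cdot k^{\,\cdot}$ in the complementary slot. On the smooth locus of $\Det A=0$ (corank one), the gradient is
\[
\partial_i\Det A=\tr(\adj(A)A_i)=\pm\, v^{T}A_iu ,
\]
where $u,v$ span the kernel and cokernel. In the symmetric case $u=v$, so the gradient is $u^{T}A_iu$. I lift $\Gone$ to the incidence $\{(x,u,y): A(x)u=0,\ y_i=u^{T}A_iu\}$ in $\PP^n\times\PP^{m-1}\times\PP^{n\vee}$ and apply multihomogeneous B\'ezout. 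The kernel equations are bilinear of bidegree $(1,1)$, and the $y$-equations have bidegree $(0,2,1)$. The target is a bound of the form
\[
\binom{m+n-2}{n-1}\cdot 2^{\,n-1}
\]
in the symmetric case, and $\binom{\cdot}{\cdot}$ with more kernel slots but no factor $2$ in the ordinary case. The required estimate is $\deg\le (C m/(n-1))^{n-1}$ up to lower-order terms, with $C=e$ coming from Stirling applied to $\binom{m+n}{n-1}$. The key point is that every component of $\Gone_c$, including those over singular, reducible or non-reduced fibres, is dominated by a component of the lifted incidence. Multidegrees of effective cycles are nonnegative, so B\'ezout bounds the full cycle even when that incidence has excess components.

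\textbf{Specialization.} Take the flat limit $\lim_{c\to0}\Gone_c$ in the relative Chow variety. The multidegree is constant in flat families, so the limit cycle has the same slot-$(n-2)$ degree. I then show that this limit contains the conormal variety of the Fermat cone $\{F_n=0\}\subset\PP^n$ with positive coefficient. Over a dense open set of the smooth locus of the limit hypersurface, the limit of gradients is the gradient of $F_n$, up to scalar, and Gauss graphs specialize to the Gauss graph there. The remaining cycle is effective and contributes nonnegatively.

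The cone-shift identity converts the relevant multidegree of the conormal variety of the cone in $x_0$ into the top polar degree, which is the dual degree $n(n-1)^{n-2}$ of the smooth Fermat hypersurface in $\PP^{n-1}$. This gives
\[
n(n-1)^{n-2}\le \binom{m+n}{n-1}2^{n-1}.
\]
Taking $(n-1)$-st roots and using $\binom{a}{k}\le (ea/k)^k$ yields $m\ge (n-1)^2/(2e)$, after absorbing lower-order terms into the explicit constant. The ordinary case loses a factor $2$ because $u$ and $v$ are independent, which gives $(n-1)^2/(4e)$.

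\textbf{Main obstacle.} I expect the hardest step to be the specialization claim: that the conormal variety of the Fermat cone genuinely appears in the flat limit, rather than the limit cycle collapsing into excess components over the singular locus. The approach I would try first is to work over a general point $p$ of $\{F_n=0\}$ where $\nabla F_n(p)\ne0$. Near $p$, $\Det A_c\to F_n$ uniformly with derivatives, so for small $c$ the hypersurface is smooth near $p$ with nearby gradient direction. The graph therefore converges set-theoretically over a neighbourhood of $p$, which forces a component of the limit cycle dominating the Fermat conormal variety with coefficient at least one.
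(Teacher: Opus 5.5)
Your architecture matches the paper's: a per-fibre kernel-lift B\'ezout bound on $\pdeg_{n-2}(\Gone)$, a flat limit with conserved multidegree, capture of $\Con(X)$ by persistence of smooth affine zeros of $F_n$, the cone shift, and an $(n-1)$-st root. Your implicit-function sketch is essentially Lemma~\ref{lem:hurwitz}. The gap is the per-fibre bound, which is the substance of Theorem~\ref{thm:detlemma}(ii); it does not go through as you set it up.

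\textbf{The per-fibre bound.} The incidence $\{A(x)u=0,\ y\propto(u^{\mathsf T}A_iu)_i\}$ in $\PP^n\times\PP^{m-1}\times(\PP^n)^\vee$ is cut out by a rank condition, not by a square system of sections of line bundles. So no B\'ezout number is attached to it.

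More seriously, ``multidegrees of effective cycles are nonnegative'' is the wrong positivity for excess components. A component of dimension $>n-1$ is not an $(n-1)$-cycle and has no $\pdeg_{n-2}$. What it contributes to a B\'ezout count is a localized (excess-intersection) Chern class. Its nonnegativity is a theorem about \emph{globally generated} bundles (Fulton, \S12.2; this is Lemma~\ref{lem:positivity}), not a consequence of effectivity.

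Such components really occur in your incidence. Take the symmetric matrix $\begin{pmatrix}0&x_1\\ x_1&x_2\end{pmatrix}\oplus x_0I_{m-2}$, with determinant $-x_0^{m-2}x_1^2$. The component $\{x_1=0\}$ has generic corank one with kernel $e_1$, and every $u^{\mathsf T}\widehat A_iu=(\widehat A_i)_{11}$ vanishes. So $y$ is unconstrained there, giving a component of dimension $2n-1$. You also never show that the points you want to count are \emph{isolated} in any system, which is what allows a B\'ezout number to bound them at all.

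\textbf{What the paper does instead.} It supplies exactly these missing pieces:
\begin{itemize}
\item It first cuts $\Gone(F)$ with a generic flag to get finitely many count points.
\item It eliminates the dual coordinates by pulling the flag back to the quadrics $\ell_t(u^{\mathsf T}\widehat A_0u,\dots,u^{\mathsf T}\widehat A_nu)$ of class $2U$. This gives a square system on $\PP^n\times\PP^{m-1}$: $m$ equations of class $H+U$, $n-2$ of class $2U$, one of class $H$.
\item It shows, by the Schur-complement normal form, that each count point lifts to a reduced isolated solution. The last kernel equation becomes the local equation $g$ of the component. The identity $\adj\widehat A=(\det B)\,\nu\nu^{\mathsf T}$ turns the quadrics into unit multiples of the flag functions, whose differentials are independent by Kleiman transversality of the generic flag.
\item It bounds the isolated solutions by $[H^nU^{m-1}]\,H(H+U)^m(2U)^{n-2}=2^{n-2}\binom{m}{n-1}$.
\end{itemize}

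\textbf{The arithmetic.} That exact count is also what the explicit inequality needs. Your targets $2^{n-1}\binom{m+n-2}{n-1}$ and $2^{n-1}\binom{m+n}{n-1}$ disagree with each other and are not derived. With the estimate $\binom{a}{k}\le(ea/k)^k$ you propose, they yield only $m\ge(n-1)^2/(2e)-n+2$, respectively $m\ge(n-1)^2/(2e)-n$. An additive loss of order $n$ cannot be ``absorbed'' into a bound asserted exactly for every $n\ge3$; it leaves only the asymptotic form. With $2^{n-2}\binom{m}{n-1}\le(2em/(n-1))^{n-1}$ and $n(n-1)^{n-2}\ge(n-1)^{n-1}$, the clean bound follows directly.

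Finally, state the slot explicitly as pairing with $h_1h_2^{\,n-2}$ (one primal hyperplane). That is where the cone shift produces $\deg Z^\vee$, and a primal power $h^{n-2}$ would be the wrong slot.
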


Combined with $\dcb\le\dc=O(n^2)$ \cite{KumarVolk21} and the explicit
symmetric representation of size $2n^2+2n+1$ of
\cite[Prop.~5.1]{SheshadriSDC}, both border complexities of $F_n$ are
$\Theta(n^2)$.  The $o(1)$ terms above are $O(1/n)$ with no
logarithmic loss; see Remark~\ref{rem:nolog}.

The determinantal half of the proof is a statement of independent
interest.  For a nonzero form $F$ on $\PP^n$ let $\Gone(F)$ denote
the \emph{multiplicity-one Gauss-graph cycle}
(Definition~\ref{def:gone}): the sum, with coefficient one, of the
conormal varieties of those components of $V(F)$ along which $F$
vanishes to order exactly one.  Multidegrees $\pdeg_k$ are recalled in
Section~\ref{sec:prelim}.

\begin{theorem}[Determinantal conormal bound]\label{thm:detlemma}
Let $n\ge2$, $m\ge1$, let $A(x)=A_0+\sum_{i=1}^nx_iA_i$ be an
$m\times m$ matrix of affine-linear forms over $\CC$, let
$\widehat A(x_0,x):=x_0A(x/x_0)=x_0A_0+\sum_ix_iA_i$ be its
homogenization, and suppose $F:=\det\widehat A\not\equiv0$.  Then:
\begin{enumerate}[label=(\roman*),leftmargin=2.4em]
\item
$\displaystyle
  \pdeg_{n-2}\big(\Gone(F)\big)\ \le\ B(m,n)
  :=\big[x^nu^{m-1}v^{m-1}\big]\,x(x+u)^m(x+v)^{m-1}(u+v)^{n-2}
  =\sum_{i=1}^{n-1}\binom{m}{i}\binom{m-1}{n-1-i}\binom{n-2}{i-1}.
$
\item If all $A_i$ are symmetric, then
$\displaystyle
  \pdeg_{n-2}\big(\Gone(F)\big)\ \le\
  \big[x^nu^{m-1}\big]\,x(x+u)^m(2u)^{n-2}
  =2^{n-2}\binom{m}{n-1}.
$
\end{enumerate}
No hypothesis beyond $F\not\equiv0$ is made: $V(F)$ may be singular,
reducible, or non-reduced, and no genericity of the $A_i$ is assumed.
\end{theorem}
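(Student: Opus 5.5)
The plan is to realize $\Gone(F)$, or at least a cycle that dominates it, as the image of a lifted kernel incidence, and then bound $\pdeg_{n-2}$ by a multihomogeneous B\'ezout count on that incidence.

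\emph{Step 1: the incidence.} Work in $\PP^n_x\times\PP^{m-1}_u\times\PP^{m-1}_v$, write $\widehat A=\widehat A(x)$, and set
\[
  \widetilde I:=\overline{\{(x,[u],[v]) : \widehat A(x)u=0,\ v^{T}\widehat A(x)=0\}}.
\]
The $m$ equations $\widehat A u=0$ have tridegree $(1,1,0)$ and the $m$ equations $v^T\widehat A=0$ have tridegree $(1,0,1)$. Over a generic point of a component $Z$ of $V(F)$ on which $F$ vanishes to order exactly one, the generic corank of $\widehat A$ should be one. If it were larger, every $(m-1)$-minor would vanish on $Z$ and $F$ would vanish to order $\ge2$ there; I will check this using the Taylor expansion of $\det$ via $\adj$. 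So over such a $Z$ the kernels $u$, $v$ are unique.

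By Jacobi's formula, $\partial_i F=\tr(\adj(\widehat A)\,A_i)$, and at corank one $\adj\widehat A=c\,uv^{T}$ with $c\neq0$, since $\adj\ne0$ when order is one. Hence the Gauss map on $Z$ is
\[
  x\mapsto y=(v^{T}A_0u,\ v^{T}A_1u,\dots,v^{T}A_nu).
\]
This is bilinear in $(u,v)$ and independent of $x$.

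\emph{Step 2: the push-forward.} Let $\widetilde Z$ be the closure of the lifted locus over $Z$. The map $(x,u,v)\mapsto(x,y(u,v))$ is a rational map to $\PP^n\times\PP^{n\vee}$ whose image closure is the conormal variety $\mathrm{Con}(Z)$, birationally on $\widetilde Z$. Therefore $\sum_Z[\widetilde Z]$ pushes forward to $\Gone(F)$, and $\pdeg_{n-2}(\Gone(F))$ equals the degree of $\sum_Z[\widetilde Z]$ against the pullback class $h_x^{\,n-1-(n-2)}\cdot h_y^{\,n-2}=h_x\,h_y^{\,n-2}$, up to fixing the index convention. A hyperplane in $y$ pulls back to a divisor of class $h_u+h_v$, and base loci only decrease the count. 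So I need
\[
  \textstyle\sum_Z \deg\big([\widetilde Z]\cdot h_x(h_u+h_v)^{n-2}\big).
\]

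\emph{Step 3: B\'ezout.} The components $\widetilde Z$ are irreducible components of the scheme cut by the $2m$ equations above, possibly together with excess components such as higher-corank loci. Intersect that scheme with $h_x$ and $n-2$ generic divisors of class $h_u+h_v$. By the refined B\'ezout theorem (Fulton, Ex.~12.3.x) for products of projective spaces, the distinguished varieties each contribute nonnegatively. Their total is at most the coefficient of $x^nu^{m-1}v^{m-1}$ in
\[
  (x+u)^m(x+v)^m\cdot x\,(u+v)^{n-2}.
\]

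\emph{Step 4: the sharpening.} The stated bound has $(x+v)^{m-1}$ rather than $(x+v)^{m}$. I expect this is the main obstacle. The equations $v^T\widehat A=0$ and $\widehat A u=0$ share the relation $v^T\widehat A u$, which is counted twice, so one equation is redundant. I would handle this by replacing the $v$-equations with $m-1$ of them in a suitable generic combination, showing that together with $\widehat Au=0$ they still cut out the same components $\widetilde Z$ near corank-one points. This works because $v^T\widehat A u=0$ is automatic and the remaining $m-1$ generic combinations of $v^T\widehat A$ force $v^T\widehat A=0$ whenever $\widehat A$ has rank $m-1$. Each $\widetilde Z$ then remains an irreducible component, of the expected dimension $n-1$, of a scheme defined by $2m-1$ equations. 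Refined B\'ezout then gives $B(m,n)$, and the binomial sum follows by expanding the coefficient.

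\emph{Symmetric case (ii).} Here $u=v$, the incidence lives in $\PP^n\times\PP^{m-1}$, the Gauss map is the quadratic $y_i=u^TA_iu$ of class $2h_u$, and only the $m$ equations $\widehat Au=0$ are needed. This gives the coefficient of $x^nu^{m-1}$ in $x(x+u)^m(2u)^{n-2}$, which equals $2^{n-2}\binom{m}{n-1}$.

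Care is needed in one place: non-reduced $F$ are harmless because only multiplicity-one components enter $\Gone$, and each is counted with coefficient one, which is at most its multiplicity as a component.
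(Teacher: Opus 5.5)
Once two misstatements (below) are corrected, your argument is a valid alternative to the paper's. It shares the paper's two structural ingredients:
\begin{itemize}
\item the kernel-lifted incidence, on which the Gauss map becomes the bilinear map $y_i=v^{\mathsf T}\widehat A_iu$ read off the rank-one adjugate;
\item the replacement of one side of the kernel system by $m-1$ generic combinations while the other side is kept whole. This is the paper's $\Lambda$.
\end{itemize}
Where you differ is in how the inequality is extracted.

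\textbf{The paper's route.} It fixes a generic flag first and lifts the $\pdeg_{n-2}(\Gone(F))$ count points. It appends the flag equations to get a square system on $Y$. An explicit Schur-complement normal form then shows every lift is a reduced isolated zero, and isolated zeros are bounded by $\int_Y c_{\mathrm{top}}$.

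\textbf{Your route.} You work with cycles instead.
\begin{itemize}
\item Each $\widetilde Z$ is an irreducible component of expected dimension $n-1$ of the $(2m-1)$-equation scheme $S$. That no larger component swallows it uses that the unreduced side $\widehat Au=0$ confines $S$ to $V(F)$.
\item Hence $\widetilde Z$ enters the localized class $\mathbf Z(s)\in A_{n-1}(S)$ with multiplicity $\ge1$. Every other distinguished variety contributes a nonnegative cycle, since the bundle is globally generated (Fulton \S7.1, \S12.2).
\item Pairing with the nef class $h_x(h_u+h_v)^{n-2}$ gives $B(m,n)\ge\sum_Z\int_{\widetilde Z}h_x(h_u+h_v)^{n-2}$.
\item On the graph closure $p\colon\Gamma\to\widetilde Z$, $q\colon\Gamma\to(\PP^n)^\vee$ of $\widetilde Z\dashrightarrow\Con(Z)$ one has $p^*(h_u+h_v)=q^*h_y+E$ with $E$ effective. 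This makes ``base loci only decrease the count'' precise.
\end{itemize}

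\textbf{What each buys.} Your route needs only the set-theoretic fact that the reduced system has a single solution over a generic point of each multiplicity-one $Z$. It therefore dispenses with the local normal form and with reducedness and transversality at the lifts. The paper's route avoids the rational push-forward and graph-closure comparison and works throughout with explicit points.

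\textbf{First misstatement: Step~3.} Step~3 does not give a weaker bound with $(x+v)^m$. That product has total degree $2m+n-1$, one more than $\dim Y=2m+n-2$, so the bracket you extract is $0$ and bounds nothing. With all $2m$ equations, $\widetilde Z$ has dimension $n-1$ against an expected $n-2$. It is an excess component whose degree the top-degree class cannot see. Step~4 is therefore not a sharpening of a valid bound; it is the step that makes the argument run at all.

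\textbf{Second misstatement: Step~4.} For a fixed $\Lambda$, the $m-1$ combinations do \emph{not} force $v^{\mathsf T}\widehat A(x)=0$ at every rank-$(m-1)$ point. They fail exactly where $\ker\Lambda\subseteq\im\widehat A(x)^{\mathsf T}=u(x)^{\perp}$. This typically cuts a divisor of $Z$ and produces extra $(n-1)$-dimensional pieces of $S$, namely a line of $[v]$'s over each such $x$.

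This is harmless. Genericity of $\Lambda$ at a generic point of each of the finitely many $Z$ is all that is needed for $\widetilde Z$ to be a component, and the extra pieces contribute nonnegatively. But the justification should be stated that way. The paper avoids the issue by choosing $\Lambda$ after the count points, so that the reduction is exact at every lift.
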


Theorem~\ref{thm:detlemma} together with the cone-shift identity of
Section~\ref{sec:coneshift} recovers the main inequalities of both
companions for homogeneous targets (Remark~\ref{rem:recover}), with
the smoothness hypothesis on $V(\det\widehat A)$ removed.  Since
$\dc\ge\dcb$ and $\sdc\ge\sdcb$:

\begin{corollary}[Exact bounds]\label{cor:exact}
For every $n\ge3$,
\[
  \dc(F_n)\ \ge\ \frac{(n-1)^2}{4e},
  \qquad
  \sdc(F_n)\ \ge\ \frac{(n-1)^2}{2e}.
\]
\end{corollary}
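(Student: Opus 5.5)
The corollary follows directly from Theorems~\ref{thm:border} and~\ref{thm:symborder} once we check that $\dcb\le\dc$ and $\sdcb\le\sdc$. So the plan has two steps: verify these two inequalities, then substitute.

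For the first step, recall that $D_m\subseteq\overline{D_m}$. If $\dc(F_n)=m$, then $F_n\in D_m\subseteq\overline{D_m}$, and by the definition of $\dcb$ as a minimum this gives $\dcb(F_n)\le m$. The symmetric case is identical: replace $D_m$ by $D_m^{\mathrm{sym}}$ and take its closure. No further property of the closure is needed. In particular we do not need the nestedness $\overline{D_m}\subseteq\overline{D_{m+1}}$, because the minimum is attained at $m=\dc(F_n)$ itself or earlier.

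For the second step, fix $n\ge3$. The hypotheses of Theorems~\ref{thm:border} and~\ref{thm:symborder} hold over $\CC$, which is the setting of the corollary. Chaining the inequalities gives
\[
\dc(F_n)\ \ge\ \dcb(F_n)\ \ge\ \frac{(n-1)^2}{4e}
\qquad\text{and}\qquad
\sdc(F_n)\ \ge\ \sdcb(F_n)\ \ge\ \frac{(n-1)^2}{2e}.
\]

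There is no real obstacle here. The only point to check is that "border" is defined as membership in the closure of the same $D_m$ (respectively $D_m^{\mathrm{sym}}$) whose points define $\dc$ (respectively $\sdc$). That is exactly how the paper sets up the definitions in the introduction. All the substantive work lies in the two border theorems, which we take as given.
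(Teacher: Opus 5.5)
Your argument is correct and is the paper's primary proof. It uses $\dc\ge\dcb$ and $\sdc\ge\sdcb$, which hold because $D_m\subseteq\overline{D_m}$ (and likewise for $D_m^{\mathrm{sym}}$), and then substitutes Theorems~\ref{thm:border} and~\ref{thm:symborder}. The paper also notes a more direct alternative through Remark~\ref{rem:recover}: the exact bounds follow from Theorem~\ref{thm:detlemma} and Proposition~\ref{prop:coneshift} alone, with no specialization machinery.
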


\subsection{Relation to the companion manuscripts}\label{subsec:companions}

This paper supersedes the exact lower bounds of the companion
manuscripts \cite{SheshadriDC,SheshadriSDC} in the ordinary and
symmetric models respectively.  The earlier arguments proved
near-quadratic lower bounds for $\dc(F_n)$ and $\sdc(F_n)$ from a
fixed exact representation $F_n=\det M$, via a lifted polar-degree
incidence count valid for representations of \emph{smooth}
hypersurfaces.  The present paper proves the stronger border
statements of Theorems~\ref{thm:border}--\ref{thm:symborder}, from
which the exact bounds follow immediately
(Corollary~\ref{cor:exact}).  The overlap between the papers is
confined to the determinantal-side incidence/B\'ezout mechanism: that
mechanism is inherited from the exact polar-degree argument, but it is
reformulated and strengthened here (Theorem~\ref{thm:detlemma}) into
an unconditional bound for the multiplicity-one Gauss-graph cycle of
an \emph{arbitrary} affine-linear determinant --- singular, reducible,
and non-reduced fibers allowed --- with the local normal forms proved
in full, so that the present paper is logically self-contained and no
lemma is imported from the companions.  The new ingredient, and the
main contribution, is the conormal-specialization argument of
Sections~\ref{sec:normalform}--\ref{sec:capture}: in any degeneration
$\det A_c\to F_n$, the flat limit of the Gauss-graph cycles contains
the conormal variety of the Fermat cone with positive coefficient.
The companions remain the record of the exact-representation viewpoint
and of the discovery of the polar-degree invariant; their main
theorems are recovered here as corollaries
(Remark~\ref{rem:recover}, Corollary~\ref{cor:exact}).

\subsection{The semicontinuity question, and how it resolves}
\label{subsec:semicont}

The companions' reason for caution was correct as far as it went, and
identifying exactly where it stops is the conceptual contribution of
this paper, so we spell it out.

Local polar-type invariants of a hypersurface germ jump \emph{up} at a
special fibre.  Concretely: for generic $q$ of degree $n$, the family
$F_n+t\,q$ has smooth fibres for small $t\ne0$, while the special
fibre is a cone with a deep singular point; any invariant that reads
the singularity at the vertex (multiplicity, local polar multiplicity,
Milnor-type numbers) is strictly larger at $t=0$ than nearby.  A
transfer argument that runs ``the special fibre's local invariant is
at most the generic fibre's'' is therefore dead on arrival.  This is
the failure mode the companions' scope remarks anticipated.

The repair is to change what is being specialized.  The
\emph{global} conormal geometry of the family behaves in the opposite
direction: the Gauss graphs $\Gone$ of the generic fibres form a
family of $(n-1)$-cycles in $\PP^n\times(\PP^n)^\vee$; over a smooth
curve this family has a flat limit; the limit cycle is
\emph{effective} and its multidegrees are \emph{conserved}
(Lemma~\ref{lem:conservation}); and the conormal variety of the
special hypersurface sits inside the limit cycle with coefficient at
least one (Proposition~\ref{prop:containment}).  Effectivity then
gives the inequality in the favorable direction:
\[
  \pdeg_{n-2}\big(\Con(\text{special})\big)\ \le\
  \pdeg_{n-2}\big(\text{limit cycle}\big)\ =\
  \pdeg_{n-2}\big(\Gone(\text{generic})\big).
\]
What makes this usable for $F_n$ is its \emph{homogeneity}: the
invariant the companions read --- the top polar degree of the
projectivized tangent cone at the singular point --- is, for a cone,
equal to a \emph{global} multidegree of the conormal variety of the
cone (the cone-shift identity, Proposition~\ref{prop:coneshift}), and
global multidegrees are exactly what specialization conserves.  The
local invariant jumps up; the global cycle's multidegree does not.
Semicontinuity holds at the level of conormal cycles, in the direction
the lower bound needs.

The specialization behaviour of conormal and Lagrangian cycles is
classical --- it underlies the theory of polar varieties and
equisingularity \cite{LeTeissier88,FKM83} and has recently been used
to prove semicontinuity of Gauss-map degrees \cite{Kraemer21}.  The
specialization lemma proved here (Section~\ref{sec:family}) is an
elementary, self-contained instance, adapted to one feature the
classical setting does not have: the fibres of a degenerating
\emph{determinantal} family may be non-reduced, and the limit divisor
$x_0^{m-n}\widetilde F_n$ certainly is.  The multiplicity-one Gauss
graph $\Gone$ is the device that makes non-reducedness harmless on
both sides of the chain: it is the cycle Theorem~\ref{thm:detlemma}
bounds, and it is the cycle whose limit provably captures the Fermat
conormal (Lemma~\ref{lem:hurwitz}) --- because points with
nonvanishing differential lie on multiplicity-one components by
force.

\subsection{Comparison with the dual-dimension border bounds}

The bound of \cite{LMR13} and the GCT-unification of \cite{Grochow15}
read the \emph{dimension} of the dual variety: hypersurfaces of small
determinantal complexity have duals of small dimension, dual dimension
is Zariski-closed in families of the relevant kind, and the permanent
hypersurface has nondegenerate dual.  Dimension is bounded by the
number of variables, so this route cannot give bounds superlinear in
$n$ for an $n$-variate family; and for the Fermat tangent cone the
dual is nondefective, so the dimensional invariant carries no signal
at all \cite[Sec.~1.4(1)]{SheshadriDC}.  The present paper transfers
the dual \emph{degree} --- exponentially large for $F_n$ --- across
the degeneration, and an $(n-1)$-st root converts it into the
quadratic bound, exactly as in the exact setting.  The price of
reading a degree rather than a dimension is that no closed
\emph{condition} is available; the conserved object is a cycle class,
and the inequality comes from effectivity rather than from membership
in a closed set.

\section{Preliminaries and conventions}\label{sec:prelim}

We work over $\CC$ throughout; $n\ge2$ is the number of affine
variables, and $F_n=x_1^n+\cdots+x_n^n$.  We write $\widetilde F_n$
for the same polynomial regarded as a degree-$n$ form in
$(x_0,x_1,\dots,x_n)$ (it does not involve $x_0$), and
\[
  X\ :=\ V(\widetilde F_n)\ \subset\ \PP^n,
\]
the projective cone over the smooth Fermat hypersurface
$Z=V(F_n)\subset\PP^{n-1}=\{x_0=0\}$ with vertex
$q_0=[1:0:\cdots:0]$.

\subsection{Conormal cycles and multidegrees}

Let $h_1,h_2$ denote the pullbacks to $\PP^n\times(\PP^n)^\vee$ of the
hyperplane classes of the two factors.  For an $(n-1)$-cycle $C$ on
$\PP^n\times(\PP^n)^\vee$ and $0\le k\le n-1$, the $k$-th
\emph{multidegree} is
\[
  \pdeg_k(C)\ :=\ \deg\big(C\cdot h_1^{\,n-1-k}h_2^{\,k}\big)\in\mathbb Z .
\]
Multidegrees are additive in $C$ and, by
Lemma~\ref{lem:kleiman} below, nonnegative on effective cycles:
$\pdeg_k$ of an irreducible $(n-1)$-fold is computed by a generic
transverse flag intersection and is the (nonnegative) number of
intersection points.

For an irreducible subvariety $S\subset\PP^n$, $\Con(S)\subset
\PP^n\times(\PP^n)^\vee$ is the conormal variety: the closure of
$\{(x,[\xi]): x\in S_{\mathrm{sm}},\ \xi|_{T_xS}=0\}$.  For a
hypersurface $S$ it has dimension $n-1$, and over a smooth point of
$S=V(g)$ ($g$ reduced) the fibre is the single covector $[dg(x)]$.

\begin{definition}[Multiplicity-one Gauss-graph cycle]\label{def:gone}
Let $F$ be a nonzero form on $\PP^n$ of degree $\ge1$, with
factorization $F=\prod_j g_j^{k_j}$ into pairwise distinct
irreducibles, $S_j=V(g_j)$.  Set
\[
  \Gone(F)\ :=\ \overline{\big\{(x,[dF(x)])\ :\ F(x)=0,\
  dF(x)\ne0\big\}}\quad\text{(closure, with the reduced structure)},
\]
regarded as the $(n-1)$-cycle given by its irreducible components with
coefficient one.
\end{definition}

\begin{lemma}[Structure of $\Gone$]\label{lem:gonestructure}
$\Gone(F)=\sum_{j:\,k_j=1}\Con(S_j)$ as cycles.  In particular
$\Gone(F)$ is effective of pure dimension $n-1$ (or the zero cycle,
when no component has multiplicity one), and if $F$ is squarefree then
$\Gone(F)$ is the full reduced conormal cycle of $V(F)$.
\end{lemma}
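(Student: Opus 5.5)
The plan is to compute the locus $\{(x,[dF(x)]): F(x)=0,\ dF(x)\neq0\}$ one irreducible factor at a time. Write $F=\prod_j g_j^{k_j}$. By the Leibniz rule,
\[
  dF\;=\;\sum_j k_j\,g_j^{k_j-1}\,dg_j\prod_{i\ne j}g_i^{k_i}.
\]
Take a point $x\in V(F)$ and choose $j$ with $g_j(x)=0$.

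First suppose $k_j\ge2$. Then every summand vanishes at $x$: the $j$-th summand contains the factor $g_j^{k_j-1}$, and every other summand contains $g_j^{k_j}$. So $dF(x)=0$, and points of $S_j$ with $k_j\ge2$ never enter the set defining $\Gone(F)$.

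Now suppose $k_j=1$ and $x$ lies on no other $S_i$. Then $dF(x)=dg_j(x)\prod_{i\ne j}g_i(x)^{k_i}$, and the product is a nonzero scalar. Hence $dF(x)\ne0$ exactly when $dg_j(x)\neq0$. Since $g_j$ is irreducible and therefore reduced, this means $x$ is a smooth point of $S_j$. In that case $[dF(x)]=[dg_j(x)]$, which is the conormal covector at $x$.

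It follows that the defining set of $\Gone(F)$ differs from
\[
  \bigcup_{k_j=1}\{(x,[dg_j(x)]) : x\in (S_j)_{\mathrm{sm}}\setminus \textstyle\bigcup_{i\neq j}S_i\}
\]
only by points lying over $W:=\bigcup_{i\ne j}(S_i\cap S_j)\cup\Sing$, a proper closed subset of each $S_j$. Call such points the bad points.

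The main obstacle is to show that the bad points do not add new components to the closure. The set of bad points with $dF(x)\ne0$ is a graph over a subset of $W$, because the covector is determined by $x$. It therefore has dimension at most $\dim W\le n-2$.

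Each set $U_j:=\{(x,[dg_j(x)]): x\in(S_j)_{\mathrm{sm}}\setminus\bigcup_{i\ne j}S_i\}$ is irreducible of dimension $n-1$, since it is isomorphic to an open dense subset of the irreducible $S_j$. Its closure is $\Con(S_j)$, because both are closures of dense open subsets of the same irreducible conormal variety.

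The closure of the full defining set is therefore $\bigcup_{k_j=1}\Con(S_j)$ together with the closure of a set of dimension at most $n-2$. A closed set of dimension at most $n-2$ cannot be an irreducible component unless it is contained in the $(n-1)$-dimensional part, and here I want no extra components at all. I would handle this as follows.

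Any bad point with $dF(x)\neq0$ has some $j$ with $g_j(x)=0$. By the first computation $k_j=1$, and if two factors met at $x$ the Leibniz expansion would force $dF(x)=0$. So $x$ lies on a unique $S_j$, and this $S_j$ has $k_j=1$. Near $x$, then, $F=u\cdot g_j$ with $u$ a unit, so $dF(x)\ne0$ forces $x\in(S_j)_{\mathrm{sm}}$. This puts the point in $U_j$, so there are no genuinely bad points.

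The closure is therefore exactly the union of the $\Con(S_j)$ with $k_j=1$. These are pairwise distinct, since their projections $S_j$ are distinct, and the reduced structure gives each coefficient one. This proves $\Gone(F)=\sum_{k_j=1}\Con(S_j)$.

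The remaining claims follow directly. Each $\Con(S_j)$ is irreducible of dimension $n-1$, so the cycle is effective and pure of dimension $n-1$. It is zero when no $k_j=1$. When $F$ is squarefree every $k_j=1$, and the cycle is the full reduced conormal cycle of $V(F)$.
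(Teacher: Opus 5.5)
Your proof is correct and follows the paper's argument. The Leibniz expansion kills $dF$ on every component with $k_j\ge2$, and at a point lying only on a multiplicity-one $S_j$ one has $[dF(x)]=[dg_j(x)]$, so the open graph consists of dense opens of the $\Con(S_j)$ with $k_j=1$. Your explicit check that $dF$ also vanishes wherever two components meet is something the paper leaves implicit in the phrase ``disjoint union of dense opens.'' It also makes your dimension-count detour unnecessary, which is just as well, since the sentence there about closed sets of dimension $\le n-2$ is garbled.
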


\begin{proof}
On a component of multiplicity $k_j\ge2$ one has
$dF=g_j^{\,k_j-1}\big(k_j(\prod_{l\ne j}g_l^{k_l})\,dg_j+g_j(\cdots)\big)$,
which vanishes identically on $S_j$; so the open graph in
Definition~\ref{def:gone} contains no point lying \emph{only} on
multiplicity-$\ge2$ components.  Conversely, at a point $x$ lying on
$S_j$ with $k_j=1$ and on no other component, $dF(x)=
\big(\prod_{l\ne j}g_l^{k_l}(x)\big)\,dg_j(x)$, a nonzero multiple of
$dg_j(x)$ wherever $dg_j(x)\ne0$; such $x$ are dense in $S_j$ ($g_j$
irreducible, hence reduced), and there $[dF(x)]=[dg_j(x)]$ is the
conormal covector of $S_j$.  Hence the open graph is a disjoint union
of dense opens of the $\Con(S_j)$ with $k_j=1$, and its closure is
their union.  Each $\Con(S_j)$ is irreducible of dimension $n-1$.
\qed
\end{proof}

\begin{example}[Tangency: multiplicity-one is not corank-one]
\label{ex:tangency}
The dichotomy in Lemma~\ref{lem:dichotomy} below is sharper than rank
bookkeeping, and the following example shows why the distinction
matters.  Let $m\ge2$ and
\[
  \widehat A=\begin{pmatrix}x_1&x_2\\ 0&x_1\end{pmatrix}
  \oplus x_0 I_{m-2},
  \qquad F=\det\widehat A=x_0^{\,m-2}x_1^2 .
\]
Along the component $S=\{x_1=0\}$ the matrix has corank exactly one at
a generic point, with left kernel $u=e_2$ and right kernel $v=e_1$,
yet $S$ has multiplicity two in $F$ and every kernel pairing
$u^{\mathsf T}\widehat A_iv=(\widehat A_i)_{21}$ vanishes: the
component is \emph{tangential}.  Generic corank one does not imply
multiplicity one; the differential criterion of
Lemma~\ref{lem:dichotomy} is the correct test, and $\Gone$ is defined
through it.
\end{example}

\begin{lemma}[Multiplicity dichotomy for determinants]
\label{lem:dichotomy}
In the setting of Theorem~\ref{thm:detlemma}, let $S$ be an
irreducible component of $V(F)$, $F=\det\widehat A$.  The following
are equivalent:
\begin{enumerate}[label=(\alph*),leftmargin=2.2em]
\item $S$ has multiplicity one in $F$;
\item $dF$ does not vanish identically on $S$;
\item at a generic point of $S$ the matrix $\widehat A$ has corank
exactly one, \emph{and} the kernel pairing
$w\mapsto u^{\mathsf T}\widehat A(w)\,v$ (with $u,v$ spanning the left
and right kernels) is not identically zero.
\end{enumerate}
In the symmetric case (c) reads: corank one and
$w\mapsto u^{\mathsf T}\widehat A(w)\,u\not\equiv0$.
\end{lemma}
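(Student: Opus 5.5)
The plan is to prove (a)$\Leftrightarrow$(b) by a local computation valid for any form, and then (b)$\Leftrightarrow$(c) using Jacobi's formula for the differential of a determinant.

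For (a)$\Leftrightarrow$(b), write $F=g^k h$ with $g$ the irreducible equation of $S$ and $g\nmid h$. The computation in the proof of Lemma~\ref{lem:gonestructure} already shows both directions. If $k\ge2$, then $dF=g^{k-1}(\cdots)$ vanishes on $S$. If $k=1$, then on $S$ we have $dF=h\,dg$. Here $h\not\equiv0$ on $S$, and $dg\not\equiv0$ on $S$ because $g$ is reduced, so the generic smooth point gives $dF\neq0$. Nothing specific to determinants enters this step.

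For (b)$\Leftrightarrow$(c), apply Jacobi's formula: for $w\in\CC^{n+1}$,
\[
  dF(x)(w)=\tr\big(\adj\widehat A(x)\,\widehat A(w)\big),
\]
using that $\widehat A$ is linear, so its directional derivative is $\widehat A(w)$. Now split by the generic corank $r$ of $\widehat A$ along $S$.
\begin{itemize}
\item If $r\ge2$ generically, then $\adj\widehat A=0$ on a dense open subset of $S$, hence on all of $S$. So $dF\equiv0$ on $S$, and both (b) and (c) fail.
\item If $r=1$ at a generic point, then $\adj\widehat A(x)=\lambda(x)\,v u^{\mathsf T}$ with $\lambda(x)\ne0$, since the adjugate of a corank-one matrix has rank exactly one and its image and coimage are the right and left kernels. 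Therefore
\[
  dF(x)(w)=\lambda(x)\,u^{\mathsf T}\widehat A(w)v .
\]
So $dF(x)\ne0$ if and only if the kernel pairing is nonzero at $x$.
\end{itemize}
Passing from a generic point to identical vanishing on $S$ needs care. The pairing depends on $x$ through $u(x),v(x)$, and these can be chosen as regular functions locally, for instance as rows and columns of the adjugate. So "not identically zero at a generic point" and "nonvanishing on a dense open" agree by irreducibility of $S$. Hence (b)$\Leftrightarrow$(c). Corank $0$ cannot occur on $S\subset V(F)$.

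In the symmetric case $\widehat A$ is symmetric, so the left and right kernels coincide and we may take $v=u$. The formula becomes $dF(x)(w)=\lambda\,u^{\mathsf T}\widehat A(w)u$.

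The main obstacle is the bookkeeping in the corank-one step. One needs that the generic corank along an irreducible $S$ is well defined, being the minimum on a dense open by lower semicontinuity of rank. One also needs that $\adj$ is nonzero exactly where the corank is one. Both are standard, so I expect no genuine difficulty. Example~\ref{ex:tangency} shows why the pairing condition in (c) cannot be dropped: there the corank is one generically, yet the pairing vanishes identically and the multiplicity is two.
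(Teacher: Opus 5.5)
Your proposal is correct and follows the paper's proof essentially step for step. You get (a)$\Leftrightarrow$(b) from the factorization computation of Lemma~\ref{lem:gonestructure}, and (b)$\Leftrightarrow$(c) from Jacobi's formula, splitting into the corank-$\ge2$ case ($\adj\widehat A\equiv0$ on $S$) and the corank-one case, with the rank-one factorization $\adj\widehat A=\alpha\,v u^{\mathsf T}$ (respectively $\beta\,u u^{\mathsf T}$ in the symmetric case); your added remark that $u(x),v(x)$ can be taken regular locally, as a row and a column of the adjugate, is a harmless clarification the paper leaves implicit.
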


\begin{proof}
(a)$\Leftrightarrow$(b) is the computation in
Lemma~\ref{lem:gonestructure}.  For (b)$\Leftrightarrow$(c): since
$\widehat A$ is linear in $(x_0,\dots,x_n)$, the directional
derivative of $F$ at $x$ in direction $w$ is, by Jacobi's formula,
\begin{equation}\label{eq:jacobi}
  \partial_wF(x)
  =\tr\!\big(\adj\widehat A(x)\cdot\widehat A(w)\big).
\end{equation}
If $\widehat A$ has corank $\ge2$ at a generic point of $S$, then
$\adj\widehat A\equiv0$ on $S$ and $dF\equiv0$ on $S$ by
\eqref{eq:jacobi}.  If the corank is one at a generic point $x$, then
$\adj\widehat A(x)$ is a nonzero rank-one matrix; its columns lie in
$\ker\widehat A(x)$ and its rows in the left kernel (from
$\widehat A\cdot\adj\widehat A=\adj\widehat A\cdot\widehat A=F\cdot I
=0$ on $S$), so $\adj\widehat A(x)=\alpha(x)\,v(x)u(x)^{\mathsf T}$
with $\alpha(x)\ne0$, and \eqref{eq:jacobi} becomes
$\partial_wF(x)=\alpha(x)\,u(x)^{\mathsf T}\widehat A(w)\,v(x)$.
Thus $dF|_S\equiv0$ if and only if the pairing vanishes identically.
In the symmetric case $\adj\widehat A(x)$ is symmetric of rank one
with image equal to the kernel line, so
$\adj\widehat A(x)=\beta(x)\,u(x)u(x)^{\mathsf T}$, $\beta(x)\ne0$,
and the same computation applies.
\qed
\end{proof}

\subsection{Border complexity and closures}

\begin{lemma}[Closure agreement]\label{lem:closure}
For a constructible subset $Y$ of an affine variety over $\CC$, the
Zariski and Euclidean closures of $Y$ coincide.
\end{lemma}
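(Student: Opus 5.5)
The plan is to compare the two closures directly and show that the Euclidean closure is Zariski closed. Let $V$ be the ambient affine variety over $\CC$, and let $\overline{Y}^{Z}$ and $\overline{Y}^{E}$ denote the two closures. Polynomial functions are continuous in the Euclidean topology, so Zariski closed sets are Euclidean closed. This gives $\overline{Y}^{E}\subseteq\overline{Y}^{Z}$ at once. All the work is in the reverse inclusion, which amounts to showing that $\overline{Y}^{E}$ is itself Zariski closed.

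First reduce to the locally closed irreducible case. Write $Y=\bigcup_{i=1}^r (W_i\cap U_i)$, where each $W_i$ is an irreducible closed subvariety and each $U_i\subseteq V$ is Zariski open with $W_i\cap U_i\neq\emptyset$. Both closure operations commute with finite unions. It therefore suffices to show that for irreducible $W$ and a nonempty Zariski open dense subset $U^\circ:=W\cap U$ of $W$, the Euclidean closure of $U^\circ$ is all of $W$. This is the step I expect to be the main obstacle: it is the classical statement that a nonempty Zariski open subset of an irreducible complex variety is Euclidean dense. It is exactly where $\CC$ is used; over $\mathbb R$ it fails.

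For that step I would invoke Mumford's theorem (Red Book, I.\S10), or prove it as follows.

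1. Take a point $p\in W\setminus U^\circ$.
2. By a curve-selection argument, find an irreducible curve $C\subseteq W$ through $p$ that meets $U^\circ$. Such a curve exists by cutting $W$ with generic linear spaces through $p$ and a chosen point of $U^\circ$, and using irreducibility together with the dimension count.
3. Normalize $C$ to obtain a smooth curve $\widetilde C\to C$, which is surjective and finite.
4. Near a preimage $\tilde p$ of $p$, the curve $\widetilde C$ is a holomorphic disc.
5. $C\setminus U^\circ$ is a finite set of points, so the punctured disc minus finitely many points maps into $U^\circ$.
6. Hence $p$ is a Euclidean limit of points of $U^\circ$.

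This gives $W\subseteq\overline{U^\circ}^{E}$. Since $W$ is Zariski closed, the reverse inclusion is automatic.

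Combining these, $\overline{Y}^{E}=\bigcup_i \overline{W_i\cap U_i}^{E}=\bigcup_i W_i$. This set is Zariski closed and contains $Y$, so $\overline{Y}^{Z}\subseteq\overline{Y}^{E}$, and equality follows. In the application, $Y=D_m$ (or $D_m^{\mathrm{sym}}$) is constructible by Chevalley's theorem, being the image of an affine space under a polynomial map. In fact it is irreducible, so only a single $W$ is needed there.
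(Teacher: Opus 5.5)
Your proposal is correct and takes essentially the paper's route. The paper likewise combines the fact that Zariski-closed sets are Euclidean-closed with Mumford's theorem (Red Book I.10) that a Zariski-open dense subset of a complex variety is Euclidean-dense, applied to a subset of $Y$ that is open and dense in $\overline{Y}^{\mathrm{Zar}}$; your decomposition of $Y$ into irreducible locally closed pieces $W_i\cap U_i$ supplies exactly that, and the only difference is that you also sketch the standard curve-selection and normalization proof of Mumford's theorem, which the paper simply cites.
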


\begin{proof}
Standard \cite[I.10]{Mumford99}: $Y$ contains a subset that is open
and dense, in both topologies, in the Zariski closure
$\overline{Y}^{\mathrm{Zar}}$, and Zariski-closed sets are
Euclidean-closed.
\qed
\end{proof}

\subsection{Transversality and positivity inputs}

The following two lemmas are the only intersection-theoretic inputs to
Theorem~\ref{thm:detlemma}; each is stated with the hypotheses under
which it will be invoked.

\begin{lemma}[Kleiman transversality, char.\ $0$]\label{lem:kleiman}
Let $P=\PP^{a_1}\times\cdots\times\PP^{a_r}$, let
$V_1,\dots,V_s\subset P$ be irreducible subvarieties, and let
$V_l^\circ\subseteq V_l$ be dense opens.  Let $\Phi\subset P$ be a
product of linear subspaces, of codimension equal to $\dim V_l$ for
all $l$.  For a generic translate $g\Phi$ under the transitive action
of $G=\mathrm{PGL}_{a_1+1}\times\cdots\times\mathrm{PGL}_{a_r+1}$:
each intersection $g\Phi\cap V_l$ is finite, reduced (transverse at
smooth points of $V_l$), contained in $V_l^\circ$, of cardinality
$\deg\big([V_l]\cdot[\Phi]\big)$; and the intersections for distinct
$l$ are pairwise disjoint.  In particular, multidegrees of effective
cycles are nonnegative and are computed by counting generic flag
intersections componentwise.
\end{lemma}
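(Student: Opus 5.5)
The statement to prove is Lemma~\ref{lem:kleiman}. The plan is to deduce it from Kleiman's transversality theorem in characteristic zero. The group $G=\mathrm{PGL}_{a_1+1}\times\cdots\times\mathrm{PGL}_{a_r+1}$ acts transitively on $P$, so the theorem applies to the finitely many varieties involved: it makes a generic translate $g\Phi$ meet any fixed subvariety $W\subset P$ properly, and, by generic smoothness, transversally along the smooth locus of $W$.

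\textbf{Step 1: finiteness and the count.} Apply Kleiman to $W=V_l$. Since $\operatorname{codim}\Phi=\dim V_l$, the intersection $g\Phi\cap V_l$ is empty or zero-dimensional for generic $g$.

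\textbf{Step 2: the points avoid bad loci.} Apply Kleiman also to $W=V_l\setminus V_l^\circ$ and to $W=\Sing(V_l)$. Both have dimension $<\dim V_l=\operatorname{codim}\Phi$, so a generic translate misses them entirely. Hence every point of $g\Phi\cap V_l$ lies in $V_l^\circ\cap(V_l)_{\mathrm{sm}}$.

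\textbf{Step 3: reducedness and the degree.} Transversality at smooth points makes every intersection multiplicity equal to one. The intersection product $[V_l]\cdot[\Phi]$ is then represented by the reduced scheme $g\Phi\cap V_l$. Its degree therefore equals its cardinality, because $[g\Phi]=[\Phi]$ (the group $G$ is connected, so translates are rationally equivalent).

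\textbf{Step 4: disjointness.} For $l\ne l'$, set $W=V_l\cap V_{l'}$. If $V_l\ne V_{l'}$, then $W$ is a proper closed subset of the irreducible $V_l$, so $\dim W<\dim V_l$ and a generic translate misses $W$. If $V_l=V_{l'}$, the statement is vacuous in the intended use, since components are distinct; I would note this in the write-up.

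\textbf{Step 5: combining and the consequences.} Each of the conditions above holds on a dense open subset of $G$, and there are finitely many of them, so their intersection is still dense open and a single generic $g$ satisfies all of them. For the ``in particular'', take an effective cycle $C=\sum n_l V_l$ with $n_l>0$, pure of dimension $n-1$. Let $\Phi$ be the product of a codimension-$(n-1-k)$ linear space in $\PP^n$ and a codimension-$k$ linear space in $(\PP^n)^\vee$, so that $[\Phi]=h_1^{n-1-k}h_2^{k}$. Then $\pdeg_k(C)=\sum n_l\,\#(g\Phi\cap V_l)\ge0$, computed componentwise.

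The only delicate point is the transversality claim at smooth points, which needs characteristic zero (generic smoothness applied to the action map $G\times V_l\to P$). I would cite Kleiman's 1974 theorem for it rather than reprove it. Everything else is dimension counting.
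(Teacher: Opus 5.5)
Your proposal is correct and follows the paper's proof. The paper likewise applies Kleiman's theorem for the transitive $G$-action to each $V_l$, using characteristic zero for reducedness, and obtains containment in $V_l^\circ$ and pairwise disjointness because a generic translate of $\Phi$ misses the lower-dimensional closed sets $V_l\smallsetminus V_l^\circ$ and $V_l\cap V_{l'}$. Two of your additions make explicit what the paper leaves implicit: you also avoid $\Sing(V_l)$, so the whole scheme-theoretic intersection is reduced, and you note that disjointness needs the $V_l$ pairwise distinct (together with equal dimensions, this is what makes $V_l\cap V_{l'}$ lower-dimensional).
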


\begin{proof}
Kleiman's theorem \cite{Kleiman74} applied to each $V_l$ (transitive
group action, characteristic zero for generic reducedness of the
intersection); containment in $V_l^\circ$ and pairwise disjointness
are the same statement applied to the lower-dimensional closed sets
$V_l\smallsetminus V_l^\circ$ and $V_l\cap V_{l'}$, which a generic
translate of $\Phi$ misses for dimension reasons.
\qed
\end{proof}

\begin{lemma}[B\'ezout bound for isolated zeros]\label{lem:positivity}
Let $P=\PP^{a_1}\times\cdots\times\PP^{a_r}$, let
$E=\bigoplus_{j=1}^{D}\OO_P(d_j)$ be a direct sum of line bundles with
componentwise nonnegative multidegrees $d_j$, where $D=\dim P$, and
let $s\in H^0(P,E)$ be a section with zero scheme $Z=Z(s)$.  Then
\[
  \sum_{\substack{p\in Z\\ p\ \text{isolated in}\ Z}}
  \dim_\CC\OO_{Z,p}
  \ \le\ \int_P c_D(E)
  \ =\ \Big[\textstyle\prod_l H_l^{a_l}\Big]\
  \prod_{j=1}^{D}\Big(\textstyle\sum_l d_{j,l}H_l\Big),
\]
where $H_l$ are the hyperplane classes and the bracket extracts the
coefficient in
$\mathbb Z[H_1,\dots,H_r]/(H_1^{a_1+1},\dots,H_r^{a_r+1})$.  Each
isolated point contributes its local multiplicity
$\dim_\CC\OO_{Z,p}\ge1$.
\end{lemma}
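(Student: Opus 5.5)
The plan is to prove the inequality by a perturbation (conservation-of-number) argument, using only that a line bundle $\OO_P(d)$ with componentwise nonnegative $d$ is globally generated. The refined alternative is Fulton's localized top Chern class: isolated points contribute their lengths, and the other distinguished varieties contribute nonnegatively by the positivity theorem for globally generated bundles \cite[Thm.~12.2]{Fulton}. I will use that only as a cross-check, since it needs a citation the paper does not currently carry; the perturbation route keeps the lemma self-contained.

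\emph{Step 1 (generic sections).} Since each summand $\OO_P(d_j)$ is globally generated, so is $E$. For a generic section $s'\in H^0(P,E)$, I claim $Z(s')$ is finite and reduced, of cardinality $\int_P c_D(E)$.

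To see this, consider the incidence variety $\{(p,\sigma):\sigma(p)=0\}\subset P\times H^0(P,E)$. It is a vector bundle over $P$ of corank $D$ in $H^0$, because evaluation at $p$ is surjective. Generic smoothness of its projection to $H^0$ (characteristic $0$) shows that a generic fibre is a reduced finite set, possibly empty. Its cardinality is $c_D(E)$, the standard fact that a regular section's zero scheme represents the top Chern class. Expanding $c_D(E)=\prod_j\big(\sum_l d_{j,l}H_l\big)$ gives the displayed coefficient.

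\emph{Step 2 (pencil).} Consider the pencil $s_t=s+t\,s'$ for $t\in\CC$. Since the set of $t$ for which $Z(s_t)$ is finite and reduced is Zariski-open, it is nonempty as soon as it contains one member. Rather than rescaling, I will choose $s'$ generic in the linear system; then the line $\{s+ts'\}$ meets the open set of good sections. Hence for all but finitely many $t$, $Z(s_t)$ consists of exactly $N:=\int_Pc_D(E)$ reduced points.

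\emph{Step 3 (local count at an isolated zero).} Fix an isolated point $p$ of $Z(s)$, and trivialize $E$ near $p$, so that $s$ becomes $D$ holomorphic functions $f=(f_1,\dots,f_D)$ in $D$ local coordinates with an isolated zero at $p$. Choose a small closed ball $\overline{B}_p$ around $p$ containing no other zero of $s$. Then $|f|$ is bounded below on $\partial B_p$, so for $|t|$ small no zeros of $s_t$ cross $\partial B_p$.

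The local conservation of number for isolated zeros gives the count inside the ball: the number of zeros of $f+tg$ in $B_p$, counted with multiplicity, equals the local degree of $f$ at $p$. This local degree equals $\dim_\CC\OO_{Z,p}$, because $f_1,\dots,f_D$ form a regular sequence in the Cohen--Macaulay local ring; equivalently, the family $\{f+tg=0\}$ is finite and flat over a small disc in $t$. Choosing $t$ small and outside the finite bad set, the zeros of $s_t$ in $B_p$ are reduced. There are therefore exactly $\dim_\CC\OO_{Z,p}$ of them.

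\emph{Step 4 (summation).} The balls $B_p$ for the finitely many isolated $p$ are pairwise disjoint. Summing Step 3 over them gives
\[
\sum_p\dim_\CC\OO_{Z,p}\ \le\ \#Z(s_t)\ =\ N .
\]
The remaining zeros of $s_t$, near the positive-dimensional part of $Z(s)$, only add to the right side. Each contribution is at least $1$, because $p\in Z$.

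I expect the main obstacle to be Step 3: justifying rigorously that the local count equals $\dim_\CC\OO_{Z,p}$ and not merely something at least $1$. I would argue it via flatness of $\CC\{x,t\}/(f+tg)$ over $\CC\{t\}$, using that the fibre at $t=0$ is finite of length $\mu$. Hence the family is finite flat of degree $\mu$ near $(p,0)$, and the nearby fibre, being reduced, consists of $\mu$ points.
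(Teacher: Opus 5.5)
Your argument is correct, and it takes a genuinely different route from the paper. The route you set aside as a cross-check is in fact the paper's entire proof, and the paper does cite Fulton for it. It takes the localized top Chern class $\mathbf Z(s)\in A_0(Z)$, whose degree is $\int_Pc_D(E)$ (Fulton \S14.1). It splits that class over the connected components of $Z$. It identifies the contribution of an isolated point with the length $\dim_\CC\OO_{Z,p}$ (\S7.1). Every other contribution is nonnegative by the positivity theorem for globally generated bundles (\S12.2).

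You instead prove a conservation-of-number statement. You move $s$ along the pencil $s+ts'$ to a section with $N$ reduced zeros, where $N=\int_Pc_D(E)$. Finite flatness of $\CC\{x,t\}/(f+tg)$ over $\CC\{t\}$ shows that exactly $\dim_\CC\OO_{Z,p}$ of these zeros stay near each isolated $p$. The zeros spawned by the excess part are simply discarded.

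What your route buys is that positivity of the excess contribution becomes trivial, since it is a count of actual points. In effect you reprove, in the case needed, the positivity the paper imports. The mechanism is also the same Rouch\'e/Hurwitz persistence the paper already uses in Lemma~\ref{lem:hurwitz}. The costs are characteristic zero (generic smoothness in Step~1) and the analytic topology, both harmless over $\CC$. You also get only the inequality, not the decomposition of $\int_Pc_D(E)$ into contributions of the connected components of $Z$ that the localized class provides.

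Two justifications should be written out.
\begin{itemize}
\item In Step~2, openness of the good locus follows from properness of the incidence $I\to H^0(P,E)$: the image of its non-smooth locus is closed. A generic direction $s'$ works because the set of $s'$ whose line $s+\CC s'$ lies in the bad closed set is closed. That set is proper, since the lines through $s$ cover $H^0(P,E)$.
\item In Step~3, the ``regular sequence'' sentence alone says nothing about nearby zeros. Keep the finite-flat argument, in this order. The quotient is finite over $\CC\{t\}$ because the $t=0$ fibre is finite. It is a one-dimensional complete intersection, hence Cohen--Macaulay. So $t$ is a nonzerodivisor, and the quotient is free of rank $\dim_\CC\OO_{Z,p}$.
\end{itemize}
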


\begin{proof}
The section $s$ has a localized top Chern class
$\mathbf Z(s)\in A_0(Z)$ whose image in $A_0(P)$ is $c_D(E)\cap[P]$,
of degree $\int_Pc_D(E)$ \cite[\S14.1]{Fulton98}.  The class
decomposes over the connected components of $Z$.  At an isolated point
$p$ the local contribution is the intersection multiplicity
$\dim_\CC\OO_{Z,p}\ge1$ \cite[\S7.1, \S14.1]{Fulton98}.  On every
other connected component the contribution is nonnegative, because $E$
is globally generated (each $\OO_P(d_j)$ with $d_j\ge0$ is) and the
localized class of a section of a globally generated bundle is
represented by a nonnegative cycle \cite[\S12.2]{Fulton98}.  Summing
gives the bound; the coefficient formula is the Chow ring of a product
of projective spaces.  (For this statement in the language of
multihomogeneous B\'ezout numbers see also
\cite[Ch.~8]{SommeseWampler05}.)
\qed
\end{proof}

\subsection{Dual degree of a smooth hypersurface}

\begin{lemma}[Classical dual-degree formula]\label{lem:dualdeg}
Let $Z\subset\PP^{N}$ be a smooth hypersurface of degree $e\ge2$ over
$\CC$.  Then the dual variety $Z^\vee\subset(\PP^N)^\vee$ is a
hypersurface, the conormal map $\Con(Z)\to Z^\vee$ is birational
(reflexivity in characteristic zero), and
\[
  \deg Z^\vee\ =\ e(e-1)^{N-1}.
\]
Equivalently, the top multidegree of $\Con(Z)$ equals
$e(e-1)^{N-1}$.
\end{lemma}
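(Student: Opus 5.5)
The plan is to prove this classical fact directly, via the Gauss map and a Chern-class count of polar loci, using reflexivity as the main structural input.

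\emph{Step 1: set-up.} Let $Z=V(g)$ with $g$ of degree $e$. Smoothness means $dg$ has no zero on $Z$. Hence the Gauss map
\[
\gamma\colon Z\to(\PP^N)^\vee,\qquad x\mapsto[\partial_0g(x):\cdots:\partial_Ng(x)],
\]
is a morphism given by forms of degree $e-1$. Since each fibre of $\Con(Z)\to Z$ over a smooth point is the single covector $[dg(x)]$, the conormal variety $\Con(Z)$ is the graph of $\gamma$. Consequently
\[
\pdeg_{N-1}(\Con(Z))=\deg\big(h_2^{N-1}\cap[\Con(Z)]\big)=\deg\big(\gamma^*H^{N-1}\big),
\]
which is the number of points of $Z$ (counted with multiplicity) at which $dg(x)$ lies in a generic codimension-$(N-1)$ linear subspace.

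\emph{Step 2: the count.} The preimage of a generic codimension-$(N-1)$ linear space $L\subset(\PP^N)^\vee$ is cut out on $Z$ by $N-1$ forms of degree $e-1$. These forms are the linear combinations $\sum_j c^{(r)}_j\partial_jg$ for $r=1,\dots,N-1$. Because $\gamma$ is a morphism, the linear system $\gamma^*|\OO(1)|$ is base-point free. Bertini (or Lemma~\ref{lem:kleiman} applied to the graph) then shows the preimage is finite and reduced, provided $\gamma$ is generically finite onto its image. Its cardinality is
\[
\deg Z\cdot(e-1)^{N-1}=e(e-1)^{N-1},
\]
by B\'ezout on $\PP^N$, since $Z$ has degree $e$ and we intersect with $N-1$ hypersurfaces of degree $e-1$ meeting properly. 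This gives the multidegree statement.

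\emph{Step 3: finiteness and birationality.} We need $\gamma$ to be finite, so that $\dim Z^\vee=N-1$ and the identity above is a degree count of $\gamma_*[Z]$. Suppose some fibre of $\gamma$ contained a curve $C$. Then the hyperplane $H=\gamma(C)$ would be tangent to $Z$ along $C$. But $\gamma^*\OO(1)=\OO_Z(e-1)$ is ample for $e\ge2$, so it has positive degree on every curve, which rules out any positive-dimensional fibre. Hence $\gamma$ is finite, and $Z^\vee=\gamma(Z)$ is a hypersurface.

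By the biduality (reflexivity) theorem in characteristic $0$, $\Con(Z)=\Con(Z^\vee)$ under the swap of factors. The general fibre of $\Con(Z^\vee)\to Z^\vee$ over a smooth point of $Z^\vee$ is a single point, so $\gamma$ is birational onto its image. Therefore $\deg Z^\vee=\deg\gamma^*H^{N-1}/\deg\gamma=e(e-1)^{N-1}$.

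I expect the main obstacle to be the reflexivity input. It is classical (Wallace, Kleiman's ``Tangency and duality'') and I would cite it rather than reprove it. Everything else is a B\'ezout count with a transversality justification.
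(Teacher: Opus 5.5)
Your proposal is correct and follows essentially the paper's route: the Gauss map is given by the partials of degree $e-1$, so the top multidegree is $\int_Z c_1(\OO_Z(e-1))^{N-1}=e(e-1)^{N-1}$, and reflexivity in characteristic zero gives birationality of $\Con(Z)\to Z^\vee$. The only difference is that you prove non-deficiency from the ampleness of $\gamma^*\OO(1)=\OO_Z(e-1)$ (no curve is contracted, so $\gamma$ is finite), whereas the paper simply cites it.
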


\begin{proof}
Smooth hypersurfaces of degree $\ge2$ are not dual-deficient, and
reflexivity holds in characteristic zero
\cite[Ch.~1]{GKZ94}, \cite[\S7]{Tevelev05}; the degree is
$\int_Z c_1(\OO_Z(e-1))^{N-1}$ computed from the Gauss map
$[\partial_0q:\cdots:\partial_Nq]$, given by forms of degree $e-1$
\cite{Piene78}.
\qed
\end{proof}

\section{The determinantal conormal lemma}\label{sec:detlemma}

This section proves Theorem~\ref{thm:detlemma}.  Throughout,
$A,\widehat A,F$ are as in the statement, $n\ge2$, and
$F=\det\widehat A\not\equiv0$ is a form of degree $m$ on $\PP^n$.  If
$\Gone(F)$ is the zero cycle there is nothing to prove, so assume
$V(F)$ has at least one multiplicity-one component, and write
$\Gone(F)=\sum_{j\in J}\Con(S_j)$ per Lemma~\ref{lem:gonestructure}.

\subsection{Proof of Theorem~\ref{thm:detlemma}(i)}
\label{subsec:nonsymproof}

\paragraph{Step 1: count points.}
For each $j\in J$ let $S_j^{\mathrm{good}}\subseteq S_j$ be the dense
open subset of points $x$ such that $x$ is a smooth point of
$V(F)_{\mathrm{red}}$ lying on $S_j$ only, and $dF(x)\ne0$; this is
dense because $S_j$ has multiplicity one
(Lemma~\ref{lem:dichotomy}(b)) and $F$ restricted to a neighborhood of
a generic point of $S_j$ is a unit times a reduced equation of $S_j$.
Over $S_j^{\mathrm{good}}$ the conormal variety is the graph
$x\mapsto(x,[dF(x)])$; let $\Con(S_j)^\circ$ be this graph, dense open
in $\Con(S_j)$, isomorphic to $S_j^{\mathrm{good}}$ via the first
projection.

Apply Lemma~\ref{lem:kleiman} on $\PP^n\times(\PP^n)^\vee$ to the
varieties $\{\Con(S_j)\}_{j\in J}$, the opens $\Con(S_j)^\circ$, and
the flag $\Phi=M\times(L_1\cap\cdots\cap L_{n-2})$ with $M$ a
hyperplane in $\PP^n$ and $L_t$ hyperplanes in $(\PP^n)^\vee$: for a
generic such flag, the intersection consists of exactly
$\sum_j\pdeg_{n-2}(\Con(S_j))=\pdeg_{n-2}(\Gone(F))$ distinct
\emph{count points}
\[
  (x^*,\xi^*),\qquad x^*\in S_j^{\mathrm{good}},\quad
  \xi^*=[dF(x^*)],
\]
each a transverse, reduced intersection point of the flag with the
graph $\Con(S_j)^\circ$.  Transversality at $(x^*,\xi^*)$, transported
through the isomorphism $S_j^{\mathrm{good}}\cong\Con(S_j)^\circ$,
says precisely: writing $\ell_t$ for the linear forms cutting
$L_1\cap\cdots\cap L_{n-2}$ and $M$ for the linear form cutting the
hyperplane, the $n-1$ functions
\begin{equation}\label{eq:flagfunctions}
  M(x),\quad \ell_1\big(dF(x)\big),\ \dots,\
  \ell_{n-2}\big(dF(x)\big)
\end{equation}
vanish at $x^*$ and have linearly independent differentials on
$T_{x^*}S_j$.  This is the form in which transversality will be used
in Step~5.

\paragraph{Step 2: the kernel lift.}
Fix a count point $(x^*,\xi^*)$.  Since $dF(x^*)\ne0$, formula
\eqref{eq:jacobi} forces $\adj\widehat A(x^*)\ne0$, hence
$\rank\widehat A(x^*)\ge m-1$; since $F(x^*)=0$ the rank is exactly
$m-1$.  The left and right kernels are lines, spanned by $u^*,v^*$,
and as in the proof of Lemma~\ref{lem:dichotomy},
\begin{equation}\label{eq:conormalid}
  \adj\widehat A(x^*)=\alpha\,v^*(u^*)^{\mathsf T},\quad\alpha\ne0,
  \qquad
  \partial_iF(x^*)=\alpha\,(u^*)^{\mathsf T}\widehat A_i\,v^*\ \
  (0\le i\le n),
\end{equation}
so $\xi^*=\big[(u^*)^{\mathsf T}\widehat A_0v^*:\cdots:
(u^*)^{\mathsf T}\widehat A_nv^*\big]$.  Each count point therefore
lifts to the unique point
$P^*=(x^*,[u^*],[v^*])\in\PP^n\times\PP^{m-1}\times\PP^{m-1}$.

\paragraph{Step 3: choice of the reduction matrix $\Lambda$.}
The incidence conditions at a lift are $u^{\mathsf T}\widehat A(x)=0$
($m$ equations) and $\widehat A(x)v=0$ ($m$ equations), but the
incidence variety over the corank-one locus has codimension $2m-1$,
not $2m$, in $\PP^n\times\PP^{m-1}\times\PP^{m-1}$: one equation is
locally redundant.  We remove the redundancy on the \emph{right} side.
Choose a linear map $\Lambda:\CC^m\to\CC^{m-1}$, generic subject to
the finitely many open conditions
\begin{equation}\label{eq:lambdacond}
  \ker\Lambda\cap\im\widehat A(x^*)=0
  \qquad\text{for every count point }x^* ,
\end{equation}
each of which excludes a proper closed subset of $\Lambda$-space
because $\ker\Lambda$ is a line and $\im\widehat A(x^*)$ is a
hyperplane of $\CC^m$.  We also require, generically, that no row of
$\Lambda\widehat A$ vanishes identically as a form (the left
annihilators $\{\lambda:\lambda^{\mathsf T}\widehat A\equiv0\}$ form a
proper linear subspace since $\widehat A\not\equiv0$).  The
scheme-theoretic content of the reduction --- that near each lift the
$m-1$ equations $\Lambda\widehat Av=0$ generate the same ideal as the
full $\widehat Av=0$ --- is established by the explicit elimination in
Step~5(ii), where condition \eqref{eq:lambdacond} resurfaces as the
invertibility of a concrete $(m-1)\times(m-1)$ matrix.

\begin{remark}[Why a row may not simply be deleted]\label{rem:norow}
Replacing the $m$ equations $\widehat A v=0$ by $m-1$ of them ---
deleting a row instead of taking $m-1$ generic combinations --- is
unsound: deleting row $k$ is the choice $\Lambda=$ (projection killing
$e_k$), for which the matrix $M_v$ of Step~5(ii) is a fixed
$(m-1)\times(m-1)$ submatrix of $\widehat A$'s column block, which can
be singular at special $x^*$ even where $\widehat A$ has corank one;
the solution set then fattens to a positive-dimensional kernel through
the lift and Step~5 fails.  The generic-$\Lambda$ reduction keeps the
solution set equal to the kernel line near every count point.  The
same caution applies on the left: the left system is kept \emph{whole}
($m$ equations), because it is what excludes solutions lying over the
locus $\det\widehat A(x)\ne0$ in Step~5(i).
\end{remark}

\paragraph{Step 4: the square multihomogeneous system.}
On $Y=\PP^n\times\PP^{m-1}_{[u]}\times\PP^{m-1}_{[v]}$, with
hyperplane classes $H,U,V$ and $\dim Y=n+2m-2$, consider the system
\begin{equation}\label{eq:system}
\begin{array}{l@{\qquad}l@{\qquad}l}
  u^{\mathsf T}\widehat A(x)=0 & m\ \text{equations} & \text{class }H+U,\\[1pt]
  \Lambda\,\widehat A(x)\,v=0 & m-1\ \text{equations} & \text{class }H+V,\\[1pt]
  \ell_t\big(u^{\mathsf T}\widehat A_0v,\dots,
        u^{\mathsf T}\widehat A_nv\big)=0,\ \ 1\le t\le n-2
        & n-2\ \text{equations} & \text{class }U+V,\\[1pt]
  M(x)=0 & 1\ \text{equation} & \text{class }H,
\end{array}
\end{equation}
where $\ell_t,M$ are the flag forms of Step~1.  The count is
$m+(m-1)+(n-2)+1=n+2m-2=\dim Y$: the system is square.  The class
assignments: each entry of $u^{\mathsf T}\widehat A$ is bilinear in
$(x,u)$ and each entry of $\Lambda\widehat Av$ is bilinear in $(x,v)$;
each polar equation is
$\ell_t(\dots)=u^{\mathsf T}\big(\sum_i\lambda_{t,i}\widehat A_i\big)v$,
bilinear in $(u,v)$ and independent of $x$ --- a genuine nonzero
section of $\OO(0,1,1)$ for generic $\ell_t$, since
$\sum_i\lambda_i\widehat A_i=0$ defines a proper linear subspace of
covectors ($\widehat A\not\equiv0$), which the generic flag avoids.
Likewise each entry of $u^{\mathsf T}\widehat A$ is a nonzero section
(a zero entry would be a zero column of $\widehat A$, forcing
$F\equiv0$), and the rows of $\Lambda\widehat A$ were arranged nonzero
in Step~3.  All multidegrees are componentwise nonnegative, so
Lemma~\ref{lem:positivity} applies with
\[
  \int_Y c_{\mathrm{top}}
  =\big[H^nU^{m-1}V^{m-1}\big]\;
  H\,(H+U)^m(H+V)^{m-1}(U+V)^{n-2}
  \ =\ B(m,n),
\]
the equality of the two displayed brackets being the symmetry
$u\leftrightarrow v$ of the extraction (the closed form is computed in
Step~6).

\paragraph{Step 5: the lifts are reduced isolated points of the
solution scheme.}
Fix a count point with lift $P^*=(x^*,[u^*],[v^*])$.

\emph{(i) No solutions off the hypersurface.}  If
$\det\widehat A(x)\ne0$ then $u^{\mathsf T}\widehat A(x)=0$ forces
$u=0$, which is not a point of $\PP^{m-1}$.  This is why the left
system is kept whole.

\emph{(ii) Local normal form: the bilinear block is
scheme-theoretically a graph over the hypersurface germ.}
After constant row and column operations
$\widehat A\mapsto P\widehat AQ$ --- which multiply $F$ by the nonzero
constant $\det P\det Q$, transform
$u\mapsto P^{-\mathsf T}u$, $v\mapsto Q^{-1}v$,
$\Lambda\mapsto\Lambda P^{-1}$ equivariantly, and leave every pairing
$u^{\mathsf T}\widehat A_iv$ and condition \eqref{eq:lambdacond}
intact --- we may assume the leading $(m-1)\times(m-1)$ block of
$\widehat A(x^*)$ is invertible (some $(m-1)$-minor is nonzero at a
rank-$(m-1)$ matrix).  Work in an affine chart of $\PP^n$ at $x^*$ and
write
\[
  \widehat A=\begin{pmatrix}B&c\\ r&s\end{pmatrix},
  \qquad \det B\in\OO_{x^*}^{\times},
\]
with $B$ of size $m-1$ and all blocks affine-linear in the chart
coordinates.  Then $u^*_m\ne0$ and $v^*_m\ne0$: a kernel vector with
vanishing last coordinate is annihilated by the invertible $B$ and
dies.  So we may work in the charts $u_m=1$, $v_m=1$, writing
$u^{\mathsf T}=(u'^{\mathsf T},1)$, $v=(v',1)^{\mathsf T}$.  Set
\[
  g\ :=\ s-rB^{-1}c\ \in\OO_{x^*},
  \qquad
  z\ :=\ v'+B^{-1}c ,
\]
the Schur complement and the recentred right-kernel coordinate, and
recall the Schur identity
$\det\widehat A=(\det B)\,g$, so that $(F)=(g)$ in $\OO_{x^*}$.  In
the local ring $\OO_{P^*}$ (variables: chart coordinates of $x$, and
$u',v'$), the ideal $I$ of the bilinear block
$\{u^{\mathsf T}\widehat A=0,\ \Lambda\widehat Av=0\}$ is computed by
elimination.

\emph{Left block.}  The first $m-1$ left equations are
$u'^{\mathsf T}B+r=0$; since $B$ is invertible over $\OO_{x^*}$ they
generate the same ideal as the entries of
$u'^{\mathsf T}+rB^{-1}$.  Modulo these, the last left equation
becomes
$u'^{\mathsf T}c+s\ \equiv\ (-rB^{-1})c+s\ =\ g$.

\emph{Right block.}  Writing $v'=z-B^{-1}c$,
\[
  \widehat A\,v
  =\begin{pmatrix}Bv'+c\\ rv'+s\end{pmatrix}
  =\begin{pmatrix}Bz\\ rz+g\end{pmatrix},
  \qquad\text{hence}\qquad
  \Lambda\widehat A\,v\ =\ M_v(x)\,z\ +\ g\,\lambda ,
\]
an exact identity, where $\Lambda=[\Lambda_{\mathrm{top}}\mid\lambda]$
with $\lambda$ the last column of $\Lambda$, and
\[
  M_v(x)\ :=\ \Lambda\,C(x),\qquad
  C(x):=\begin{pmatrix}B\\ r\end{pmatrix}
  =\text{the first }m-1\text{ columns of }\widehat A .
\]
At $x^*$, $C(x^*)$ has rank $m-1$ (its top block is $B(x^*)$) and its
column space is therefore all of $\im\widehat A(x^*)$; so
condition~\eqref{eq:lambdacond} says \emph{exactly} that $\Lambda$ is
injective on the column space of $C(x^*)$, i.e.\ that $M_v(x^*)$ is
invertible.  Hence $M_v$ is invertible over $\OO_{x^*}$, and modulo
the generators already obtained,
$\Lambda\widehat Av\equiv M_vz$ generates the same ideal as the
entries of $z$.  Altogether
\begin{equation}\label{eq:normalform}
  I\ =\ \big(\,u'^{\mathsf T}+rB^{-1},\ \ v'+B^{-1}c,\ \ g\,\big):
\end{equation}
the bilinear block is, scheme-theoretically, the graph
\[
  x\ \longmapsto\ \big(x,\,[u(x)],\,[v(x)]\big),
  \qquad
  u(x)^{\mathsf T}=(-rB^{-1},\,1),\quad
  v(x)=(-B^{-1}c,\,1)^{\mathsf T},
\]
over the germ $V(g)$.  Since $(F)=(g)$ and $dF(x^*)\ne0$, that germ is
the smooth reduced hypersurface germ $(S_j,x^*)$, and the graph is a
smooth germ of dimension $n-1$ through $P^*$.

\emph{(iii) The conormal identity along the graph, and
transversality.}  Along $(S_j,x^*)$ the explicit sections satisfy
$\widehat A(x)v(x)=(Bz,rz+g)^{\mathsf T}\big|_{z=0,\,g=0}=0$ and
$u(x)^{\mathsf T}\widehat A(x)=(0,\,g)\big|_{g=0}=0$: they span the
kernels.  At each such point $\widehat A$ has rank $m-1$, so
$\adj\widehat A(x)=\alpha(x)\,v(x)u(x)^{\mathsf T}$; evaluating the
$(m,m)$ entries pins the scalar, since
$(\adj\widehat A)_{mm}=\det B$ and $(vu^{\mathsf T})_{mm}=1$, giving
$\alpha=\det B$, a unit.  Jacobi's formula \eqref{eq:jacobi} then
yields, on $(S_j,x^*)$ and for every $i$,
\[
  \partial_iF(x)\ =\ (\det B)(x)\cdot
  u(x)^{\mathsf T}\widehat A_i\,v(x).
\]
Consequently each polar equation of \eqref{eq:system} restricts on the
graph to $(\det B)^{-1}\,\ell_t(dF(x))$, a unit multiple of the
corresponding flag function \eqref{eq:flagfunctions}; both vanish at
$x^*$, so their differentials at $x^*$ agree up to the unit value
$(\det B)(x^*)^{-1}$.  The remaining equation $M(x)$ restricts to
itself.  By Step~1, the $n-1$ flag functions
\eqref{eq:flagfunctions} have linearly independent differentials on
$T_{x^*}S_j$.  Therefore the full system \eqref{eq:system} cuts, near
$P^*$, the smooth $(n-1)$-dimensional germ \eqref{eq:normalform} by
$n-1$ functions with independent differentials: a single reduced
point.  $P^*$ is a reduced isolated point of the solution scheme, and
distinct count points have distinct $x^*$, hence distinct lifts.

The role of the count-point condition is visible in the normal form:
the last left equation contributes the hypersurface equation $g$, and
the polar equations contribute transverse slices precisely because
$dF(x^*)\ne0$.  On a tangential component (Example~\ref{ex:tangency})
every lifted polar equation would restrict to zero along the graph and
no isolated point would form; the dichotomy of
Lemma~\ref{lem:dichotomy} is what the construction runs on.

\paragraph{Step 6: conclusion and the closed form.}
By Lemma~\ref{lem:positivity}, the number of isolated solutions of
\eqref{eq:system}, counted with multiplicity $\ge1$, is at most
$B(m,n)$; by Step~5 the $\pdeg_{n-2}(\Gone(F))$ lifts are among them,
each reduced.  Hence $\pdeg_{n-2}(\Gone(F))\le B(m,n)$.  For the
closed form, extract the bracket: choosing $x^i$ from $(x+u)^m$, $x^j$
from $(x+v)^{m-1}$, and $u^k$ from $(u+v)^{n-2}$ forces, by matching
exponents of $x,u,v$ in turn, $1+i+j=n$, $k=i-1$, and $j=n-1-i$, so
\[
  B(m,n)=\sum_{i=1}^{n-1}\binom{m}{i}\binom{m-1}{n-1-i}
  \binom{n-2}{i-1},
\]
the limits enforced by $k\ge0$ and $j\ge0$.
\qed

\subsection{Proof of Theorem~\ref{thm:detlemma}(ii)}
\label{subsec:symproof}

The symmetric proof is given in full; it runs on
$Y=\PP^n\times\PP^{m-1}_{[u]}$, of dimension $n+m-1$.

\paragraph{Step 1: count points.}
Step~1 of Section~\ref{subsec:nonsymproof} concerns only the cycle
$\Gone(F)$ and not the representation; it applies verbatim and
furnishes $\pdeg_{n-2}(\Gone(F))$ count points $(x^*,\xi^*)$ with
$x^*\in S_j^{\mathrm{good}}$, $\xi^*=[dF(x^*)]$, together with the
transversality statement for the flag functions
\eqref{eq:flagfunctions} on $T_{x^*}S_j$.

\paragraph{Step 2: the kernel lift.}
At a count point, $dF(x^*)\ne0$ and \eqref{eq:jacobi} force
$\adj\widehat A(x^*)\ne0$, hence $\rank\widehat A(x^*)=m-1$ as
before.  Since $\widehat A(x^*)$ is symmetric, its adjugate is a
nonzero \emph{symmetric} rank-one matrix whose image is the kernel
line, so
\[
  \adj\widehat A(x^*)=\beta\,u^*(u^*)^{\mathsf T},\quad\beta\ne0,
  \qquad
  \partial_iF(x^*)=\beta\,(u^*)^{\mathsf T}\widehat A_i\,u^*\ \
  (0\le i\le n),
\]
and the count point lifts to the unique
$P^*=(x^*,[u^*])\in\PP^n\times\PP^{m-1}$.

\paragraph{Step 3: squareness without a reduction.}
The kernel condition $\widehat A(x)u=0$ is $m$ equations, and the
system below has $m+(n-2)+1=n+m-1=\dim Y$ equations: square as it
stands.  No analogue of the $\Lambda$-reduction is needed --- the
elimination in Step~5(ii) shows the $m$ kernel equations cut exactly
codimension $m$ near the lift, the last of them supplying the
hypersurface equation --- and none is sound to add: removing an
equation enlarges the local solution set to the kernel of an
$(m-1)\times m$ submatrix, which can be positive-dimensional through
the lift, exactly as in Remark~\ref{rem:norow}.

\paragraph{Step 4: the system and its classes.}
The equations are
\begin{equation}\label{eq:symsystem}
\begin{array}{l@{\qquad}l@{\qquad}l}
  \widehat A(x)\,u=0 & m\ \text{equations} & \text{class }H+U,\\[1pt]
  \ell_t\big(u^{\mathsf T}\widehat A_0u,\dots,
        u^{\mathsf T}\widehat A_nu\big)=0,\ \ 1\le t\le n-2
        & n-2\ \text{equations} & \text{class }2U,\\[1pt]
  M(x)=0 & 1\ \text{equation} & \text{class }H,
\end{array}
\end{equation}
with $\ell_t,M$ the flag forms of Step~1.  Each kernel entry is
bilinear in $(x,u)$; each polar equation is
$u^{\mathsf T}(\sum_i\lambda_{t,i}\widehat A_i)u$, a quadratic form in
$u$ alone.  Nonzero sections: a vanishing row of $\widehat A$ forces
$F\equiv0$; and a flag form with
$u^{\mathsf T}(\sum_i\lambda_i\widehat A_i)u\equiv0$ forces
$\sum_i\lambda_i\widehat A_i=0$ --- a symmetric matrix over $\CC$
whose quadratic form vanishes identically is zero in characteristic
zero --- which is a proper linear condition on $\lambda$ avoided by
the generic flag.  All multidegrees are componentwise nonnegative.

\paragraph{Step 5: the lifts are reduced isolated points.}
Fix a count point with lift $P^*=(x^*,[u^*])$.

\emph{(i) No solutions off the hypersurface.}  Where
$\det\widehat A(x)\ne0$, $\widehat A(x)u=0$ forces $u=0$, not a point
of $\PP^{m-1}$.

\emph{(ii) Local normal form.}  After a constant congruence
$\widehat A\mapsto P^{\mathsf T}\widehat AP$ --- which multiplies $F$
by $(\det P)^2\ne0$, transforms $u\mapsto P^{-1}u$, preserves symmetry
and every quadratic pairing
$u^{\mathsf T}\widehat A_iu$ --- assume the leading
$(m-1)\times(m-1)$ block is invertible at $x^*$, and write, in an
affine chart at $x^*$,
\[
  \widehat A=\begin{pmatrix}B&c\\ c^{\mathsf T}&s\end{pmatrix},
  \qquad \det B\in\OO_{x^*}^{\times},
  \qquad B=B^{\mathsf T}.
\]
Then $u^*_m\ne0$ (a kernel vector with last coordinate zero dies
against $B$); work in the chart $u_m=1$, $u=(u',1)^{\mathsf T}$.  Set
$g:=s-c^{\mathsf T}B^{-1}c$ and $z:=u'+B^{-1}c$, so
\[
  \widehat A\,u
  =\begin{pmatrix}Bu'+c\\ c^{\mathsf T}u'+s\end{pmatrix}
  =\begin{pmatrix}Bz\\ c^{\mathsf T}z+g\end{pmatrix},
  \qquad
  \det\widehat A=(\det B)\,g .
\]
Since $B$ is invertible over $\OO_{x^*}$, the first $m-1$ entries
generate the ideal of the entries of $z$, and modulo these the last
entry becomes $g$.  Hence the ideal of the kernel block in
$\OO_{P^*}$ is exactly
\[
  \big(\,u'+B^{-1}c,\ \ g\,\big):
\]
the kernel block is, scheme-theoretically, the graph
$u=\nu(x):=(-B^{-1}c,\,1)^{\mathsf T}$ over the germ $V(g)$, which
equals the smooth reduced germ $(S_j,x^*)$ because $(F)=(g)$ and
$dF(x^*)\ne0$; a smooth germ of dimension $n-1$ through $P^*$.

\emph{(iii) Conormal identity and transversality.}  Along
$(S_j,x^*)$ the section $\nu(x)$ spans the kernel
($\widehat A\nu=(Bz,c^{\mathsf T}z+g)^{\mathsf T}|_{z=0,g=0}=0$), the
rank is $m-1$, and the symmetric adjugate is a scalar multiple of
$\nu\nu^{\mathsf T}$; the $(m,m)$ cofactor is $\det B$ while
$(\nu\nu^{\mathsf T})_{mm}=1$, so
$\adj\widehat A=(\det B)\,\nu\nu^{\mathsf T}$ on $(S_j,x^*)$, and
Jacobi's formula \eqref{eq:jacobi} gives
\[
  \partial_iF(x)\ =\ (\det B)(x)\cdot
  \nu(x)^{\mathsf T}\widehat A_i\,\nu(x)
  \qquad\text{on }(S_j,x^*).
\]
Each polar equation of \eqref{eq:symsystem} therefore restricts on the
graph to $(\det B)^{-1}\ell_t(dF(x))$, a unit multiple of the flag
function, vanishing at $x^*$; $M(x)$ restricts to itself; and by
Step~1 the $n-1$ flag functions \eqref{eq:flagfunctions} have
independent differentials on $T_{x^*}S_j$.  The full system
\eqref{eq:symsystem} thus cuts, near $P^*$, the smooth
$(n-1)$-dimensional graph by $n-1$ functions with independent
differentials: a single reduced point.  Distinct count points give
distinct lifts.

\paragraph{Step 6: conclusion and the coefficient.}
Lemma~\ref{lem:positivity} with
$E=\OO(1,1)^{\oplus m}\oplus\OO(1,0)\oplus\OO(0,2)^{\oplus(n-2)}$
bounds the isolated solutions of \eqref{eq:symsystem} by
\[
  \big[H^nU^{m-1}\big]\,H\,(H+U)^m(2U)^{n-2}
  \ =\ 2^{n-2}\,\big[H^{n-1}U^{m-n+1}\big](H+U)^m
  \ =\ 2^{n-2}\binom{m}{n-1},
\]
and by Step~5 the $\pdeg_{n-2}(\Gone(F))$ lifts are among them, each
reduced.
\qed

\begin{remark}[What multiplicity-$\ge2$ components do, and why no
codimension hypothesis is needed]\label{rem:excess}
Components of $V(F)$ of multiplicity $\ge2$ are of two kinds by
Lemma~\ref{lem:dichotomy}: \emph{corank-$\ge2$} components, over which
the kernel fibres of the incidence are positive-dimensional, and
\emph{tangential} corank-one components (Example~\ref{ex:tangency}),
over which the kernel incidence is still a graph but every lifted
polar equation vanishes identically.  Both populate the solution
schemes of \eqref{eq:system} and \eqref{eq:symsystem} with
positive-dimensional excess components.  Neither is bounded, located,
nor assumed away: Lemma~\ref{lem:positivity} charges excess components
a nonnegative amount and the isolated lifts at most the stated
coefficient, which is all that is used.  In particular the
corank-$\ge2$ locus may be a divisor in $V(F)$ --- e.g.\
$\widehat A=\mathrm{diag}(x_1,\dots,x_n,x_0,\dots,x_0)$ has corank
$\ge2$ along every $\{x_i=x_j=0\}$ --- and no hypothesis excludes
this.
\end{remark}

\begin{remark}[Recovery of the exact companions]\label{rem:recover}
Let $f$ be homogeneous of degree $d\ge2$ in $N\ge3$ variables with
$V(f)\subset\PP^{N-1}$ \emph{smooth}, and $f=\det M$ with $M$ of size
$m$.  Then $F:=\det\widehat M=x_0^{\,m-d}f$ on $\PP^N$, the cone
$V(f)\subset\PP^N$ is a multiplicity-one component of $V(F)$ ($f$ is
squarefree and $x_0\nmid f$), so
$\Gone(F)\supseteq\Con(\mathrm{cone}\ V(f))$ as cycles, and
Theorem~\ref{thm:detlemma}(i) with $n=N$ plus the cone-shift identity
(Proposition~\ref{prop:coneshift}, with Lemma~\ref{lem:dualdeg})
gives
\[
  d(d-1)^{N-2}=\deg V(f)^\vee
  =\pdeg_{N-2}\big(\Con(\mathrm{cone}\ V(f))\big)
  \le\pdeg_{N-2}\big(\Gone(F)\big)\le B(m,N),
\]
which is the main inequality of \cite{SheshadriDC} (the coefficient
polynomials agree under the reindexing $a=i-1$); the symmetric variant
recovers that of \cite{SheshadriSDC} the same way.  The hypotheses
here are weaker --- smoothness of $V(\det\widehat A)$ is nowhere
assumed --- which is exactly the strengthening the border transfer
requires, since the generic members of a degenerating family come with
no smoothness guarantee.
\end{remark}

\begin{remark}[The reduced conormal]\label{rem:gammared}
For squarefree $F$, $\Gone(F)$ is the full reduced conormal cycle and
Theorem~\ref{thm:detlemma} bounds it.  Whether
$\pdeg_{n-2}$ of the reduced conormal cycle of $V(\det\widehat
A)_{\mathrm{red}}$ obeys the same bound \emph{without} squarefreeness
is open; the natural attack (perturb $\widehat A$ to make the
determinant squarefree and specialize back) is itself a degeneration
argument of the kind in Sections~\ref{sec:family}--\ref{sec:capture},
and we record the question rather than entangle the two halves of this
paper.  Nothing below needs it: the border chain is built on $\Gone$
on both sides by design.
\end{remark}

\section{Border normal form}\label{sec:normalform}

From here through Section~\ref{sec:proofs}, fix $n\ge3$ and suppose
$\dcb(F_n)\le m$ (the symmetric case is identical with
$D_m^{\mathrm{sym}}$ and is taken up in Section~\ref{sec:proofs}).

\begin{lemma}[One-parameter normal form, fiberwise representations]
\label{lem:normalform}
There exist a smooth irreducible affine curve $C$ over $\CC$, a closed
point $0\in C$, and a family
$G\in\OO(C)[x_1,\dots,x_n]_{\le m}$ with regular coefficient
functions, such that:
\begin{enumerate}[label=(\alph*),leftmargin=2.2em]
\item $G_0=F_n$;
\item there is a dense open $C^\circ\subseteq C\smallsetminus\{0\}$
such that for every closed point $c\in C^\circ$ there exists
$A_c\in\Mat_m(\CC[x]_{\le1})$ with $\det A_c=G_c$ \emph{exactly}.
\end{enumerate}
\end{lemma}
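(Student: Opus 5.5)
The plan is the standard curve selection argument for a point in the closure of the image of a polynomial map. Write $\phi:\AAff:=(\CC^{m\times m})^{n+1}\to\CC[x]_{\le m}$, $(A_0,\dots,A_n)\mapsto\Det(A_0+\sum_ix_iA_i)$. Then $D_m=\phi(\AAff)$, and $F_n\in\overline{D_m}$ by the standing assumption $\dcb(F_n)\le m$. Set $W:=\overline{D_m}$, which is irreducible because $\AAff$ is. If $F_n\in D_m$, we take $C=\AAff^1$ with the constant family $G_c=F_n$; then (a) and (b) hold trivially, with $A_c$ a fixed representation. So assume $F_n\in W\smallsetminus D_m$.

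\textbf{Step 1 (dense open inside the image).} By Chevalley's theorem $D_m$ is constructible and dense in $W$. Hence it contains a nonempty Zariski open $U\subseteq W$. This is the input already used in Lemma~\ref{lem:closure}.

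\textbf{Step 2 (curve through the boundary point).} Next we find an irreducible curve $C'\subseteq W$ that passes through $F_n$ and meets $U$. This is the standard curve-selection fact: given a point $p$ of an irreducible affine variety $W$ and a nonempty open $U$, some irreducible curve through $p$ meets $U$. To get it, cut $W$ by $\dim W-1$ generic hyperplanes through $p$. By Bertini-type dimension counting the resulting component through $p$ is a curve that is not contained in the proper closed set $W\smallsetminus U$. One can argue inductively: a generic hyperplane section through $p$ has an irreducible component through $p$ of dimension one less that still meets $U$, because a generic hyperplane through $p$ does not contain any component of $W\smallsetminus U$ that passes through $p$ and has dimension $\dim W-1$.

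\textbf{Step 3 (normalize).} Let $\nu:C\to C'$ be the normalization, restricted if necessary to an affine open neighbourhood of a chosen preimage $0\in\nu^{-1}(F_n)$. Then $C$ is a smooth irreducible affine curve. The composite $C\to C'\hookrightarrow\CC[x]_{\le m}$ is a morphism, so its coordinates are regular functions on $C$. These give $G\in\OO(C)[x]_{\le m}$ with $G_0=F_n$, which is (a). Put $C^\circ:=\nu^{-1}(U)\smallsetminus\{0\}$. This set is nonempty and open because $C'$ meets $U$, hence dense in the irreducible curve $C$. For $c\in C^\circ$ we have $G_c\in U\subseteq D_m$, so some $A_c$ satisfies $\det A_c=G_c$ exactly, which is (b).

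The main obstacle is Step 2, namely making sure the curve through $F_n$ is not swallowed by the boundary $W\smallsetminus U$. If the hyperplane induction gets delicate, I would instead cite the curve selection lemma as in Mumford \cite{Mumford99}. Alternatively, I would use the valuative formulation: $F_n$ lies in the closure of $\phi(\AAff)$, so there is a map from $\mathrm{Spec}$ of a DVR (for instance $\CC[[t]]$) to $\AAff$ whose generic point maps to $U$ and whose limit is $F_n$. Spreading that map out over a finitely generated smooth curve then yields $C$. The symmetric case is verbatim with $D_m^{\mathrm{sym}}$.
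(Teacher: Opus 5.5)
Your proposal is correct and follows the paper's proof essentially step for step. Both use the constant family when $F_n\in D_m$; otherwise they select a curve through $F_n$ in $\overline{D_m}$ that meets the image densely, normalize and restrict to an affine neighbourhood of a preimage $0$ of $F_n$, and take $C^\circ$ to be the preimage of a dense open part of $D_m$.

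The differences are only organizational. You fix the open $U\subseteq D_m$ in $\overline{D_m}$ before choosing the curve, whereas the paper passes to a dense open subset of $\nu^{-1}(C'\cap D_m)$ afterwards. You also prove the curve-selection step by a generic-hyperplane-through-$F_n$ induction, whereas the paper cites curve selection from Mumford and B\"urgisser. Your induction is sound: for $\dim W\ge2$, a generic hyperplane through the point contains none of the finitely many $(\dim W-1)$-dimensional components of $W\smallsetminus U$, so some component of the section through the point still meets $U$.
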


\begin{proof}
If $F_n\in D_m$, take $C=\AAff^1$, $G$ the constant family, and
$A_c$ a fixed representation.  Otherwise
$F_n\in\overline{D_m}\smallsetminus D_m$.  By algebraic curve
selection through a constructible set
\cite[I.10]{Mumford99} (in the border-complexity setting,
\cite[\S20.6]{Buergisser00}), since $D_m$ is irreducible constructible
in the affine space $\CC[x]_{\le m}$ and $F_n\in\overline{D_m}$, there
is an irreducible affine curve $C'\subseteq\overline{D_m}$ with
$F_n\in C'$ and $C'\cap D_m$ dense in $C'$.  Let $\nu:C\to C'$ be the
normalization restricted to an affine neighborhood of a point
$0\in\nu^{-1}(F_n)$; $C$ is a smooth irreducible affine curve.  The
composite $C\to C'\hookrightarrow\CC[x]_{\le m}$ is a morphism to
affine space, i.e.\ a tuple of regular functions on $C$: these are the
coefficients of $G$, and $G_0=F_n$.  The set $\nu^{-1}(C'\cap D_m)$ is
dense constructible in the curve $C$, hence contains a dense open,
which we shrink to a $C^\circ$ avoiding $0$.  For $c\in C^\circ$, the
statement $G_c\in D_m$ \emph{is}, by definition of $D_m$, the
existence of an exact affine-linear representation $A_c$ over $\CC$.
\qed
\end{proof}

\begin{remark}[What the normal form does \emph{not} provide]
\label{rem:fiberwise}
No matrix family over $\OO(C^\circ)$, $\CC[t,t^{-1}]$, or
$\CC((t))$ is constructed, because nothing downstream consumes one:
Sections~\ref{sec:homog}--\ref{sec:proofs} use only the regular
polynomial family $G$ and the fiberwise representations $A_c$, with no
compatibility imposed across fibres.  This removes any
field-of-definition or Laurent-truncation step from the normal form.
The single base change in the proof (the normalization $\nu$) alters
neither the hypothesis --- $F_n\in\overline{D_m}$ concerns the fixed
set $D_m$ --- nor the parameter $m$.
\end{remark}

\paragraph{Conventions for the family.}
Fix once and for all a finite subset $\Sigma\subset C$ of
\emph{excluded points}, enlarged finitely many times in
Lemmas~\ref{lem:nonzero}, \ref{lem:family},
and~\ref{lem:genericfibers} below, and set
$U:=C^\circ\smallsetminus\Sigma$, always cofinite in $C^\circ$ and in
particular infinite.  Statements ``for $c\in U$'' hold for every
closed point of $U$.  Analytic limits ``$c\to0$'' are taken along an
analytic disk in $C^{\mathrm{an}}$ centered at $0$; since $\Sigma$ is
finite, every punctured analytic neighborhood of $0$ meets $U$ in a
set accumulating at $0$.

\begin{lemma}\label{lem:nonzero}
After enlarging $\Sigma$: $G_c\ne0$ for every $c\in U$.
\end{lemma}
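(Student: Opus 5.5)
The plan is to exhibit the set of points where $G$ vanishes as a closed subset of $C$ that misses $0$. Write $G=\sum_{|\alpha|\le m}g_\alpha x^\alpha$ with regular coefficient functions $g_\alpha\in\OO(C)$, as provided by Lemma~\ref{lem:normalform}. The locus
\[
  Z\ :=\ \{c\in C:\ G_c=0\}\ =\ \bigcap_{\alpha}V(g_\alpha)
\]
is a Zariski-closed subset of $C$.

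Next, $Z$ is a proper subset of $C$. By Lemma~\ref{lem:normalform}(a), $G_0=F_n\ne0$, so $0\notin Z$. More concretely, the coefficient $g_{(n,0,\dots,0)}$ takes the value $1$ at $0$, so it is not the zero function.

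Since $C$ is an irreducible curve, a proper closed subset of it is finite. Hence $Z$ is finite, and in fact $Z\subseteq V(g_{(n,0,\dots,0)})$, which is already finite.

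Finally, enlarge $\Sigma$ to $\Sigma\cup Z$. This is still a finite set. It does not contain $0$, which is harmless since $0\notin C^\circ$ anyway. After this enlargement $U=C^\circ\smallsetminus\Sigma$ remains cofinite in $C^\circ$, and $G_c\ne0$ for every $c\in U$.

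There is no real obstacle here. The only point to check is that $Z$ is closed, and this is immediate because the coefficients of $G$ are regular functions on $C$, not merely defined fiberwise.
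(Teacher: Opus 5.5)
Your proof is correct and follows essentially the same route as the paper: the coefficients of $G$ are regular functions on the irreducible curve $C$, one of them (the coefficient of $x_1^n$) equals $1$ at $0$, so the locus where all coefficients vanish is finite and can be added to $\Sigma$. Naming that specific nonzero coefficient is only a slightly more concrete version of the paper's ``not all identically zero.''
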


\begin{proof}
The coefficients of $G$ are regular on $C$ and not all identically
zero ($G_0=F_n\ne0$); a nonzero regular function on a curve vanishes
at finitely many points.  (Note $0\in D_m$, so membership in $D_m$
alone would not exclude $G_c=0$; this lemma does.)
\qed
\end{proof}

\section{Homogenization at degree $m$}\label{sec:homog}

\begin{lemma}\label{lem:mgen}
$m\ge n$.
\end{lemma}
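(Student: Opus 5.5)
We must show $m\ge n$ under the standing hypothesis $\dcb(F_n)\le m$, i.e.\ $F_n\in\overline{D_m}$. The quickest route is a degree count. Every element of $D_m$ is a polynomial of degree at most $m$, so $\overline{D_m}\subseteq\CC[x]_{\le m}$ by construction. Since $F_n$ has degree exactly $n$ and lies in $\overline{D_m}\subseteq\CC[x]_{\le m}$, we get $n\le m$.

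To tie this to the family of Lemma~\ref{lem:normalform} explicitly, I would argue directly there. The family $G$ has coefficients in $\OO(C)[x]_{\le m}$, so $G_0=F_n$ has degree at most $m$. Its coefficient of $x_1^n$ is $1\neq 0$, so that monomial must occur in $\CC[x]_{\le m}$, forcing $n\le m$.

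Alternatively, if one prefers to avoid reading the degree off the ambient space, pick $c\in U$ near $0$. The coefficient of $x_1^n$ in $G_c$ is a regular function on $C$ taking the value $1$ at $0$, so after enlarging $\Sigma$ it is nonzero on $U$. Then $G_c=\det A_c$ has degree $\ge n$. But the determinant of an $m\times m$ matrix of affine-linear forms has degree $\le m$, so again $m\ge n$.

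There is no real obstacle here. The only care needed is to note that the statement does not require any geometry of Section~\ref{sec:detlemma}. It is purely the degree bound $\deg\det A\le m$, passed to the closure because the degree-$\le m$ condition is closed, and indeed built into the ambient space.
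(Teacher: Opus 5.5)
Your proposal is correct, and your second paragraph is essentially the paper's argument: if $m<n$, the coefficient of $x_1^n$ in the family $G\in\OO(C)[x]_{\le m}$ vanishes identically on $C$, which contradicts $G_0=F_n$. Your first route is the same degree count read directly from $\overline{D_m}\subseteq\CC[x]_{\le m}$, and your third route is valid but unnecessary.
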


\begin{proof}
If $m<n$, then $\deg_xG_c\le m<n$ for every $c\in C$ --- the
coefficient function of the monomial $x_1^n$ is identically zero on
$C$ --- contradicting $G_0=F_n$.
\qed
\end{proof}

\begin{definition}\label{def:homog}
$\widehat G:=x_0^{\,m}\,G(x/x_0)\in\OO(C)[x_0,\dots,x_n]_m$, the
\emph{degree-$m$ homogenization} of the family.  Its coefficients are
a reindexing of the coefficients of $G$, hence regular on $C$, and in
the chart $x_0=1$ one has $\widehat G_c=G_c$.
\end{definition}

\begin{lemma}[Fiberwise determinantal identity]\label{lem:fiberdet}
For $c\in C^\circ$:\ \ $\widehat G_c=\det\widehat A_c$, where
$\widehat A_c(x_0,x):=x_0A_c(x/x_0)$ is an $m\times m$ matrix of
linear forms on $\PP^n$ (symmetric if $A_c$ is).
\end{lemma}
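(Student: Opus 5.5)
The plan is to verify the identity by direct computation with multilinearity of the determinant, for a fixed closed point $c\in C^\circ$. By Lemma~\ref{lem:normalform}(b) we have $\det A_c=G_c$ exactly, with $A_c(x)=A_{c,0}+\sum_i x_iA_{c,i}$. First I would note that $\widehat A_c(x_0,x)=x_0A_{c,0}+\sum_ix_iA_{c,i}$ is a matrix of linear forms. If $A_c$ is symmetric then every $A_{c,i}$ is symmetric, since they are recovered as coefficients of the entries. The same therefore holds for $\widehat A_c$, which disposes of the parenthetical claim.

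Next, as a polynomial identity in $\CC[x_0,x_1,\dots,x_n][x_0^{-1}]$, I would write $\widehat A_c=x_0\cdot A_c(x/x_0)$ entrywise. The determinant of an $m\times m$ matrix is homogeneous of degree $m$ in the entries, so this gives
\[
  \det\widehat A_c\ =\ x_0^{\,m}\,\det A_c(x/x_0)\ =\ x_0^{\,m}\,G_c(x/x_0).
\]
The right-hand side is exactly $\widehat G_c$ by Definition~\ref{def:homog}. Here one must check that specializing the family at $c$ commutes with homogenization. This holds because the coefficients of $\widehat G$ are the coefficients of $G$ reindexed, so evaluating them at $c$ gives the degree-$m$ homogenization of $G_c$; this uses $\deg G_c\le m$, which holds since $G\in\OO(C)[x]_{\le m}$.

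Both sides are genuine polynomials: $\det\widehat A_c$ trivially, and $\widehat G_c$ because $\deg G_c\le m$. An equality in the localization $\CC[x_0,\dots,x_n][x_0^{-1}]$ between elements of the polynomial ring is an equality of polynomials, since that ring is a domain and localization is injective. This concludes the argument. There is no real obstacle. The only point requiring care is that the homogenization degree is $m$, the matrix size, and not $\deg G_c$, which may be smaller. This is exactly why both sides carry the factor $x_0^{m}$ consistently, and it should be stated explicitly.
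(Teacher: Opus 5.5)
Your proposal is correct and is the same computation as the paper's proof: pull the scalar $x_0$ out of the $m\times m$ determinant to get $\det\widehat A_c=x_0^{\,m}\det A_c(x/x_0)=x_0^{\,m}G_c(x/x_0)=\widehat G_c$. Your extra remarks are valid details that the paper leaves implicit: that the coefficient matrices are symmetric, that specializing at $c$ commutes with the degree-$m$ homogenization, and that localizing at $x_0$ is injective.
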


\begin{proof}
$\det\widehat A_c=\det\big(x_0A_c(x/x_0)\big)
=x_0^m\det A_c(x/x_0)=x_0^mG_c(x/x_0)=\widehat G_c$.
\qed
\end{proof}

\begin{remark}[The degree at which one homogenizes is load-bearing]
\label{rem:degreem}
The determinant of the homogenized matrix is the \emph{degree-$m$}
homogenization of $G_c$ --- never the homogenization at the generic
affine degree $d:=\deg_xG_c$, from which it differs by the factor
$x_0^{\,m-d}$.  An intermediate draft of this work homogenized at
degree $d$ and asserted the identity of Lemma~\ref{lem:fiberdet} for
that object; the assertion is false whenever $d<m$, and the error was
caught in an adversarial review pass
(Appendix~\ref{app:methodology}; Appendix~\ref{app:referee}).  Two
repairs exist and agree slot-for-slot: (a)~homogenize at degree $m$,
as here, which makes the integer $d$ disappear from the paper; or
(b)~homogenize at degree $d$ and observe that
$\det\widehat A_c=x_0^{\,m-d}\widehat G^{(d)}_c$ differs from
$\widehat G^{(d)}_c$ only by a hyperplane component whose conormal
$V(x_0)\times\{[1:0:\cdots:0]\}$ has $\pdeg_k=0$ for every $k\ge1$
(the dual factor is a point), hence $\pdeg_{n-2}=0$ for $n\ge3$, so
the two $\Gone$-multidegrees in slot $n-2$ coincide.  Route~(a) is
adopted because it needs no patch.
\end{remark}

\begin{lemma}[The special fibre]\label{lem:specialfiber}
$\widehat G_0=x_0^{\,m-n}\,\widetilde F_n$, and the Fermat cone
$X=V(\widetilde F_n)$ is a \emph{multiplicity-one} component of the
divisor $V(\widehat G_0)$.
\end{lemma}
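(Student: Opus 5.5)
The plan has two parts: first compute $\widehat G_0$ directly from Definition~\ref{def:homog} and Lemma~\ref{lem:normalform}(a), then read off the multiplicity of $X$ from the factorization.

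\emph{The formula.} Since $G_0=F_n$, the degree-$m$ homogenization is
\[
  \widehat G_0=x_0^{\,m}F_n(x/x_0)=x_0^{\,m}\cdot x_0^{-n}\textstyle\sum_i x_i^n=x_0^{\,m-n}\widetilde F_n .
\]
This is a genuine polynomial because $m\ge n$ by Lemma~\ref{lem:mgen}. The only point to check is that specializing the coefficient functions at $0$ commutes with the reindexing in Definition~\ref{def:homog}, and it does: homogenization is a linear reindexing of coefficients.

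\emph{Multiplicity one.} I would factor $\widehat G_0$ into irreducibles as in Definition~\ref{def:gone}. The first claim is that $\widetilde F_n=\sum_{i=1}^n x_i^n$ is irreducible in $\CC[x_0,\dots,x_n]$. It does not involve $x_0$, so any factorization in $\CC[x_0,\dots,x_n]$ is already a factorization in $\CC[x_1,\dots,x_n]$. In that ring $F_n$ defines the smooth Fermat hypersurface $Z\subset\PP^{n-1}$. Since $n\ge3$, $Z$ has dimension $\ge1$, and a nontrivial factorization $F_n=ab$ would give two hypersurfaces $V(a)$ and $V(b)$ that meet (Bezout/dimension count in $\PP^{n-1}$ with $n-1\ge2$). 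At a point of $V(a)\cap V(b)$ we have $dF_n=a\,db+b\,da=0$, so $Z$ would be singular there. But $Z$ is smooth, because the partials $nx_i^{n-1}$ vanish simultaneously only at $0$. Hence $F_n$ is irreducible, and in particular squarefree. A more direct alternative is to note that $\sum x_i^n$ is an irreducible polynomial for $n\ge3$ by an Eisenstein-type argument in $x_n$ over $\CC[x_1,\dots,x_{n-1}]$, using that $\sum_{i<n}x_i^n$ is squarefree.

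The second claim is that $x_0\nmid\widetilde F_n$, which is clear since $\widetilde F_n$ is nonzero and free of $x_0$; so $x_0$ and $\widetilde F_n$ are non-associate irreducibles. The factorization is therefore $\widehat G_0=x_0^{\,m-n}\cdot\widetilde F_n^{\,1}$. In it, $X=V(\widetilde F_n)$ appears with exponent exactly one, and the only other component is the hyperplane $V(x_0)$, present when $m>n$. By Lemma~\ref{lem:gonestructure}, this is precisely the statement that $X$ is a multiplicity-one component.

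\emph{Main obstacle.} The only nontrivial step is the irreducibility of the Fermat form, which uses $n\ge3$ (for $n=2$, $x_1^2+x_2^2$ factors). I would settle it with the smoothness argument above: a product decomposition forces singular points where the factors meet.
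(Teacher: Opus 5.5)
Your proposal is correct and follows essentially the same route as the paper: the same direct computation $\widehat G_0=x_0^{\,m}F_n(x/x_0)=x_0^{\,m-n}\widetilde F_n$ using $m\ge n$, and the same use of smoothness of $Z$ in $\PP^{n-1}$ with $n-1\ge2$ to get irreducibility of the Fermat form, together with $x_0\nmid\widetilde F_n$. The paper phrases the irreducibility step as ``smooth and connected, hence irreducible''; your argument that two factors would meet at a point where $dF_n$ vanishes is the unpacked form of that step, and it gives squarefreeness at the same time.
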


\begin{proof}
$\widehat G_0=x_0^mF_n(x/x_0)=x_0^{\,m-n}\widetilde F_n$ since $F_n$
is homogeneous of degree $n$ and $m\ge n$
(Lemma~\ref{lem:mgen}).  The form $\widetilde F_n$ is irreducible: the
hypersurface $Z=V(F_n)\subset\PP^{n-1}$ is smooth
(Section~\ref{sec:coneshift}) and connected (a hypersurface in
$\PP^{n-1}$, $n-1\ge2$), hence irreducible, so its squarefree defining
form is irreducible, and the cone has the same defining form.  Also
$x_0\nmid\widetilde F_n$.  Hence the irreducible factorization of
$\widehat G_0$ is $x_0^{\,m-n}\cdot\widetilde F_n$, and the Fermat
component carries multiplicity exactly one: all padding multiplicity
sits on the hyperplane at infinity, none on the Fermat component,
whatever $m-n\ge0$ is.
\qed
\end{proof}

\section{The Gauss-graph family and conservation of multidegree}
\label{sec:family}

For $c\in U$ set $\Gamma_c:=\Gone(\widehat G_c)$, the cycle of
Definition~\ref{def:gone} for the fibre.  Define the locally closed
algebraic set
\[
  \Gfam^\circ:=\big\{(x,\xi,c)\in
  \PP^n\times(\PP^n)^\vee\times C^\circ\ :\
  \widehat G_c(x)=0,\ d\widehat G_c(x)\ne0,\
  \xi=[d\widehat G_c(x)]\big\},
\]
with the reduced structure: it is the graph of the morphism
$(x,c)\mapsto[d\widehat G_c(x)]$ over the open subset
$\{d\widehat G\ne0\}$ of the closed set
$\{\widehat G=0\}\subset\PP^n\times C^\circ$, the coefficients of
$\widehat G$ and its partials being regular in $c$.

\begin{lemma}[The open graph is smooth over the curve]
\label{lem:smoothgraph}
The projection $\Gfam^\circ\to C^\circ$ is a smooth morphism of
relative dimension $n-1$.  In particular $\Gfam^\circ$ is reduced, and
its fibres are smooth of pure dimension $n-1$.
\end{lemma}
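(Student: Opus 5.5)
The plan is to show that the family $W:=\{(x,c):\widehat G(x,c)=0,\ d_x\widehat G_c(x)\ne0\}\subset\PP^n\times C^\circ$ is smooth over $C^\circ$ of relative dimension $n-1$. We then transport this to $\Gfam^\circ$, since $\Gfam^\circ$ is the graph of a morphism on $W$. Here $d_x$ denotes the differential in the $\PP^n$-directions only; this is what $d\widehat G_c$ means in the definition of $\Gfam^\circ$.

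The first step is the graph isomorphism. The map $(x,c)\mapsto[d\widehat G_c(x)]$ is a morphism on the open set $\{d_x\widehat G_c\ne0\}$, because its components $\partial_i\widehat G$ are bihomogeneous forms with coefficients regular on $C$. Hence the projection $\Gfam^\circ\to W$, forgetting $\xi$, is an isomorphism, with inverse given by the graph map, and it commutes with the maps to $C^\circ$. So it suffices to prove that $W\to C^\circ$ is smooth of relative dimension $n-1$, where $W$ carries its scheme structure as an open subscheme of the hypersurface $\{\widehat G=0\}$.

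The second step is the Jacobian criterion, applied relative to $C^\circ$. We work locally on an affine chart $\{x_a\ne0\}$ of $\PP^n$, so locally on $\PP^n\times C^\circ$ the ambient scheme is $\AAff^n\times C^\circ$, which is smooth over $C^\circ$ of relative dimension $n$. Inside it, $W$ is cut out by the single equation $g:=\widehat G(x,c)|_{x_a=1}$. By the Euler relation $\sum_i x_i\partial_i\widehat G_c=m\,\widehat G_c$, at any point where $\widehat G_c(x)=0$ and $d\widehat G_c(x)\ne0$ some partial $\partial_i\widehat G_c$ with $i\ne a$ is nonzero. If only $\partial_a$ were nonzero, Euler with $x_a=1$ would give $\partial_a\widehat G_c=m\widehat G_c-\sum_{i\ne a}x_i\partial_i\widehat G_c=0$, a contradiction. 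So the relative differential $d_{x}g$ in the chart coordinates is nonzero on the fibre. The relative Jacobian criterion (EGA IV 17.15.15, or Hartshorne III.10.4 applied fibrewise together with flatness) then says that a hypersurface in a smooth $S$-scheme whose equation has nonvanishing relative differential at a point is smooth over $S$ at that point, of relative dimension one less.

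Flatness is where care is needed, and it is the step I expect to be the main obstacle. It can be obtained directly from local criteria: $g$ is a nonzerodivisor on every fibre near such a point, since the fibre $\AAff^n_c$ is integral and $g_c\not\equiv0$ because $dg_c\ne0$ there. By the local flatness criterion ("slicing", Matsumura 22.5), $\mathcal O/(g)$ is therefore flat over the DVR $\mathcal O_{C,c}$. Alternatively, EGA's formulation of the Jacobian criterion bypasses flatness entirely.

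With smoothness established, the consequences follow. A scheme smooth over the smooth curve $C^\circ$ is smooth over $\CC$, hence reduced. This also shows that the reduced structure placed on $\Gfam^\circ$ agrees with the natural scheme structure. The fibres are smooth of pure dimension $n-1$, and each fibre is exactly the open graph whose closure is $\Gone(\widehat G_c)$.
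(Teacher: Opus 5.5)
Your proof is correct and follows the paper's route: you reduce, via the graph isomorphism, to the hypersurface family $\{\widehat G=0,\ d_x\widehat G\ne0\}\subset\PP^n\times C^\circ$ (your $W$, the paper's $Z^\circ$), and then use that its equation has nonvanishing differential in the chart $x$-directions; your Euler-relation step spells out what the paper asserts in one phrase (``already its $x$-part is''). The only difference is packaging: the paper concludes from surjectivity of the differential of a morphism between smooth varieties, which makes flatness automatic, whereas you use the relative Jacobian criterion and check flatness by the local slicing criterion, and that is equally valid.
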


\begin{proof}
$\Gfam^\circ$ is the image of
\[
  Z^\circ:=\big\{(x,c)\in\PP^n\times C^\circ\ :\
  \widehat G_c(x)=0,\ d_x\widehat G_c(x)\ne0\big\}
\]
under the graph embedding
$(x,c)\mapsto(x,[d_x\widehat G_c(x)],c)$, an isomorphism onto
$\Gfam^\circ$ over $C^\circ$ (it is the graph of a morphism
$Z^\circ\to(\PP^n)^\vee$); so it suffices to prove
$Z^\circ\to C^\circ$ smooth.  Work near a point
$(x^*,c^*)\in Z^\circ$ in an affine chart of $\PP^n$ and a local
coordinate $t$ at $c^*$.  There $Z^\circ$ is the zero locus of the
single regular function $\widehat G(x,t)$, whose differential at
$(x^*,c^*)$ is nonzero --- already its $x$-part
$d_x\widehat G_{c^*}(x^*)$ is --- so $Z^\circ$ is smooth of dimension
$n$ at the point, with tangent space $\ker d\widehat G$.  The kernel
of the differential of the projection $\pi:Z^\circ\to C^\circ$ at the
point is
\[
  T_{(x^*,c^*)}Z^\circ\cap\{dt=0\}
  =\big\{\xi\ :\ d_x\widehat G_{c^*}(x^*)\cdot\xi=0\big\},
\]
of dimension $n-1$; hence $d\pi$ has rank one and is surjective onto
$T_{c^*}C^\circ$.  A morphism of smooth complex varieties whose
differential is surjective at every closed point is a smooth morphism,
here of relative dimension $n-1$.
\qed
\end{proof}

\begin{lemma}[Construction, domination, flatness]\label{lem:family}
$\Gfam^\circ$ has finitely many irreducible components
$E_1,\dots,E_r$.  After enlarging $\Sigma$ by finitely many points,
the reduced closed set
\[
  W\ :=\ \bigcup_{i:\ \pi(E_i)\ \mathrm{dense\ in}\ C}
  \overline{E_i}
  \ \subseteq\ \PP^n\times(\PP^n)^\vee\times C
\]
(closure in the ambient, $\pi$ the projection to $C$) satisfies:
every irreducible component of $W$ dominates $C$; $W$ is pure of
dimension $n$; and $\pi:W\to C$ is flat.
\end{lemma}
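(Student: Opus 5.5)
We plan to prove Lemma~\ref{lem:family} in three steps: finiteness of components, purity of dimension, and flatness over the smooth curve.

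\emph{Finiteness.} $\Gfam^\circ$ is a locally closed subset of the Noetherian scheme $\PP^n\times(\PP^n)^\vee\times C$. Its underlying topological space is therefore Noetherian and has finitely many irreducible components $E_1,\dots,E_r$. By Lemma~\ref{lem:smoothgraph}, $\Gfam^\circ\to C^\circ$ is smooth of relative dimension $n-1$ over a smooth curve, so $\Gfam^\circ$ is smooth of pure dimension $n$. Hence its connected components coincide with its irreducible components, and each $E_i$ is smooth of dimension $n$.

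\emph{Domination, purity, and the choice of $\Sigma$.} Each $\pi(E_i)$ is an irreducible constructible subset of the curve $C$, so it is either dense or a single point. If $\pi(E_i)=\{c_i\}$, then $E_i$ is a union of components of the fibre over $c_i$. We enlarge $\Sigma$ by these finitely many points $c_i$. Over $U$, $\Gfam^\circ$ is then the union of the dominating $E_i$. (The fibres $\Gamma_c$ for $c\in U$ will be read off from these components in Lemma~\ref{lem:genericfibers}, which is not needed here.) Each dominating $E_i$ is irreducible of dimension $n$, so $\overline{E_i}$ is irreducible of dimension $n$ and dominates $C$. The closures $\overline{E_i}$ are pairwise distinct or coincide. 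In either case $W$ is a finite union of irreducible $n$-dimensional closed sets, each dominating $C$; these are exactly the irreducible components of $W$ after discarding repeats. This gives purity and the domination statement.

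\emph{Flatness.} We invoke the standard criterion: a morphism from a reduced scheme to a smooth (Dedekind) curve is flat if and only if every associated point of the source maps to the generic point, which for reduced $W$ means every irreducible component dominates $C$. Concretely, $\OO_{C,c}$ is a DVR, and flatness over a DVR is equivalent to the uniformizer being a nonzerodivisor on $\OO_W$. If a local uniformizer $t$ killed a nonzero local section $s$ of $\OO_W$, then $s$ would be supported in $\pi^{-1}(c)$. Since $W$ is reduced, $s$ is nonzero at the generic point of some component, so that component would lie over $c$, contradicting domination. Hence $\pi:W\to C$ is flat, including over $0$ and over the excluded points; no enlargement of $\Sigma$ is needed for this step.

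The main obstacle is bookkeeping rather than substance. We must make sure that $W$ is taken with its reduced structure as a closed subscheme of the ambient space, so that its associated points are exactly the generic points of its components and the torsion-free criterion over a DVR applies. We must also keep in mind that the closure in the ambient space may add points over $\Sigma$ and over $0$. These added points are harmless for flatness, because the added fibres contain no component of $W$.
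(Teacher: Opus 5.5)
Your proposal is correct and follows essentially the same route as the paper: finiteness by Noetherianity, discarding components with point image by enlarging $\Sigma$, dimension $n$ of each dominating $\overline{E_i}$ from Lemma~\ref{lem:smoothgraph}, and flatness from torsion-freeness over the DVR $\OO_{C,c}$ for a reduced $W$ all of whose components dominate. One small observation: you already show that every $E_i$ has dimension $n$, while the fibres of $\Gfam^\circ\to C^\circ$ have dimension $n-1$, so the point-image case cannot occur and that enlargement of $\Sigma$ is vacuous (in the paper's proof as well).
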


\begin{proof}
Noetherianity gives finitely many $E_i$.  Each image
$\pi(E_i)\subseteq C^\circ$ is constructible (Chevalley) in a curve,
hence finite or cofinite; put the points of the finite images into
$\Sigma$, so that over $U$ the fibres of $\Gfam^\circ$ and of
$\bigcup_{\mathrm{dom}}E_i$ agree.  (Note that the closure of a
non-dominating component lies over the same finite image, hence over
$\Sigma$.)  Every component of $W$ is some $\overline{E_i}$ with dense
image, and closure preserves domination.  Each dominating $E_i$ has
fibres of pure dimension $n-1$ over a dense subset of $C^\circ$
(Lemma~\ref{lem:smoothgraph}), so $\dim\overline{E_i}=n$ and $W$ is
pure of dimension $n$.  Flatness: $W$ is reduced, so the zero-divisors
of $\OO_W$ are the functions vanishing on a component; for a closed
point $c\in C$ with uniformizer $t_c\in\OO_{C,c}$ ($C$ smooth, so this
local ring is a DVR), $\pi^\#t_c$ vanishes on no component (each
dominates), hence is a non-zero-divisor in $\OO_{W,w}$ for every $w$
over $c$; thus $\OO_{W,w}$ is a torsion-free, hence flat,
$\OO_{C,c}$-module, and flatness is local
\cite[III.9.7]{Hartshorne77}.
\qed
\end{proof}

\begin{lemma}[Generic fibres are the Gauss-graph cycles, with
coefficient one]\label{lem:genericfibers}
After a further finite enlargement of $\Sigma$: for every $c\in U$ the
scheme-theoretic fibre $W_c$ has support $\supp\Gamma_c$, and
$[W_c]=\Gamma_c$ as cycles, with every coefficient equal to one.
\end{lemma}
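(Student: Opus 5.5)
The plan has two parts: first identify the generic fibre of $W$ set-theoretically with the fibre of $\Gfam^\circ$, then show that the scheme-theoretic fibre is reduced so that its cycle has all coefficients one. Throughout, the base will be enlarged by finitely many points, using the fact that a constructible "bad" locus in a curve which is not dense is finite.

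\textbf{Support.} Over $U$, the fibre of $\bigcup_{\mathrm{dom}}E_i$ equals $\Gfam^\circ_c$ by the choice of $\Sigma$ in Lemma~\ref{lem:family}. By construction, $\Gfam^\circ_c$ is the open graph $\{(x,[d\widehat G_c(x)]):\widehat G_c(x)=0,\ d\widehat G_c(x)\ne0\}$, whose closure is $\supp\Gamma_c$ by Definition~\ref{def:gone}. Two inclusions remain.

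\emph{Boundary points.} $W_c$ may contain points of $\overline{E_i}\smallsetminus E_i$. Here I use $W\smallsetminus\Gfam^\circ$, which is closed in $W$ of dimension $\le n-1$. Its components either dominate $C$, in which case their generic fibres have dimension $\le n-2$, or lie over finitely many points, which go into $\Sigma$. Hence for $c\in U$ we get $W_c\subseteq\overline{\Gfam^\circ_c}\cup(\text{a set of dimension}\le n-2)$. Since $W_c$ is pure of dimension $n-1$ by flatness (Lemma~\ref{lem:family}), the small set contributes no components, and $\supp W_c=\overline{\Gfam^\circ_c}=\supp\Gamma_c$.

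\emph{Embedded structure.} Because $W_c$ is a Cartier divisor $\{t_c=0\}$ in the $n$-dimensional $W$, it has no embedded points if $W$ is Cohen--Macaulay near $W_c$. I do not want to rely on that. Instead, only the cycle $[W_c]$ matters, and it is determined by multiplicities at the generic points of its components. Those generic points lie in $\Gfam^\circ_c$, because every $(n-1)$-dimensional component of $W_c$ meets $\Gfam^\circ_c$ densely by the dimension count above.

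\textbf{Coefficients.} At a point $w\in\Gfam^\circ_c$, $W$ coincides locally with $\Gfam^\circ$, which is open in $W$. By Lemma~\ref{lem:smoothgraph}, $\Gfam^\circ\to C^\circ$ is smooth, so the scheme-theoretic fibre $\Gfam^\circ_c$ is smooth, and in particular reduced. Consequently every component of $W_c$ appears in $[W_c]$ with multiplicity one. Since these components are exactly the components of $\Gamma_c$, which also carry coefficient one, $[W_c]=\Gamma_c$.

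\textbf{Main obstacle.} The delicate step is the boundary locus $\overline{E_i}\smallsetminus E_i$: I must show it is nowhere dense in the generic fibre. I will do this with the dimension bound $\dim(W\smallsetminus\Gfam^\circ)\le n-1$, combined with generic fibre dimension (upper semicontinuity, or Chevalley applied to the fibre-dimension function), which lets me add the finitely many exceptional points to $\Sigma$.
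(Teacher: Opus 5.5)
Your proposal is correct and is essentially the paper's argument. The support is pinned down by showing that the boundary has fibres of dimension at most $n-2$ over a cofinite set; you take the boundary to be $W\smallsetminus\Gfam^\circ$ globally, while the paper uses each $\overline{E_i}\smallsetminus E_i$. Coefficient one then follows because the generic points of the fibre components lie in the open graph, where $W$ is locally $\Gfam^\circ$ and Lemma~\ref{lem:smoothgraph} makes the fibre smooth, hence reduced. The paper also enlarges $\Sigma$ to avoid $\overline{E_i}\cap\overline{E_j}$ before this identification; you do not need that step, because $\overline{E_j}\cap\Gfam^\circ=E_j$ and the components of the smooth variety $\Gfam^\circ$ are pairwise disjoint, but you should state this when you assert that $W$ coincides with $\Gfam^\circ$ near $w$.
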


\begin{proof}
\emph{Supports.}  Fix a dominating $E_i$.  By flatness over the smooth
curve, every component of every fibre $(\overline{E_i})_c$ has
dimension exactly $n-1$ \cite[III.9.5--9.6]{Hartshorne77}.  The
boundary $\overline{E_i}\smallsetminus E_i$ is closed of dimension
$\le n-1$, so over a cofinite set of $c$ its fibres have dimension
$\le n-2$; enlarge $\Sigma$ accordingly.  Then no fibre component of
$(\overline{E_i})_c$ lies in the boundary, i.e.\ $(E_i)_c$ is dense in
$(\overline{E_i})_c$ and $(\overline{E_i})_c=\overline{(E_i)_c}$.
Since the non-dominating components of $\Gfam^\circ$ and their
closures lie over $\Sigma$, while
$\bigcup_i(E_i)_c=\Gfam^\circ_c$ for $c\in U$, summing over the
dominating $i$ gives
\[
  \supp W_c\ =\ \bigcup_i\overline{(E_i)_c}
  \ =\ \overline{\Gfam^\circ_c}\ =\ \supp\Gamma_c .
\]

\emph{Coefficients.}  Enlarge $\Sigma$ further so that for $c\in U$
the fibres of the pairwise intersections
$\overline{E_i}\cap\overline{E_j}$ ($i\ne j$, both dominating), each
of dimension $\le n-1$, have dimension $\le n-2$.  Let $T$ be an
irreducible component of $W_c$; by the support computation, $T$ is a
component of $\overline{(E_i)_c}$ for some dominating $i$, and by the
dimension counts above $T$ contains a point $p$ lying in $(E_i)_c$,
outside the boundary $\overline{E_i}\smallsetminus E_i$, and outside
every $\overline{E_j}$ with $j\ne i$.  Near $p$ the scheme $W$
coincides with $\overline{E_i}$ and contains the subset $E_i$, which
is open in $\overline{E_i}$ near $p$ (as $\Gfam^\circ$ is open in its
closure and the other components of $\Gfam^\circ$ have been avoided);
on this open set the projection to $C$ is smooth
(Lemma~\ref{lem:smoothgraph}).  Hence $W\to C$ is smooth at $p$, so
the scheme-theoretic fibre $W_c$ is smooth --- in particular reduced
--- at $p$.  Reducedness passes to generizations:
$\OO_{W_c,\eta_T}$ is a localization of $\OO_{W_c,p}$, and
localizations of reduced rings are reduced.  So $W_c$ is reduced at
the generic point of $T$, and the coefficient of $T$ in the fibre
cycle $[W_c]$ --- the length of that local ring --- equals one.  Since
the supports agree and $\Gamma_c$ carries every component with
coefficient one by definition, $[W_c]=\Gamma_c$.
\qed
\end{proof}

\begin{lemma}[Conservation of multidegree]\label{lem:conservation}
$[W_0]$ is an effective cycle of pure dimension $n-1$, and for every
$k$ and every $c\in U$,
\[
  \pdeg_k\big([W_0]\big)=\pdeg_k\big(\Gamma_c\big).
\]
\end{lemma}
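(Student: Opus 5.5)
The argument is the standard principle that intersection numbers are constant in flat families over a smooth curve. Throughout, $W\subseteq\PP^n\times(\PP^n)^\vee\times C$ is the reduced, pure $n$-dimensional, flat family of Lemma~\ref{lem:family}.

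\emph{Effectivity and dimension of $[W_0]$.} Since $\pi:W\to C$ is flat over the smooth curve $C$ and every component of $W$ dominates, each fibre $W_c$ is pure of dimension $n-1$ \cite[III.9.5--9.6]{Hartshorne77}. This includes the special fibre $W_0$. Because $0$ is a point on $C$, $W_0$ is a Cartier divisor on $W$, cut out by the non-zero-divisor $\pi^\#t_0$. So its fundamental cycle $[W_0]=\sum_T \ell(\OO_{W_0,\eta_T})\,[T]$ has positive coefficients and is an effective $(n-1)$-cycle. With Lemma~\ref{lem:genericfibers}, the same description gives $[W_c]=\Gamma_c$ for $c\in U$.

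\emph{Conservation of multidegree.} I would invoke the principle of conservation of number \cite[\S10.2]{Fulton98}. For the flat proper family $W\to C$ (properness comes from $\PP^n\times(\PP^n)^\vee$ being projective), the specialization of fibre cycles $c\mapsto[W_c]$ is compatible with the proper pushforward to $\PP^n\times(\PP^n)^\vee$. Hence the classes $[W_c]\in A_{n-1}(\PP^n\times(\PP^n)^\vee)$ are rationally equivalent for all closed points $c\in C$. Concretely, for $c,c'\in C$, the cycle $[W_c]-[W_{c'}]$ is the pushforward of the divisor of the rational function $\pi^*f$ on $W$, where $f$ is a rational function on $C$ with divisor $c-c'$ after compactifying. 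Alternatively one can use Fulton's Prop.~10.2 directly: $[W_c]=i_c^![W]$, and the pushforward of this Gysin pullback is independent of $c$ because $C$ is connected.

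\emph{Passing to multidegrees.} Capping with the fixed class $h_1^{\,n-1-k}h_2^{\,k}$ and taking degrees respects rational equivalence. This gives $\pdeg_k([W_0])=\pdeg_k([W_c])=\pdeg_k(\Gamma_c)$ for every $c\in U$.

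\emph{Main obstacle.} $C$ is only affine, so rational equivalence of the fibres cannot be applied literally from a principal divisor on a complete curve. I would handle this in one of two ways. The first is to cite Fulton's Prop.~10.2, which needs only a smooth connected curve base and a proper morphism $W\to C$; this holds here because $W$ is closed in $P\times C$ with $P$ projective. The second is to argue concretely with a generic flag. By Lemma~\ref{lem:kleiman}, for a generic product of linear spaces $\Phi$, the intersection $W\cap(\Phi\times C)$ is a curve that is finite and flat over a neighborhood of $0$. Its degree is locally constant, and the degree of the fibre over $0$ equals $\deg([W_0]\cdot[\Phi])$ by the multiplicity compatibility of proper intersection with a Cartier divisor. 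I would present the first route and mention the second as a sanity check.
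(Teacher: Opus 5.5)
Your proposal is correct and follows the paper's proof: effectivity and purity come from the special fibre being a Cartier divisor on the flat, dominating family $W$, with coefficients given by lengths; conservation comes from Fulton's conservation of number for the flat proper family $W\to C$ over the smooth curve; and degrees against $h_1^{\,n-1-k}h_2^{\,k}$ are then invariant. One caveat: drop the ``concretely'' justification via a rational function $f$ with $\mathrm{div}(f)=c-c'$, since such an $f$ exists only if $c\sim c'$ on the completed curve, which fails when that curve has positive genus (and the curve from curve selection need not be rational); the Prop.~10.2 route you settle on needs no such $f$, because it pushes $h_1^{\,n-1-k}h_2^{\,k}\cap[W]\in A_1(W)$ forward to $A_1(C)=\mathbb Z\,[C]$ and reads off a fibre degree independent of $c$.
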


\begin{proof}
Purity and effectivity: no component of $W_0$ has dimension $n$
(components of $W$ dominate $C$), every component has dimension
$\ge n-1$ (a fibre is a divisor in $W$), and the coefficients of the
fibre cycle are lengths of local rings of the scheme-theoretic fibre,
hence $\ge1$ on each component.  Conservation: $\pi:W\to C$ is flat
(Lemma~\ref{lem:family}) and proper ($W$ is closed in
$\PP^n\times(\PP^n)^\vee\times C$, which is proper over $C$), $C$ is a
smooth curve, so the fibre cycles $[W_c]$, $c\in C$, are
specializations of one another and are rationally equivalent as cycles
on $\PP^n\times(\PP^n)^\vee$
\cite[\S10.1--10.2]{Fulton98}; the named mechanism is specialization
of cycle classes over a smooth curve --- conservation of number ---
not Hilbert-polynomial constancy of a fixed embedding, though over the
biprojective ambient that route returns the same intersection numbers.
Each $\pdeg_k$ is a degree against a monomial in $h_1,h_2$ on the
complete variety $\PP^n\times(\PP^n)^\vee$, hence constant under
rational equivalence.  Lemma~\ref{lem:genericfibers} identifies
$[W_c]=\Gamma_c$ for $c\in U$.
\qed
\end{proof}

\section{Capture of the Fermat conormal in the flat limit}
\label{sec:capture}

\begin{proposition}[Containment]\label{prop:containment}
$\Con(X)\subseteq\supp[W_0]$, and the coefficient of $\Con(X)$ in the
effective cycle $[W_0]$ is at least one.
\end{proposition}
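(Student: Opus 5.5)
Since $\Con(X)$ is irreducible of dimension $n-1$ and $[W_0]$ is effective of pure dimension $n-1$ (Lemma~\ref{lem:conservation}), it suffices to show $\Con(X)\subseteq W_0$ set-theoretically. Then $\Con(X)$ is an irreducible component of $\supp W_0$, and it enters $[W_0]$ with coefficient equal to a length, hence $\ge1$. Because $W_0$ is closed, it is further enough to show that a dense subset of $\Con(X)$ lies in $W_0$.

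For the dense subset I take the points $(x^*,[d\widetilde F_n(x^*)])$ with $x^*\in X$, $x_0(x^*)\ne0$, and $x^*\ne q_0$. At such a point $d\widetilde F_n(x^*)\ne0$, since the only singular point of the cone $X$ is the vertex $q_0$. By Lemma~\ref{lem:specialfiber},
\[
  d\widehat G_0(x^*)=x_0^{m-n}\,d\widetilde F_n(x^*)\ne0,
\]
because $\widetilde F_n(x^*)=0$ kills the other term of the product rule. These points are dense in $X$, and over them the covector is the conormal covector, so their graph is dense in $\Con(X)$.

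Fix such a point $p^*=(x^*,\xi^*)$ and apply the holomorphic implicit function theorem, in the spirit of a Hurwitz-type argument (Lemma~\ref{lem:hurwitz}). Work in an affine chart at $x^*$ and a local coordinate $t$ on an analytic disk in $C^{\mathrm{an}}$ centred at $0$. The function $\widehat G(x,t)$ vanishes at $(x^*,0)$ and has nonzero $x$-differential there. Its coefficients are regular in $t$, so for $c$ near $0$ there are points $x_c\to x^*$ with $\widehat G_c(x_c)=0$ and $d\widehat G_c(x_c)\ne0$. Continuity of the partials then gives $[d\widehat G_c(x_c)]\to\xi^*$. By the conventions for the family, $U$ accumulates at $0$, so I may choose $c\in U$. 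For such $c$ the point $(x_c,[d\widehat G_c(x_c)],c)$ lies in $\Gfam^\circ_c$, and by Lemma~\ref{lem:genericfibers} it lies in $W_c\subseteq W$. Since $W$ is closed in $\PP^n\times(\PP^n)^\vee\times C$, Euclidean-closed by Lemma~\ref{lem:closure}, the limit $(p^*,0)$ lies in $W$, and hence in $W_0$.

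The main obstacle is this membership step. The approximating points must lie in $W$ itself, not merely in $\Gfam^\circ$, since non-dominating components of $\Gfam^\circ$ were discarded. Choosing $c\in U$ handles this, because Lemma~\ref{lem:family} arranged that $\Gfam^\circ$ and the dominating part agree over $U$. A second point to check is that the implicit-function argument needs a nonzero differential only at $x^*$, not reducedness of $\widehat G_0$. This is exactly what the multiplicity-one property of $X$ in Lemma~\ref{lem:specialfiber} supplies, even though $\widehat G_0$ is non-reduced along $x_0=0$.
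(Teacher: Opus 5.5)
Your proof is correct and follows the paper's argument. You approximate the conormal points over the smooth affine cone points by open-graph points over parameters in $U$, pass to the limit using closedness of $W$, conclude by density, and get the coefficient from effectivity and pure dimension; the only differences are cosmetic: you use the holomorphic implicit function theorem and continuity of the full homogeneous differential where the paper uses Hurwitz along a line and Euler's relation (Lemmas~\ref{lem:covector} and~\ref{lem:hurwitz}), and you place the approximating points in $W$ via Lemma~\ref{lem:genericfibers} rather than via domination of the component carrying them.
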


The proof must produce points of the \emph{multiplicity-one} Gauss
graphs $\Gone(\widehat G_c)$ converging into the Fermat conormal; this
is where the $\Gone$ device earns its keep, since the limit divisor
$V(\widehat G_0)=V(x_0^{\,m-n}\widetilde F_n)$ is badly non-reduced at
infinity and nothing guarantees the nearby divisors
$V(\widehat G_c)$ are reduced either.

\begin{lemma}[The limit covector at a smooth affine cone point]
\label{lem:covector}
Let $x^*\in V(F_n)\subset\CC^n$, $x^*\ne0$, so that
$\nabla F_n(x^*)=n\,(x_i^{*\,n-1})_i\ne0$.  Then
\[
  d\widehat G_0(1,x^*)\ =\ \big(0,\ \nabla F_n(x^*)\big)\ \ne\ 0,
\]
whose projective class $[0:\nabla F_n(x^*)]$ is the conormal covector
of the cone $X$ at the smooth point $(1,x^*)$.
\end{lemma}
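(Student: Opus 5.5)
The proof is a direct computation from Lemma~\ref{lem:specialfiber}, which gives $\widehat G_0=x_0^{\,m-n}\widetilde F_n$ with $m\ge n$ by Lemma~\ref{lem:mgen}. We differentiate this product by the Leibniz rule and evaluate at $(1,x^*)$, where $\widetilde F_n(1,x^*)=F_n(x^*)=0$. Two pieces of information are used: the vanishing of $F_n$ at $x^*$, and the fact that $\widetilde F_n$ does not involve $x_0$.

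First, the $x_0$-partial. We have
\[
\partial_0\widehat G_0=(m-n)x_0^{\,m-n-1}\widetilde F_n+x_0^{\,m-n}\partial_0\widetilde F_n .
\]
The second term vanishes identically because $\widetilde F_n$ does not involve $x_0$. The first term vanishes at $(1,x^*)$ because $\widetilde F_n(1,x^*)=0$. When $m=n$ the first term is absent altogether, so no negative power of $x_0$ arises.

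Next, the $x_i$-partials for $i\ge1$. Here $\partial_i\widehat G_0=x_0^{\,m-n}\partial_i\widetilde F_n$, which at $x_0=1$ equals $n\,x_i^{*\,n-1}$. Hence $d\widehat G_0(1,x^*)=(0,\nabla F_n(x^*))$. This covector is nonzero because $x^*\ne0$ forces some coordinate $x_i^*\ne0$.

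It remains to identify $[0:\nabla F_n(x^*)]$ as the conormal covector of $X$ at $(1,x^*)$. The point $(1,x^*)$ is a smooth point of the reduced irreducible hypersurface $X=V(\widetilde F_n)$: the form $\widetilde F_n$ is irreducible by Lemma~\ref{lem:specialfiber}, and its differential at $(1,x^*)$ is $(0,\nabla F_n(x^*))\ne0$ by the same calculation. Over a smooth point of $V(g)$ with $g$ reduced, the conormal fibre is the single covector $[dg]$ (see Section~\ref{sec:prelim}). Taking $g=\widetilde F_n$, the fibre is $[d\widetilde F_n(1,x^*)]=[0:\nabla F_n(x^*)]$.

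None of these steps is a real obstacle. The only point needing care is the boundary case $m=n$ in the $x_0$-partial, handled above, and the observation that the padding factor $x_0^{\,m-n}$ equals $1$ on the affine chart. Consequently it neither kills the differential there nor alters its projective class.
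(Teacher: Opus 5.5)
Your proposal is correct and follows essentially the same route as the paper: you apply the Leibniz rule to $x_0^{\,m-n}\widetilde F_n$, note that the first term of $\partial_0$ vanishes on $V(\widetilde F_n)$ and the second because $\widetilde F_n$ does not involve $x_0$, and then read off $[0:\nabla F_n(x^*)]$ as the conormal covector at the smooth point $(1,x^*)$. Your explicit treatment of the case $m=n$ and of smoothness via irreducibility of $\widetilde F_n$ simply spells out what the paper leaves implicit; the paper instead adds the remark that the tangent hyperplane passes through the vertex $q_0$.
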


\begin{proof}
With $\widehat G_0=x_0^{\,m-n}\widetilde F_n$
(Lemma~\ref{lem:specialfiber}):
\[
  \partial_0\widehat G_0=(m-n)\,x_0^{\,m-n-1}\widetilde F_n
  +x_0^{\,m-n}\,\partial_0\widetilde F_n,
  \qquad
  \partial_i\widehat G_0=x_0^{\,m-n}\,\partial_i\widetilde F_n\ \
  (i\ge1).
\]
At $(1,x^*)$: the first term of $\partial_0$ vanishes because
$\widetilde F_n(1,x^*)=F_n(x^*)=0$, the second because
$\widetilde F_n$ does not involve $x_0$, and
$\partial_i\widehat G_0(1,x^*)=\partial_iF_n(x^*)$.  The hyperplane
$\{\sum_{i\ge1}\partial_iF_n(x^*)\,y_i=0\}$ is the projective tangent
hyperplane of $X$ at $(1,x^*)$ and contains the vertex $q_0$ --- the
$\xi_0=0$ signature of a cone.  Note the mechanism: the non-reduced
factor $x_0^{\,m-n}$ contributes to the differential only through the
term carrying $\widetilde F_n$ itself, which dies \emph{because the
differential is evaluated on the zero locus of $\widetilde F_n$}.
Padding is differentially invisible along the Fermat component.
\qed
\end{proof}

\begin{lemma}[Hurwitz persistence onto multiplicity-one branches]
\label{lem:hurwitz}
For every $x^*$ as in Lemma~\ref{lem:covector} there are parameters
$c_k\in U$ with $c_k\to0$ analytically and points
$x_k\to(1,x^*)$ in the chart $x_0=1$ such that
\[
  \widehat G_{c_k}(x_k)=0,\qquad d\widehat G_{c_k}(x_k)\ne0,\qquad
  \big[d\widehat G_{c_k}(x_k)\big]\ \longrightarrow\
  \big[0:\nabla F_n(x^*)\big].
\]
Moreover each $x_k$ lies on a multiplicity-one component of
$V(\widehat G_{c_k})$, so
$\big(x_k,[d\widehat G_{c_k}(x_k)]\big)$ lies in the \emph{open graph}
$\Gfam^\circ_{c_k}$ defining $\Gamma_{c_k}$ --- not merely in its
closure.
\end{lemma}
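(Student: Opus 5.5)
The plan is to pass to the affine chart $x_0=1$ and use the implicit function theorem, after fixing a transverse line through the point. In the chart, $\widehat G_c(1,y)=G_c(y)$ and the coefficients depend regularly, hence analytically, on $c$ in a disk around $0$. By Lemma~\ref{lem:covector}, $\nabla F_n(x^*)\ne0$.

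\textbf{Step 1: construct the zeros $x_k$.} Pick a direction $w\in\CC^n$ with $\partial_wF_n(x^*)\ne0$ and consider the one-variable analytic function $\phi_c(s):=G_c(x^*+sw)$. It depends analytically on $(c,s)$, and $\phi_0$ has a simple zero at $s=0$. By Hurwitz's theorem, or simply the analytic implicit function theorem in $(c,s)$ since $\partial_s\phi_0(0)\ne0$, there is an analytic function $s(c)$ with $s(0)=0$ and $\phi_c(s(c))=0$ for all $c$ in a small disk $\Delta$. The points of $U$ accumulate at $0$ (conventions for the family), so I choose $c_k\in U\cap\Delta$ with $c_k\to0$ and set $x_k:=(1,x^*+s(c_k)w)\to(1,x^*)$. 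Then $\widehat G_{c_k}(x_k)=0$.

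\textbf{Step 2: nonvanishing differential and convergence of covectors.} The full differential $d\widehat G_c(x)$ in the homogeneous coordinates is jointly continuous in $(c,x)$. By Lemma~\ref{lem:covector}, its value at $(0,(1,x^*))$ is $(0,\nabla F_n(x^*))\ne0$. Continuity then gives $d\widehat G_{c_k}(x_k)\ne0$ for large $k$, and $[d\widehat G_{c_k}(x_k)]\to[0:\nabla F_n(x^*)]$ in $(\PP^n)^\vee$, because projectivization is continuous away from $0$. Passing to a tail of the sequence handles ``large $k$''.

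\textbf{Step 3: the multiplicity-one claim.} The point $x_k$ satisfies $\widehat G_{c_k}(x_k)=0$ and $d\widehat G_{c_k}(x_k)\ne0$. By the computation in Lemma~\ref{lem:gonestructure}, the differential vanishes identically on any component of multiplicity $\ge2$. So $x_k$ lies on no such component, and any component through $x_k$ has multiplicity one. Also $G_{c_k}\ne0$ by Lemma~\ref{lem:nonzero}, so the fibre is a genuine divisor. Then $(x_k,[d\widehat G_{c_k}(x_k)],c_k)$ satisfies the defining conditions of $\Gfam^\circ$ verbatim, so it lies in the open graph $\Gfam^\circ_{c_k}$ and not merely in its closure.

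The main obstacle is Step 1. I have to make sure the analytic parametrization is legitimate. $C$ is smooth at $0$, so a local analytic coordinate $t$ exists, and the coefficients of $G$ are holomorphic in $t$. I also need the chosen $c_k$ to lie in $U$ rather than just in $C$. This is why I produce a zero for every $c$ near $0$ first, via the implicit function theorem, and only afterwards restrict to the points of $U$ accumulating at $0$. Everything else is continuity plus Lemmas~\ref{lem:covector} and~\ref{lem:gonestructure}.
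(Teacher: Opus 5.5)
Your proposal is correct and follows the paper's proof essentially step for step. Both restrict to a transverse line $x^*+sw$ on which $G_0$ has a simple zero and show that this zero persists for nearby $c$; you use the holomorphic implicit function theorem where the paper uses Hurwitz, and these are interchangeable here. Both then use convergence of the differentials, and both get multiplicity-one membership because $d\widehat G_c$ vanishes identically on multiple components.

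The only cosmetic difference concerns the $0$-th coordinate of the limiting covector. You get its vanishing directly from joint continuity of $d\widehat G_c(x)$ in $(c,x)$ together with Lemma~\ref{lem:covector}. The paper instead recomputes it through Euler's relation. Both arguments are valid.
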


\begin{proof}
Work in the chart $x_0=1$, where $\widehat G_c=G_c$.  Choose
$w\in\CC^n$ with $\langle dF_n(x^*),w\rangle=1$ --- possible since the
functional is nonzero; one may \emph{not} simply take
$w=\nabla F_n(x^*)$, which can be isotropic over $\CC$.  Define
$\varphi_c(s):=G_c(x^*+sw)$, holomorphic in $s$ on a disk
$\overline{D_\rho}$, with coefficients regular (hence analytically
continuous) in $c$.  Then $\varphi_0(s)=F_n(x^*+sw)$ has a simple zero
at $s=0$: $\varphi_0(0)=0$, $\varphi_0'(0)=1$.  Shrink $\rho$ so that
$s=0$ is the only zero of $\varphi_0$ in $\overline{D_\rho}$.  Since
$G_c\to F_n$ coefficientwise as $c\to0$ (regular functions on $C$ are
analytically continuous and take the value at $0$ prescribed by
Lemma~\ref{lem:normalform}(a)), $\varphi_c\to\varphi_0$ uniformly on
$\overline{D_\rho}$; by Hurwitz's theorem --- if holomorphic
$f_k\to f\not\equiv0$ uniformly on $\overline{D_\rho}$, $f$ zero-free
on $\partial D_\rho$ with exactly one zero counted with multiplicity
inside, then so is $f_k$ for large $k$, with its zero converging to
that of $f$ (argument principle) --- there is, for each $c$ small in
$U$, a unique $s_c\in D_\rho$ with $G_c(x^*+s_cw)=0$ and $s_c\to0$.
Set $x_c:=x^*+s_cw$.

Gradients: $\nabla G_c\to\nabla F_n$ uniformly on compacts (again
coefficientwise convergence of polynomials of bounded degree), so
$\nabla G_c(x_c)\to\nabla F_n(x^*)\ne0$, and for small $c$ the affine
entries $\partial_i\widehat G_c(1,x_c)=\partial_iG_c(x_c)$, $i\ge1$,
are not all zero, whence $d\widehat G_c(1,x_c)\ne0$.  The
$0$-th entry is determined by Euler's relation for the degree-$m$ form
$\widehat G_c$:
\[
  \partial_0\widehat G_c(1,x_c)
  =m\,G_c(x_c)-\sum_{i\ge1}x_{c,i}\,\partial_iG_c(x_c)
  \ \longrightarrow\
  0-\langle x^*,\nabla F_n(x^*)\rangle=-n\,F_n(x^*)=0 ,
\]
so $[d\widehat G_c(1,x_c)]\to[0:\nabla F_n(x^*)]$.

Multiplicity-one membership --- the point of the construction: $x_c$
is a point of $V(\widehat G_c)$ at which the differential does not
vanish.  A point lying on a multiplicity-$\ge2$ component has
$d\widehat G_c=0$ there (Lemma~\ref{lem:gonestructure}), so $x_c$ lies
only on multiplicity-one components and, having nonvanishing
differential, lies in the open graph.  No squarefreeness of
$\widehat G_c$ is assumed; whatever multiple components exist live
elsewhere.  Finally, restrict the parameters to a sequence
$c_k\in U$, $c_k\to0$, which exists since $\Sigma$ is finite.
\qed
\end{proof}

\begin{proof}[of Proposition~\ref{prop:containment}]
The points constructed in Lemma~\ref{lem:hurwitz} lie in
$\Gfam^\circ_{c_k}$ with $c_k\in U$; the component of $\Gfam^\circ$
carrying such a point dominates $C$ (a non-dominating component has
image inside $\Sigma$, which $U$ avoids), so the point lies in $W$.
Since $W$ is Zariski-closed, it is closed in the analytic topology,
and the analytic limit
$\big((1,x^*),[0:\nabla F_n(x^*)]\big)$ lies in
$W\cap\pi^{-1}(0)=W_0$.  As $x^*$ ranges over
$V(F_n)\smallsetminus\{0\}$, these limit points range over the
conormal graph of $X$ over the smooth affine cone points, which is
dense in $\Con(X)$: the smooth locus of $X$ is
$X\smallsetminus\{q_0\}$ (Proposition~\ref{prop:coneshift}(a)), its
intersection with the chart $x_0\ne0$ is dense in $X$ ($X$ is
irreducible and $X\not\subseteq\{x_0=0\}$), and the conormal covector
over a smooth point is unique.  Hence
$\Con(X)\subseteq\supp[W_0]$.

Coefficient: $[W_0]$ is effective of pure dimension $n-1$
(Lemma~\ref{lem:conservation}); $\Con(X)$ is irreducible of dimension
$n-1$ and is contained in the support, hence is an irreducible
component of $\supp[W_0]$, and its coefficient is a positive integer.
\qed
\end{proof}

\begin{remark}[What else the limit contains, and why it cannot hurt]
\label{rem:junk}
Besides $\Con(X)$, the cycle $[W_0]$ may contain: conormal-type cycles
over the hyperplane at infinity (limits of graph points escaping to
$x_0=0$; for $c\in U$ the divisor $V(\widehat G_c)$ may itself contain
$V(x_0)$ as a component of multiplicity $m-\deg_xG_c$, which when that
multiplicity is one rides through the family with
$\pdeg_{n-2}=0$, cf.\ Remark~\ref{rem:degreem}, and when $\ge2$ is
invisible to $\Gone$); cycles of the form
$S'\times\{\text{linear space}\}$ over singular strata; and
Lagrangian-type components over the singular locus of the limit
divisor.  Effectivity is the entire defense: extra components enter
$[W_0]$ with nonnegative coefficients and nonnegative multidegrees
(Lemma~\ref{lem:kleiman}) and can only \emph{add}.  Nothing subtracts
the $\Con(X)$ summand, because cycle specialization has no negative
coefficients and Proposition~\ref{prop:containment} pins $\Con(X)$
inside the support with coefficient $\ge1$.  Contrast the two limit
objects: the \emph{divisor} $V(\widehat G_0)$ is badly non-reduced at
infinity; the \emph{cycle} $[W_0]$ is insensitive to that, precisely
because $\Gone$ is built from the differential, which
Lemma~\ref{lem:covector} shows survives along the Fermat component.
\end{remark}

\begin{corollary}[Special $\le$ generic]\label{cor:specialgeneric}
For every $c\in U$:
\[
  \pdeg_{n-2}\big(\Con(X)\big)\ \le\ \pdeg_{n-2}\big([W_0]\big)
  \ =\ \pdeg_{n-2}\big(\Gamma_c\big).
\]
\end{corollary}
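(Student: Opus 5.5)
The corollary is obtained by chaining Proposition~\ref{prop:containment} with Lemma~\ref{lem:conservation}; no new geometry is needed.

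First I write the effective cycle as $[W_0]=a\,\Con(X)+R$. By Proposition~\ref{prop:containment}, $a\ge1$ is an integer. The cycle $R$ is the sum of the remaining irreducible components of $[W_0]$, each with a positive coefficient; this follows from the effectivity and purity of $[W_0]$ in Lemma~\ref{lem:conservation}. By additivity of multidegrees,
\[
\pdeg_{n-2}([W_0])=a\,\pdeg_{n-2}(\Con(X))+\pdeg_{n-2}(R).
\]

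Second, I invoke Lemma~\ref{lem:kleiman}. It shows that $\pdeg_{n-2}$ of any irreducible $(n-1)$-fold in $\PP^n\times(\PP^n)^\vee$ is a count of generic flag intersection points, hence nonnegative. It follows that $\pdeg_{n-2}(R)\ge0$ and $\pdeg_{n-2}(\Con(X))\ge0$. Combining this with $a\ge1$ gives
\[
\pdeg_{n-2}(\Con(X))\le a\,\pdeg_{n-2}(\Con(X))\le\pdeg_{n-2}([W_0]).
\]

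Third, Lemma~\ref{lem:conservation} with $k=n-2$ gives $\pdeg_{n-2}([W_0])=\pdeg_{n-2}(\Gamma_c)$ for every $c\in U$. This is the right-hand equality, and it completes the chain.

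The only point that needs care is the use of $a\ge1$ together with nonnegativity of $\pdeg_{n-2}(\Con(X))$ in the second step. Without nonnegativity, the multiplication by $a$ could go the wrong way. Lemma~\ref{lem:kleiman} supplies exactly this nonnegativity, and the real content has already been established in the proposition and the conservation lemma.
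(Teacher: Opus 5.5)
Your proposal is correct and follows the paper's proof essentially verbatim. Like the paper, you decompose $[W_0]=a\,\Con(X)+\sum_j a_jV_j$ with $a\ge1$ from Proposition~\ref{prop:containment}, use nonnegativity of $\pdeg_{n-2}$ on irreducible $(n-1)$-folds from Lemma~\ref{lem:kleiman}, and close with Lemma~\ref{lem:conservation}; your remark that $a\,\pdeg_{n-2}(\Con(X))\ge\pdeg_{n-2}(\Con(X))$ requires $\pdeg_{n-2}(\Con(X))\ge0$ simply makes explicit a step the paper leaves implicit.
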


\begin{proof}
Write $[W_0]=a\cdot\Con(X)+\sum_ja_jV_j$ with $a\ge1$, $a_j\ge0$
integers, and $V_j$ irreducible of dimension $n-1$
(Proposition~\ref{prop:containment},
Lemma~\ref{lem:conservation}).  Each $\pdeg_{n-2}(V_j)\ge0$
(Lemma~\ref{lem:kleiman}), so
$\pdeg_{n-2}([W_0])\ge a\,\pdeg_{n-2}(\Con(X))
\ge\pdeg_{n-2}(\Con(X))$.  The equality is
Lemma~\ref{lem:conservation}.
\qed
\end{proof}

\section{The cone-shift identity and the Fermat dual degree}
\label{sec:coneshift}

\begin{proposition}[Cone shift]\label{prop:coneshift}
Let $Z\subset\PP^{n-1}=\{x_0=0\}$ be a smooth irreducible hypersurface
of degree $e\ge2$, $n\ge3$, and let $X_Z\subset\PP^n$ be the cone over
$Z$ with vertex $q_0=[1:0:\cdots:0]$.  Then:
\begin{enumerate}[label=(\alph*),leftmargin=2.2em]
\item $\Sing X_Z=\{q_0\}$, and
$\Con(X_Z)\subset\PP^n\times q_0^{\perp}$, where
$q_0^\perp=\{\xi_0=0\}\cong(\PP^{n-1})^\vee$;
\item $\pdeg_{n-1}(\Con(X_Z))=0$\ \ (the cone is dual-deficient);
\item $\pdeg_{n-2}(\Con(X_Z))=\deg Z^\vee$.
\end{enumerate}
In particular, for the Fermat cone $X$ ($Z=V(F_n)$ smooth since
$\nabla F_n$ vanishes only at the origin, $e=n$),
Lemma~\ref{lem:dualdeg} with $N=n-1$ gives
\[
  \pdeg_{n-2}\big(\Con(X)\big)\ =\ n(n-1)^{n-2}.
\]
\end{proposition}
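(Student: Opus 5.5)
Write $Z=V(f)$ with $f=f(x_1,\dots,x_n)$ squarefree of degree $e$, so that $X_Z=V(\widetilde f)$ where $\widetilde f$ is $f$ viewed as a form on $\PP^n$, not involving $x_0$. The plan is to describe $\Con(X_Z)$ explicitly over the smooth locus, compute the two multidegrees by direct flag intersections, and reduce (c) to the top multidegree of $\Con(Z)$.

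\emph{Step (a).} The gradient of $\widetilde f$ is $(0,\nabla f)$. By Euler's relation, $\nabla f(x)=0$ forces $f(x)=0$, and smoothness of $Z$ means $\nabla f$ vanishes on $\CC^n$ only at $0$. Hence the singular locus of $X_Z$ is exactly the set of points with $(x_1,\dots,x_n)=0$, which is $\{q_0\}$. Over every smooth point the conormal covector is $[0:\nabla f(x)]$, which lies in $q_0^\perp$. Since $q_0^\perp$ is closed, the closure $\Con(X_Z)$ lies in $\PP^n\times q_0^\perp$.

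\emph{Step (b).} Let $p:\PP^n\smallsetminus\{q_0\}\to\PP^{n-1}$ be the projection from the vertex. Over smooth points the covector $[0:\nabla f(p(x))]$ depends only on $p(x)$. So the image of $\Con(X_Z)$ in the dual factor is the closure of $\{[0:\nabla f(y)]: y\in Z\}$, namely $Z^\vee\subset q_0^\perp$. This has dimension at most $n-2<n-1$. Therefore $h_2^{\,n-1}$ restricted to $\Con(X_Z)$ is zero, which gives $\pdeg_{n-1}=0$. I would phrase this with Lemma~\ref{lem:kleiman}: $n-1$ generic hyperplanes in the dual factor miss the image entirely.

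\emph{Step (c).} Compute $\pdeg_{n-2}$ by a generic flag $M\times(L_1\cap\dots\cap L_{n-2})$, using Lemma~\ref{lem:kleiman} so that every intersection point lies on the open conormal graph over smooth affine-chart points. The conditions $\ell_t([0:\nabla f(y)])=0$ depend only on $y=p(x)\in Z$. Restricted to $q_0^\perp\cong(\PP^{n-1})^\vee$, the forms $\ell_t$ are generic hyperplanes there. So the $y$-conditions cut $\Con(Z)\subset\PP^{n-1}\times(\PP^{n-1})^\vee$ by $n-2$ generic dual hyperplanes, leaving exactly $\pdeg^{Z}_{n-2}(\Con(Z))=\deg Z^\vee$ points $y$. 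For this I use Lemma~\ref{lem:dualdeg}: $\Con(Z)\to Z^\vee$ is birational, so the top multidegree of $\Con(Z)$ is $\deg Z^\vee$, and the generic points are reduced. For each such $y$, the fibre of $p$ over $y$ inside $X_Z$ is the line $\overline{q_0 y}$. A generic hyperplane $M$, not through $q_0$ and not containing any of the finitely many $y$, meets that line in exactly one point, and that point is a smooth point of $X_Z$. So the flag meets $\Con(X_Z)$ in exactly $\deg Z^\vee$ points.

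The main obstacle is transversality, since intersection points are to be counted as reduced. I would handle it by the product structure: locally over the smooth locus, $\Con(X_Z)^\circ\cong\Con(Z)^\circ\times\AAff^1$, where the $\AAff^1$ coordinate runs along the ruling line. The $\ell_t$ act only on the first factor, where they are transverse by Kleiman applied on $\PP^{n-1}\times(\PP^{n-1})^\vee$. The form $M$ restricted to each ruling line is a nonconstant affine function, hence transverse. The differentials are then independent, so each point is reduced. Alternatively I could just invoke Lemma~\ref{lem:kleiman}, which already says the generic count equals $\pdeg_{n-2}$, and only need the count to be correct. The Fermat specialization then follows: $\nabla F_n=n(x_i^{n-1})$ vanishes only at $0$, so $Z$ is smooth, and with $e=n$ and $N=n-1$ Lemma~\ref{lem:dualdeg} gives $\deg Z^\vee=n(n-1)^{n-2}$.
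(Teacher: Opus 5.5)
Your proposal is correct and follows the paper's proof essentially step for step. You use the gradient $(0,\nabla f)$ for (a) and the $(n-2)$-dimensional dual image $Z^\vee\subset q_0^\perp$ for (b). For (c) you use the same two-stage count: the dual conditions cut $\Con(Z)$ in $\deg Z^\vee$ points by reflexivity, and a generic $M$ avoiding $q_0$ meets each ruling line exactly once. Your explicit local-product transversality check is a harmless supplement, since the paper simply lets Lemma~\ref{lem:kleiman} certify that the generic flag count equals $\pdeg_{n-2}$.
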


\begin{proof}
Write the defining form of $X_Z$ as $q(x_1,\dots,x_n)$, not involving
$x_0$.  (a)~$dq=(0,\nabla q)$, and $\nabla q\ne0$ away from the affine
vertex line since $Z$ is smooth; on $\PP^n$ the common zero is the
single point $q_0$.  At a smooth point $[x_0:z']$, $[z']\in Z$, the
conormal covector is $[0:\nabla q(z')]\in q_0^\perp$: every tangent
hyperplane of a cone contains the vertex.  Consequently $\Con(X_Z)$
fibres over $\Con(Z)\subset\PP^{n-1}\times(\PP^{n-1})^\vee$ with fibre
the ruling line $\overline{q_0[z']}\cong\PP^1$ in the first factor:
\[
  \Con(X_Z)
  =\overline{\big\{\big([x_0:z'],[0:\nabla q(z')]\big):
  [z']\in Z,\ x_0\in\CC\big\}},
  \qquad\dim=(n-2)+1=n-1 .
\]

(b)~$\pdeg_{n-1}$ intersects with $n-1$ generic hyperplanes in the
dual factor; the dual-factor image of $\Con(X_Z)$ is
$Z^\vee\subset q_0^\perp$, of dimension $n-2<n-1$, so the pushforward
of $h_2^{\,n-1}\cdot[\Con(X_Z)]$ to $(\PP^n)^\vee$ is a class
supported on a set of dimension $\le n-2$ cut by $n-1$ generic
hyperplanes: zero.

(c)~By Lemma~\ref{lem:kleiman},
$\pdeg_{n-2}=\#\big(\Con(X_Z)\cap(M\times L)\big)$ for a generic
hyperplane $M\subset\PP^n$ and a generic codimension-$(n-2)$ linear
$L\subset(\PP^n)^\vee$, the intersection finite, reduced, and avoiding
any fixed lower-dimensional locus.  Evaluate in two stages.
\emph{Dual side:} every point of $\Con(X_Z)$ already has
$\xi\in q_0^\perp$, so the condition $\xi\in L$ is
$\xi\in L\cap q_0^\perp$; as $L$ varies generically,
$L\cap q_0^\perp$ varies over generic codimension-$(n-2)$ linear
subspaces of $q_0^\perp\cong(\PP^{n-1})^\vee$ (the assignment
$L\mapsto L\cap q_0^\perp$ is a dominant family: any such trace
$\ell$ is realized by some $L\supseteq\ell$, $L\not\subseteq
q_0^\perp$).  Through the bundle map $\Con(X_Z)\to\Con(Z)$ the
condition becomes $n-2$ generic dual-hyperplane conditions on the
$(n-2)$-dimensional $\Con(Z)$, whose solution count is the top
multidegree of $\Con(Z)$, namely $\deg Z^\vee$: by reflexivity
(Lemma~\ref{lem:dualdeg}) $\Con(Z)\to Z^\vee$ is birational, so the
count equals the number of points of the hypersurface $Z^\vee$ on a
generic codimension-$(n-2)$ linear subspace --- a generic line in
$(\PP^{n-1})^\vee$ --- which is $\deg Z^\vee$, each with one conormal
lift.  \emph{Primal side:} over each of these finitely many
$([z'],[\zeta])$ the remaining freedom is the ruling line
$\overline{q_0[z']}\subset\PP^n$ with frozen covector $[0:\zeta]$; the
generic hyperplane $M$ meets each of finitely many fixed lines in
exactly one point, avoiding their pairwise intersections and $q_0$.
One point per dual-side solution; total $\deg Z^\vee$.
\qed
\end{proof}

\begin{remark}
Slot $n-1$ is blind (part (b)); slot $n-2$ is the first informative
multidegree of a cone and records exactly the dual degree of the base.
This is the global avatar of the companions' local invariant --- the
top polar degree of the projectivized tangent cone at the vertex
\emph{is} $\deg Z^\vee$ --- and it is the conversion of that local
invariant into a global multidegree, available only because $F_n$ is
homogeneous, that lets conservation
(Lemma~\ref{lem:conservation}) act on it.  The $n=3$ instance of the
count in part (c) is verified symbolically in
Appendix~\ref{app:verify}(A4).
\end{remark}

\section{Proofs of the main theorems}\label{sec:proofs}

\begin{lemma}[Estimates]\label{lem:estimates}
For $n\ge3$ and $m\ge n$:
\[
  B(m,n)\ \le\ 2^{n-2}\binom{2m-1}{n-1}\ \le\
  \Big(\frac{4em}{n-1}\Big)^{n-1},
  \qquad
  2^{n-2}\binom{m}{n-1}\ \le\ \Big(\frac{2em}{n-1}\Big)^{n-1}.
\]
\end{lemma}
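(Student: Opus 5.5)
The plan is to prove the three inequalities separately, each by an elementary coefficient comparison followed by the standard estimate $\binom{a}{k}\le (ea/k)^k$.

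\emph{First inequality.} Start from the closed form
\[
  B(m,n)=\sum_{i=1}^{n-1}\binom{m}{i}\binom{m-1}{n-1-i}\binom{n-2}{i-1}
\]
of Theorem~\ref{thm:detlemma}(i). Bound the last factor crudely by $\binom{n-2}{i-1}\le 2^{n-2}$, which holds for every $i$. What remains is
\[
  \sum_i\binom{m}{i}\binom{m-1}{n-1-i}=\binom{2m-1}{n-1}
\]
by Vandermonde's identity. Terms with $i$ outside $[1,n-1]$ are nonnegative, so extending the sum only increases it. This gives $B(m,n)\le 2^{n-2}\binom{2m-1}{n-1}$.

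\emph{Second inequality.} Apply $\binom{a}{k}\le (ea/k)^k$ with $a=2m-1<2m$ and $k=n-1$. This yields
\[
  2^{n-2}\binom{2m-1}{n-1}\le 2^{n-2}\Big(\frac{2em}{n-1}\Big)^{n-1}\le\Big(\frac{4em}{n-1}\Big)^{n-1},
\]
since $2^{n-2}\le 2^{n-1}$.

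\emph{Third inequality.} In the symmetric case, apply the same estimate directly:
\[
  \binom{m}{n-1}\le\Big(\frac{em}{n-1}\Big)^{n-1},
\]
and absorb $2^{n-2}\le 2^{n-1}$ in the same way.

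For completeness I would also recall why $\binom{a}{k}\le(ea/k)^k$ holds. First, $\binom{a}{k}\le a^k/k!$. Second, $k!\ge (k/e)^k$, which follows from $e^k=\sum_j k^j/j!\ge k^k/k!$.

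The hypothesis $m\ge n$ (Lemma~\ref{lem:mgen}) is only needed to keep the binomials in their meaningful range. No step is a genuine obstacle. The only point to check carefully is that the extraction from Theorem~\ref{thm:detlemma}(i) really has the index range stated, so that the Vandermonde extension is valid; this is immediate.
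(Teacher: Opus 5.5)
Your proof is correct and follows essentially the same route as the paper. You bound $\binom{n-2}{i-1}\le 2^{n-2}$, recognize the remaining sum as a nonnegative sub-sum of the Vandermonde sum $\binom{2m-1}{n-1}$, and then apply $\binom{N}{k}\le (eN/k)^k$ together with $2m-1\le 2m$ and $2^{n-2}\le 2^{n-1}$, with the symmetric estimate handled the same way.
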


\begin{proof}
$\binom{n-2}{i-1}\le2^{n-2}$, and by Vandermonde
$\sum_{i}\binom{m}{i}\binom{m-1}{n-1-i}=\binom{2m-1}{n-1}$, of which
$B(m,n)/2^{n-2}$ retains a sub-sum of nonnegative terms (the envelope
is verified on a grid in Appendix~\ref{app:verify}(A2)).  Then
$\binom{N}{k}\le(eN/k)^k$ gives
$2^{n-2}\binom{2m-1}{n-1}
\le2^{n-2}\big(\tfrac{e(2m-1)}{n-1}\big)^{n-1}
\le2^{n-1}\big(\tfrac{2em}{n-1}\big)^{n-1}
=\big(\tfrac{4em}{n-1}\big)^{n-1}$, and similarly
$2^{n-2}\binom{m}{n-1}\le2^{n-1}\big(\tfrac{em}{n-1}\big)^{n-1}
=\big(\tfrac{2em}{n-1}\big)^{n-1}$.
\qed
\end{proof}

\begin{proof}[of Theorem~\ref{thm:border}]
Suppose $\dcb(F_n)\le m$.  Run
Lemmas~\ref{lem:normalform}--\ref{lem:specialfiber}: a smooth curve
$C$, the family $\widehat G$ homogenized at degree $m$
(so $m\ge n$ by Lemma~\ref{lem:mgen}), fiberwise identities
$\widehat G_c=\det\widehat A_c$ on $C^\circ$, special fibre
$x_0^{\,m-n}\widetilde F_n$ with multiplicity-one Fermat component.
Build the family $W$ and fix any $c\in U$ ($U$ is infinite, so such
$c$ exists).  Chain the results:
\[
  n(n-1)^{n-2}
  \ \overset{\ref{prop:coneshift}}{=}\
  \pdeg_{n-2}\big(\Con(X)\big)
  \ \overset{\ref{cor:specialgeneric}}{\le}\
  \pdeg_{n-2}\big(\Gamma_c\big)
  \ \overset{\ref{thm:detlemma}\mathrm{(i)}}{\le}\
  B(m,n)
  \ \overset{\ref{lem:estimates}}{\le}\
  \Big(\frac{4em}{n-1}\Big)^{n-1},
\]
the middle application of Theorem~\ref{thm:detlemma}(i) being
verbatim: $\Gamma_c=\Gone(\det\widehat A_c)$ with
$\det\widehat A_c=\widehat G_c\ne0$ (Lemmas~\ref{lem:fiberdet}
and~\ref{lem:nonzero}), and the theorem's only hypothesis is exactly
that.  Now $n(n-1)^{n-2}\ge(n-1)\cdot(n-1)^{n-2}=(n-1)^{n-1}$, so
taking $(n-1)$-st roots of
$(n-1)^{n-1}\le\big(\tfrac{4em}{n-1}\big)^{n-1}$ gives
$n-1\le\tfrac{4em}{n-1}$, i.e.\ $m\ge\tfrac{(n-1)^2}{4e}$.
\qed
\end{proof}

\begin{proof}[of Theorem~\ref{thm:symborder}]
Identical, with $D_m^{\mathrm{sym}}$ in
Lemma~\ref{lem:normalform} (the image of the symmetric matrix space is
likewise irreducible constructible, and curve selection is
representation-agnostic), symmetric fiberwise representations $A_c$,
$\widehat A_c$ symmetric (homogenization preserves symmetry,
Lemma~\ref{lem:fiberdet}), Theorem~\ref{thm:detlemma}(ii) in the
chain, and the second estimate of Lemma~\ref{lem:estimates}:
$(n-1)^{n-1}\le\big(\tfrac{2em}{n-1}\big)^{n-1}$, so
$m\ge\tfrac{(n-1)^2}{2e}$.  Sections~\ref{sec:family}
--\ref{sec:coneshift} are representation-agnostic --- they see only
the polynomial family --- and need no change.  The model is non-vacuous:
$\sdcb(F_n)\le\sdc(F_n)\le2n^2+2n+1$
\cite[Prop.~5.1]{SheshadriSDC}.
\qed
\end{proof}

\begin{remark}[No hidden logarithmic loss; a small gain available]
\label{rem:nolog}
The bounds $(n-1)^2/(4e)$ and $(n-1)^2/(2e)$ are clean for every
$n\ge3$; the $o(1)$ in the leading-constant formulations is the
algebraic identity $(n-1)^2/n^2=1-O(1/n)$.  Refusing the absorptions
$n(n-1)^{n-2}\ge(n-1)^{n-1}$ and $2m-1\le2m$ retains a factor
$n^{1/(n-1)}=e^{\frac{\log n}{n-1}}$, \emph{improving} the constant by
$1+\Theta(\tfrac{\log n}{n})$; the exact integer thresholds
$m^*(n)/n^2$ computed in Appendix~\ref{app:verify}(A3) approach
$1/(4e)=0.09197\ldots$ and $1/(2e)=0.18394\ldots$ from above, e.g.\
$0.09288$ and $0.18501$ at $n=1000$.
\end{remark}

\begin{proof}[of Corollary~\ref{cor:exact}]
$\dc\ge\dcb$ and $\sdc\ge\sdcb$ by definition.  Alternatively and more
directly, Remark~\ref{rem:recover} derives the exact inequalities from
Theorem~\ref{thm:detlemma} and Proposition~\ref{prop:coneshift} alone,
with no specialization machinery.
\qed
\end{proof}

\section{Scope, limitations, and failure modes}\label{sec:scope}

\begin{enumerate}[leftmargin=2em]
\item \emph{$\Gone$ throughout.}  Every fiberwise object is the
multiplicity-one Gauss-graph cycle, never the full reduced conormal of
the fibre.  Theorem~\ref{thm:detlemma} bounds exactly $\Gone$; whether
the reduced conormal of a non-squarefree determinant obeys the same
bound is open (Remark~\ref{rem:gammared}) and is not needed.
\item \emph{Why the Fermat conormal survives.}  Smooth affine zeros of
$F_n$ persist as zeros of $G_c$ with nonvanishing differential
(Hurwitz plus uniform $C^1$ convergence), and a nonvanishing
differential forces membership in multiplicity-one components; so the
approximating points are honest open-graph points, and properness
pushes their limits into $W_0$.  The padding $x_0^{\,m-n}$ is
differentially dead along the Fermat component
(Lemma~\ref{lem:covector}) and contributes only effective junk
(Remark~\ref{rem:junk}).
\item \emph{No genericity, and no compatibility, of the
representations.}  The $A_c$ are arbitrary, and no relation among them
across different $c$ is assumed; the only object with structure along
the parameter is the polynomial family $G$.  The only generic choices
in the paper are flags (Lemma~\ref{lem:kleiman}, used in
Theorem~\ref{thm:detlemma} and Proposition~\ref{prop:coneshift}) and
the constant matrix $\Lambda$ of Step~3, all chosen \emph{after} the
data.
\item \emph{Repeated factors are allowed everywhere.}
Multiplicity-$\ge2$ components of any fibre, including a hyperplane
at infinity of multiplicity $m-\deg_xG_c$, are invisible to $\Gone$
on both sides of the chain (Remarks~\ref{rem:excess},
\ref{rem:degreem}, \ref{rem:junk}); a multiplicity-one hyperplane at
infinity has $\pdeg_{n-2}=0$ and is harmless.
\item \emph{Transversality and flatness inventory.}  Kleiman
transversality: Sections~\ref{sec:detlemma}
and~\ref{sec:coneshift}.  Fulton positivity / localized Chern
classes: Lemma~\ref{lem:positivity}.  Curve selection:
Lemma~\ref{lem:normalform}.  Smoothness of the open graph over the
curve (Jacobian criterion) and reducedness of the generic fibre
cycles: Lemmas~\ref{lem:smoothgraph} and~\ref{lem:genericfibers}.
Torsion-freeness over a DVR and flat-fibre dimension theory:
Lemmas~\ref{lem:family} and~\ref{lem:genericfibers}.  Specialization
of cycle classes: Lemma~\ref{lem:conservation}.  Hurwitz:
Lemma~\ref{lem:hurwitz}.
\item \emph{Homogeneity is essential.}  The cone-shift identity
consumes homogeneity of the target: it converts the vertex-local
invariant into the global multidegree that specialization conserves.
For an inhomogeneous target no such bridge is provided, and the method
as written says nothing.
\item \emph{No permanent claim.}  For the permanent the special-side
input would be a lower bound on the relevant conormal multidegree of
the (singular) permanent hypersurface, which we do not know; the
transfer machinery of Sections~\ref{sec:normalform}
--\ref{sec:capture} would apply once such a bound exists, since
nothing there is specific to $F_n$ except homogeneity and
Lemma~\ref{lem:covector}'s use of the explicit gradient (which
generalizes to any reduced homogeneous target along its smooth affine
locus).
\item \emph{Base changes and shrinkings.}  One normalization
(Lemma~\ref{lem:normalform}) and finitely many enlargements of the
excluded set $\Sigma$ (Lemmas~\ref{lem:normalform},
\ref{lem:nonzero}, \ref{lem:family}, \ref{lem:genericfibers}).  None
alters the hypothesis $F_n\in\overline{D_m}$ or the parameter $m$; the
final chain needs a single $c\in U$, and $U$ is infinite.
\item \emph{Characteristic zero.}  Used in Kleiman transversality,
reflexivity (Lemma~\ref{lem:dualdeg}), the symmetric quadratic-form
argument in Step~4 of Section~\ref{subsec:symproof}, and the analytic
Hurwitz argument (replaceable by a Newton--Puiseux argument over
$\CC[[t]]$, which we do not pursue).  The statement itself fails as a
uniform theorem in positive characteristic: for
$\operatorname{char}=p>2$ and $n=p^r$ the Frobenius identity makes
$F_n$ a perfect power of a linear form, with $\sdc(F_n)\le n$
\cite[Rem.~6.1]{SheshadriSDC}.
\end{enumerate}

\paragraph{Open questions.}
(1)~Improve the constants toward the upper bounds; the B\'ezout count
is an overcount and the limit cycle $[W_0]$ carries identifiable junk
(Remark~\ref{rem:junk}) whose multidegrees could in principle be
subtracted.  (2)~Bound the relevant conormal multidegree of the
permanent hypersurface from below, activating item~7.
(3)~Settle Remark~\ref{rem:gammared}.  (4)~Is $\dcb(F_n)<\dc(F_n)$
for some $n$?  Both are now pinned to $\Theta(n^2)$ with the same
lower-bound constant, and the question of a genuine border gap for
this family becomes meaningful.

\appendix

\section{Consistency checks and symbolic verification}
\label{app:verify}

All computations were performed in exact arithmetic (Python integers;
\texttt{sympy} for polynomial expansion and resultants); the script is
available from the author on request.  We record the actual outputs.

\paragraph{(A1) Coefficient identities.}
The closed forms of Theorem~\ref{thm:detlemma} were checked against
direct bracket extraction from the expanded products:
\[
  \big[x^nu^{m-1}v^{m-1}\big]\,x(x+u)^m(x+v)^{m-1}(u+v)^{n-2}
  =\sum_{i=1}^{n-1}\binom{m}{i}\binom{m-1}{n-1-i}\binom{n-2}{i-1},
\]
and $[x^nu^{m-1}]\,x(x+u)^m(2u)^{n-2}=2^{n-2}\binom{m}{n-1}$, for all
$2\le n\le6$ and $n\le m\le10$: exact agreement in every case.
Boundary checks: $B(m,2)=m$ for $1\le m\le11$ (slot $\pdeg_0$ is the
degree of the multiplicity-one part, at most $m$); $B(1,n)=0$ for
$3\le n\le11$ (a hyperplane's conormal has a point as dual factor);
$B(m,3)=3\binom{m}{2}$ for $3\le m\le29$, matching the $N=3$
specialization of the companion bound under the reindexing of
Remark~\ref{rem:recover}.

\paragraph{(A2) The envelope.}
$B(m,n)\le2^{n-2}\binom{2m-1}{n-1}$ verified on the grid
$3\le n<40$, $n\le m<3n$: holds at every grid point.

\paragraph{(A3) Asymptotic thresholds.}
With $m^*(n):=\min\{m:B(m,n)\ge n(n-1)^{n-2}\}$ and
$m^*_{\mathrm{sym}}(n)$ defined analogously with
$2^{n-2}\binom{m}{n-1}$, exact integer binary search gives:
\[
\begin{array}{c|ccccc}
  n & 50 & 100 & 200 & 500 & 1000\\\hline
  m^*(n)/n^2 & 0.10520 & 0.09910 & 0.09580 & 0.09366 & 0.09288\\
  m^*_{\mathrm{sym}}(n)/n^2 & 0.20000 & 0.19250 & 0.18855 & 0.18594
  & 0.18501
\end{array}
\]
decreasing toward the targets $1/(4e)=0.09197\ldots$ and
$1/(2e)=0.18394\ldots$, consistently with Remark~\ref{rem:nolog}
(approach from above, excess of order $\log n/n$).

\paragraph{(A4) The $n=3$ cone-shift count.}
For the smooth plane Fermat cubic $q=x^3+y^3+z^3$ and the fixed
generic covector $p=(3,5,7)$ (chosen with $|p_0/p_1|\ne1$ so that all
tangency points are affine in the chart $z=1$), the resultant
$\mathrm{Res}_y\big(q|_{z=1},\,p\cdot\nabla q|_{z=1}\big)$ has degree
$6$ and is squarefree: the class of the curve is $6=3\cdot2
=n(n-1)^{n-2}$ at $n=3$, which by
Proposition~\ref{prop:coneshift}(c) is
$\pdeg_1$ of the conormal of the Fermat-cone surface in $\PP^3$.

\paragraph{(A5) Conservation bookkeeping on a classical degeneration.}
For the nodal cubic $y^2z=x^2(x+z)$ with the same covector, the
analogous resultant has degree $6$; the root corresponding to the node
appears with multiplicity exactly $2$, and the residual quartic is
squarefree with $4$ simple roots.  Thus the class drops from $6$
(smooth cubics) to $4$ at the nodal member, with the deficit $2$
absorbed as effective excess supported over the node:
$6=4+2$.  This is the classical, hand-checkable instance of the
inequality direction in Lemma~\ref{lem:conservation} and
Corollary~\ref{cor:specialgeneric}: the special reduced conormal's
multidegree is \emph{at most} the conserved total, never more.

\section{The human--AI methodology}\label{app:methodology}

We describe the process honestly, both because it produced the result
and because it is itself of methodological interest.  The protocol is
that of the companions \cite{SheshadriDC,SheshadriSDC}: large language
models in three separated roles --- generator, adversarial critic,
verifier --- drawn from different model families so that a proposal
produced by one is attacked by another, with convergence under the
adversarial protocol treated only as a signal to attempt a
human-checkable proof.  The author --- an AI researcher with a PhD in
computer science, not a specialist in algebraic complexity or
algebraic geometry --- acted as orchestrator and domain arbiter; the
role separation and verification-in-the-loop are designed for exactly
that asymmetry.

\paragraph{Trajectory.}
The campaign opened from the companions' flagged open question
(border transfer) under a kill-log protocol: every candidate transfer
mechanism had to survive an explicit list of attack vectors with
recorded verdicts (Appendix~\ref{app:prompts}, C.1).  A literature
pass established the precedent landscape --- the dual-\emph{dimension}
border bound of \cite{LMR13} and the closed-condition unification of
\cite{Grochow15} --- and confirmed that the classical
Lagrangian-specialization machinery \cite{LeTeissier88,FKM83,Kraemer21}
addresses families of reduced varieties, not the non-reduced
determinantal limits arising here.  The critic's first decisive
contribution was the direction-of-semicontinuity split of
Section~\ref{subsec:semicont}: it exhibited the smoothing family
$F_n+t\,q$ to kill any transfer of vertex-local invariants, while
conceding that \emph{global} conormal multidegrees specialize
favorably; the cone-shift identity
(Proposition~\ref{prop:coneshift}) was then identified as the bridge
that homogeneity makes available.  A dedicated round forced the
determinantal lemma into its unconditional form: the generator's first
attempt hypothesized corank one along components, and the critic
refuted the hypothesis with Example~\ref{ex:tangency}, after which the
multiplicity dichotomy (Lemma~\ref{lem:dichotomy}) and the
generic-$\Lambda$ reduction (with the warning of
Remark~\ref{rem:norow}) were settled; the assembled border proof was
then produced against a mandated nine-point structure
(Appendix~\ref{app:prompts}, C.3).  A referee-simulation round caught
a genuine false identity in the assembled draft --- the determinant of
the homogenized matrix had been identified with the degree-$d$ rather
than the degree-$m$ homogenization (Remark~\ref{rem:degreem};
Appendix~\ref{app:referee}) --- along with an unsound
Laurent-truncation step in the normal form and imprecise flatness
wording; the repairs are the degree-$m$ homogenization, the fiberwise
normal form of Lemma~\ref{lem:normalform}, and the
dominating-components construction of Lemma~\ref{lem:family}.  We
record explicitly that the false identity was found by the critic, not
by the generator that produced it: this is the class of error the
protocol exists to catch, and the reason convergence of a single
model's output is never treated as evidence.  A final
structural-review round tightened the paper to its present form: the
nonsymmetric local normal form was made fully explicit
(Step~5 of Section~\ref{subsec:nonsymproof}), the symmetric proof was
written out in full rather than by analogy, fibrewise reducedness was
reproved directly from smoothness of the open graph over the parameter
curve (Lemma~\ref{lem:smoothgraph}), removing a generic-reducedness
citation and its fallback, and the provenance material was
consolidated into this appendix.

\paragraph{Reproducibility.}
The load-bearing prompts are reproduced, ASCII-normalized and trimmed
for length, in Appendix~\ref{app:prompts}; the symbolic and
exact-integer consistency checks, with the script's actual outputs,
are in Appendix~\ref{app:verify}.  Full prompt logs and scripts are
available from the author on request.  As with the companions, prompt
phrasing was a minor factor; the effective levers were target
selection, role separation across model families, and verification in
the loop.

\section{The load-bearing prompts}\label{app:prompts}

The prompts below are reproduced ASCII-normalized and trimmed for
length (elisions marked \texttt{[...]}); complete logs are available
from the author on request.

\subsection*{C.1\quad The kill-log campaign prompt (border transfer)}

\begin{verbatim}
You are working on a real open problem in algebraic complexity.
GIVEN (treat as proved for this session): for any size-m affine
determinantal representation of a homogeneous f with V(f) smooth,
delta_top(V(f)) = d(d-1)^{N-2} <= B_N(m), yielding
dc(sum x_i^n) >= (1/4e - o(1)) n^2 exactly. Both writeups explicitly
DECLINE the border claim because polar degree is not a closed
condition.

TARGET: determine whether the bound transfers to BORDER determinantal
complexity dc-bar, where f = lim det A_t. Do not assume it does.
Identify the precise semicontinuity statement needed, its direction,
and whether it is true.

Maintain a KILL LOG. Attack vectors, each with verdict
SURVIVES / KILLED / UNRESOLVED and one-line justification:
K1 direction of semicontinuity of polar-type invariants in families;
K2 the limit divisor is non-reduced (padding x_0 powers) -- does any
   conormal object survive multiplicity;
K3 the generic fibre may be singular/reducible -- the smooth-X lemma
   does not apply to it;
K4 local invariants at the vertex jump UP under smoothing
   (exhibit a family); [...]
K8 base changes / finite shrinkings of the parameter line.
Round structure: generator proposes; critic attacks each K; verifier
reduces anything checkable to a computation. No verdict by fiat.
\end{verbatim}

\subsection*{C.2\quad The standalone determinantal lemma prompt}

\begin{verbatim}
Let A(x) be an ARBITRARY m x m affine-linear matrix in n variables
over C, Ahat its homogenization, F = det Ahat, assumed nonzero. Define
Gamma_1(F) as the sum of the conormal varieties of the
multiplicity-ONE components of V(F), each with coefficient 1.

Prove, independently of any border or degeneration argument:
   delta_{n-2}(Gamma_1(F)) <= [x^n u^{m-1} v^{m-1}]
        x (x+u)^m (x+v)^{m-1} (u+v)^{n-2}.
State every hypothesis. Determine whether multiplicity one of a
component is equivalent to generic corank one of Ahat along it -- if
not, give a counterexample and identify the correct criterion. Do NOT
assume the corank >= 2 locus has codimension >= 2; exhibit a
representation where it is a divisor. For the corank-1 part prove:
(1) generic-flag count points exist on each multiplicity-one
    component, in the good locus;
(2) each count point lifts uniquely to the kernel incidence, with the
    conormal identity via the adjugate;
(3) the left/right kernel equations are 2m conditions cutting
    codimension 2m-1 -- locate and remove the redundancy SOUNDLY
    (deleting a row is not sound; show why);
(4) each lift is an ISOLATED, reduced point of the full square
    multihomogeneous system -- local normal form required;
(5) isolated solutions are bounded by the stated coefficient via a
    positivity statement with hypotheses (cite chapter and section).
Then compute the coefficient in closed form and sanity-check it at
(m,2), (1,n), (m,3). Do not discuss border complexity anywhere. [...]
\end{verbatim}

\subsection*{C.3\quad The border assembly prompt (nine-point
structure)}

\begin{verbatim}
We now have the standalone determinantal conormal lemma; use it as a
BLACK BOX. Prove the full border lower bound
dc-bar(sum x_i^n) >= (1/4e - o(1)) n^2 to journal-referee rigor.
The writeup must address, in order:
1 border normal form (curve selection; state exactly what family
  structure is and is not constructed);
2 homogenization and degree bookkeeping; multiplicity of the Fermat
  component in the special fibre;
3 the family of multiplicity-one Gauss graphs; flatness; which
  components are kept;
4 the containment Con(Fermat cone) <= limit cycle, with the limit
  covector computed and a persistence argument placing approximating
  points on multiplicity-ONE branches (this is the crux -- uniqueness
  of a limit is not membership);
5 special <= generic via effectivity;
6 the cone-shift identity, stated and proved for a general smooth
  base, then specialized;
7 application of the black box to the generic fibre -- verify the
  hypothesis verbatim;
8 the arithmetic, with explicit constants and no hidden log losses;
9 failure modes and scope: Gamma_1 vs reduced conormal, genericity
  assumptions (there must be none on the representations), repeated
  factors, base changes, characteristic, what is NOT claimed
  (permanent, inhomogeneous targets).
Each transversality / flatness / positivity / persistence input must
be stated as a lemma with hypotheses at the point of use. [...]
\end{verbatim}

\subsection*{C.4\quad Excerpt of the referee-simulation critique
(the homogenization-degree error)}\label{app:referee}

\begin{verbatim}
Main problem: Section 7 has a degree-homogenization error. You define
Ghat_t = x_0^d G_t(x/x_0) with d = deg_x G_t generically. But if
G_t = det A_t with A_t affine-linear m x m, the determinant of the
homogenized matrix is det Ahat_t = x_0^m det A_t(x/x_0)
= x_0^{m-d} Ghat_t. So the sentence "Ghat_{t0} = det Ahat_{t0}" is
FALSE unless d = m. The lemma applies to F_t := x_0^{m-d} Ghat_t, not
to Ghat_t. This is fixable but must be fixed explicitly: the
multiplicity-one components of F_t are those of Ghat_t, plus possibly
the hyperplane x_0 = 0 when m - d = 1; that hyperplane's conormal is
{x_0=0} x {[1:0:...:0]}, with delta_{n-2} = 0 for n >= 3. [...]
Second issue: the Laurent-truncation step in the normal form is not a
safe argument as written; truncating a representation destroys the
exact determinant identity. You only need the polynomial family to be
regular and the representations to exist FIBERWISE on a punctured
dense open set. [...] Third issue: the flatness claim needs the
dominating-components construction stated cleanly: finitely many
components of the incidence; delete the point-images; take W reduced;
every component dominates the smooth curve; torsion-free over a DVR
implies flat.
\end{verbatim}

\end{document}